\def\|#1|{\textit{#1}}
\def\<#1>{\texttt{#1}}
\newcommand{\pre}{\text{pre}}
\newcommand{\stored}{\text{stored}}
\newcommand{\suff}{\text{suff}}
\newcommand{\row}{\text{row}}
\newcommand{\PP}{\text{ProperPrefix}}
\newcommand{\obstable}{(S_{\pre}, S_{\suff}, S_{\stored}, S_{\deadpref} , T)}
\newcommand{\A}{\mathcal{A}}
\newcommand{\member}{\textbf{MEMBER}\xspace}
\newcommand{\livepref}{\textbf{LIVE-PREF}\xspace}
\newcommand{\deadpref}{\textbf{DEAD-PREF}\xspace}
\newcommand{\y}{\textbf{YES}\xspace}
\newcommand{\n}{\textbf{NO}\xspace}
\newcommand{\prefix}{\text{Prefix}}
\newcommand{\positive}{\textbf{POSITIVE}\xspace}
\newcommand{\negative}{\textbf{NEGATIVE}\xspace}
\newcommand{\extend}{\textbf{EXTEND}\xspace}
\newcommand{\restart}{\textbf{RESTART}\xspace}
\newcommand{\algorithmicbreak}{\textbf{break}}
\newcommand{\ner}{\mathcal{N}}
\newcommand{\cl}{\mathcal{C}}
\newcommand{\arith}{\texttt{arith}\xspace}
\newcommand{\json}{\texttt{json}\xspace}
\newcommand{\dyck}{\texttt{dyck}\xspace}
\newcommand{\dat}{\texttt{date}\xspace}
\renewcommand{\algorithmiccomment}[1]{\hfill$\triangleright$ {#1}}
\newcolumntype{P}{>{\raggedleft\arraybackslash}p{(\textwidth-1in)/5}}
\newcolumntype{Q}{>{\raggedleft\arraybackslash}p{(\textwidth)/4}}
\begin{document}
\newtheorem{assumption}[theorem]{Assumption}
\newcounter{example}[section]
\newtheorem{property}[theorem]{Property}

\title{
Example-Free Learning of Regular Languages with Prefix Queries}
\author{Eve Fernando}
\affiliation{\institution{University of Sydney}
  \country{Australia}
}
\author{Sasha Rubin}
\affiliation{\institution{University of Sydney}
  \country{Australia}
}
\author{Rahul Gopinath}
\affiliation{\institution{University of Sydney}
  \country{Australia}
}

\renewcommand{\shortauthors}{et al.}

\begin{abstract}
Language learning refers to the problem of inferring a mathematical model which accurately represents a formal language. Many language learning algorithms learn by asking certain types of queries about the language being modelled. Language learning is of practical interest in the field of cybersecurity, where it is used to model the language accepted by a program's input parser (also known as its input processor). In this setting, a learner can only query a string of its choice by executing the parser on it, which limits the language learning algorithms that can be used. Most practical parsers can indicate not only whether the string is valid or not, but also \emph{where} the parsing failed. This extra information can be leveraged into producing a type of query we call the \emph{prefix query}. Notably, no existing language learning algorithms make use of prefix queries, though some ask \emph{membership} queries i.e., they ask whether or not a given string is valid. When these approaches are used to learn the language of a parser, the prefix information provided by the parser remains unused. 

In this work, we present PL*, the first known language learning algorithm to make use of the prefix query, and a novel modification of the classical L* algorithm proposed by \citet{RN6}. We show both theoretically and empirically that PL* is able to learn more efficiently than L* due to its ability to exploit the additional information given by prefix queries over membership queries. Furthermore, we show how PL* can be used to learn the language of a parser, by adapting it to a more practical setting in which prefix queries are the only source of information available to it; that is, it does not have access to any labelled examples or any other types of queries. We demonstrate empirically that, even in this more constrained setting, PL* is still capable of accurately learning a range of languages of practical interest.
\end{abstract}

\maketitle

\section{Introduction}
\label{sec:intro}

Language learning refers to the task of automatically inferring a mathematical model, typically an automaton or grammar, which accurately represents a formal language. Learning formal languages, namely sets of strings, is of theoretical interest in computational learning theory, as well as of practical interest in the field of cybersecurity, where it is used to model the \emph{input language} of a program. A program's \emph{input language} refers to the set of all valid inputs to the program; in other words, the set of all strings accepted by the program's \emph{input parser} (also known as its \emph{input processor}). A formal model of the program's input language (synthesised using language learning techniques) can be used in a number of ways, for example to analyse and validate the input parser itself \citep{RN88}, or to automatically generate test cases for the program \citep{RN86, RN87}.

To help a language learning algorithm construct a model of the program's input language, it is provided with some information about this language. If the learning algorithm is given a set of labelled examples, it is said to be a \emph{passive learner}, or an algorithm which \emph{learns from examples}. Conversely, if it is permitted to directly ask queries from the input parser, and furthermore to adapt its learning (i.e., choose which queries to ask) depending on answers to previous queries, it is said to be an \emph{active learner}.

Let us provide a simple example of this. Suppose we wish to model the language of a program's input parser which claims to parse JSON strings (JSON is a text-based format primarily used for transmitting structured data across a network). Assume we only have \emph{blackbox} access to this parser; that is, we do not have access to its source code (for example, we might only have access to the binary file). We know the parser implements some variant of the official JSON specification \citep{RN85}, but we do not know exactly how it deviates from this specification. So, we decide to use language learning to model the language of the JSON parser; that is, we wish to construct an automaton or grammar which accepts strings the parser successfully parses, and rejects strings the parser fails to parse. By analysing this model, we should be able to gain insight into possible discrepancies between the parser and the official JSON language, to the extent that the model is an accurate reflection of the parser. Let's suppose the learning algorithm we have chosen to use is a passive learner. In that case, we might provide it with some example strings, with each string labelled as either valid (i.e., would be successfully parsed by the parser) or invalid. These labelled examples may be sourced from the program's documentation, for example. Alternatively, if the learning algorithm is an active learner, we may allow it to execute the parser on any string of its choice. See Figure~\ref{fig:json-example} for a schematic of this.

\begin{figure*}[t]
\centering
\includegraphics[width=\textwidth]{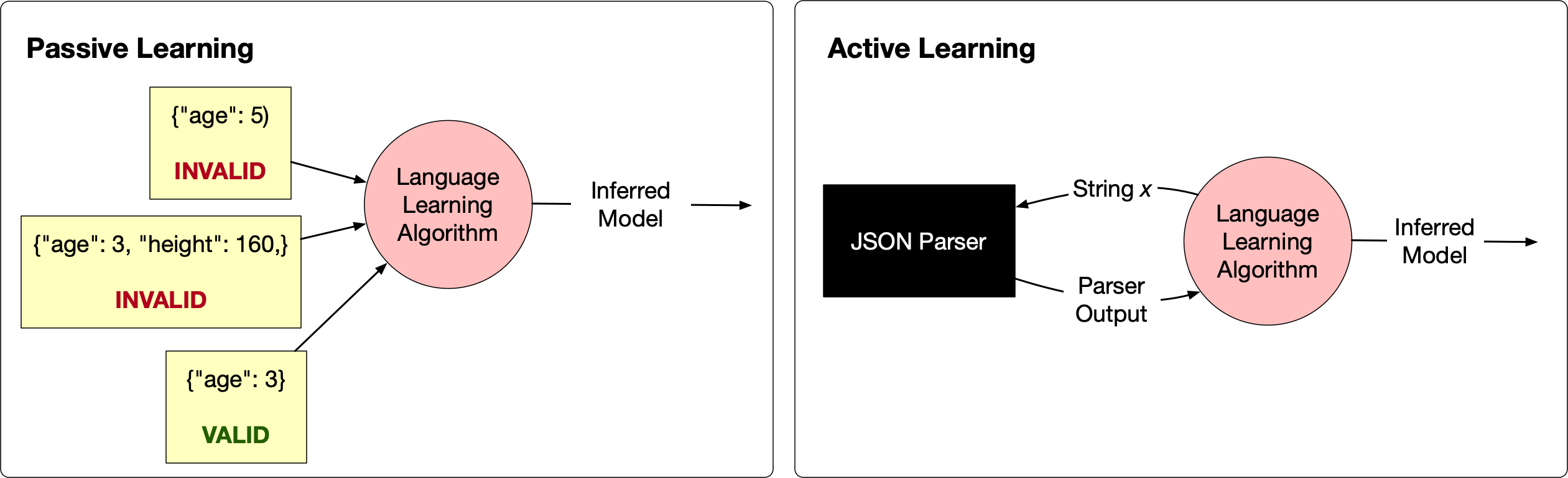}
\caption{Language learning for modelling a JSON parser}\label{fig:json-example}
\end{figure*}

In practice, many input parsers indicate not only whether an input is valid or not, but also whether it is a \emph{prefix} of some valid input or not. Hence, when a learning algorithm executes an input parser on a given string, we can say that it is asking a \emph{prefix query}. Notably, no existing language learning algorithms make use of prefix queries, though some make use of \emph{membership queries} which merely indicate whether or not an input is valid. Hence, when an algorithm that relies on membership queries is used to learn the language of a parser, the prefix information naturally provided by the parser is ignored. Thus, in this work, we seek to investigate how the information provided by prefix queries can be exploited in the field of language learning.

\noindent\textbf{Contributions.}
In this work, we present PL*, the first known language learning algorithm to make use of the prefix query and a novel modification of the classical L* algorithm. Specifically, whilst L* uses equivalence queries and membership queries, PL* uses equivalence queries and \emph{prefix} queries. We prove that PL* always outperforms L* in terms of the number of queries asked, when both algorithms are tasked with learning the same non-dense language (\emph{non-dense} is defined in Definition~\ref{definition-dense}; note that most practical languages are non-dense). Furthermore, we show empirically that PL* significantly outperforms L* both in terms of runtime and space usage. These results demonstrate that exploiting prefix queries can increase efficiency. We then show how PL* can be applied to the problem of learning the language of a parser, namely by adapting it to a more practical setting in which only prefix queries can be asked (and equivalence queries cannot be). Notably, in this setting, PL* requires only blackbox access to a parser (so that it can ask prefix queries); it does not require labelled examples or any other information about the language of the parser. We show that, even in this new, more constrained setting, PL* is still capable of efficiently and accurately learning a range of languages of practical interest.

The remaining paper is as follows:
In \Cref{chap:background}, we introduce some relevant terminology and preliminary notions. We begin by providing some useful definitions relating to sets and formal languages, before discussing finite automata and grammars. We then present a more detailed definition of language learning, before describing the main learning models used by the algorithms in this field. Finally, we conclude with a more formal definition of the prefix query, followed by a discussion of how it naturally arises from parsers.

In \Cref{chap:problem-statement}, we provide a clear statement of the problem we are aiming to address in this work.

In \Cref{chap:related-work}, we describe the key developments in the field of language learning, noting that no existing language learning algorithm makes use of the prefix query. This aptly sets the scene for \Cref{chap:method}, in which we introduce PL*, the first known language learning algorithm to make use of the prefix query and a novel modification of the L* algorithm proposed by \citet{RN6}. After providing a detailed description of the PL* algorithm, a proof of its correctness and termination, and a theoretical analysis of the improvements it gives over L*, we describe how it can be applied to the problem of learning the language of a parser.

In \Cref{chap:empirical-evaluation}, we empirically demonstrate the significant improvements given by PL* over L*, both in terms of runtime and space usage. We then proceed show how, when PL* is applied to the problem of modelling a parser, namely by adapting it to a more practical setting in which equivalence queries cannot be answered, it still is able to efficiently and accurately learn various languages of practical interest. 

In \Cref{chap:discussion}, we close this work with a final discussion of our results thus far and potential directions for future work.

\section{Background}
\label{chap:background}

In this section, we introduce some useful terminology and notation.

\subsection{Sets, Alphabets and Languages}\label{set-alphabet-language}

Let $A, B$ be arbitrary sets. Then, let $A \oplus B$ denote the \emph{symmetric difference} of $A$ and $B$; that is,

$$A \oplus B = (A \cup B) \setminus (A \cap B)$$

If $A$ is an arbitrary set, then $\mathcal{P}(A)$, referred to as the \emph{power set} of $A$, denotes the set of all subsets of $A$ (including $\varnothing$ and $A$ itself).

An \emph{alphabet} is a non-empty finite set whose elements are referred to as \emph{symbols}. A \emph{string} $s$ over alphabet $A$ refers to a finite sequence of symbols from $A$, where the number of symbols in a string, denoted by $|s|$, is referred to as its \emph{length}. For example, $0011$ is a string of length 4 over the alphabet $\{0, 1\}$. The set of all strings over the alphabet $A$ is denoted by $A^*$, and the set of all strings over $A$ of non-zero length is denoted by $A^+$. A \emph{language} $L$ refers to a (possibly-empty) subset of $A^*$. For example, $\{01, 00\}$ is a language over the alphabet $\{0, 1\}$.

\begin{definition}[Proper Prefix Language]\label{definition-pp}
    For some $L \subseteq A^*$, let $\PP(L)$, referred to as the \textup{proper prefix language} of $L$, be defined as follows:

    $$\PP(L) = \{x \in A^* \mid x \not\in L \text{ and } xy \in L \text{ for some $y \in A^+$}\}$$
\end{definition}

Intuitively, $\PP(L)$ is the set of all strings that are not in $L$, but can be extended to form a string in $L$.

Using this notation, we can now define several properties of strings in $A^*$ with respect to $L$. 

\begin{definition}[String Properties]\label{definition-string-prop}
    For arbitrary $x \in A^*$, 

    \begin{itemize}
        \item if $x \in L$, then $x$ is referred to as a \textup{member} of $L$;
        \item if $x \not\in L$ but $x \in \PP(L)$, then $x$ is a \textup{live prefix} of $L$; and
        \item if $x \not\in L$ \emph{and} $x \not\in \PP(L)$, then $x$ is called a \textup{dead prefix} of $L$.
    \end{itemize} 
\end{definition}

\begin{definition}[Prefix Language]\label{definition-pref-lang}
    For some $L \subseteq A^*$, let $\prefix(L)$, referred to as the \textup{prefix language of $L$}, be the set of all prefixes of strings in $L$; that is,

    $$\prefix(L) = L \cup \PP(L)$$
\end{definition}

\begin{definition}[Extension of a String]\label{definition-ext}
    Let $x \in A^*$ be arbitrary. If $y = xz$, for $z \in A^*$, then $y$ is said to be an \textup{extension} of $x$. If $z \neq \varepsilon$, then $y$ is said to be a \textup{proper extension} of $x$.
\end{definition}

\begin{definition}[Dense Language]\label{definition-dense}
    Let a \textup{dense language} be a language $L \subseteq A^*$ where, for all $x \not\in L$, $x \in \PP(L)$. All regular languages over $A^*$ which are \emph{not} dense will be referred to as \textup{non-dense languages}.
\end{definition}

Intuitively, a dense language $L \subseteq A^*$ is a language satisfying the following: every string not in $L$ can be extended to a string in $L$. Some examples of dense languages are the set of all strings of even length, and the set of all strings of odd length.

Many languages of practical interest are non-dense. For example, all the languages we will consider in \Cref{chap:empirical-evaluation} (Empirical Evaluation) are non-dense.

\subsection{Finite Automata}\label{finite-automata}

A \emph{finite automaton} can be thought of as a computer program which takes in an input string, processes it letter-by-letter (using variables with finite domains), and, after consuming it fully, either accepts or rejects it. There are several classes of finite automata, with the two most widely-used being the \emph{deterministic finite automata} and \emph{nondeterministic finite automata}.

\subsubsection{Deterministic Finite Automata}\label{finite-automata-dfa}

\begin{definition}[Deterministic Finite Automaton]\label{definition-dfa}
    A \textup{deterministic finite automaton} (DFA) $M$ is a 5-tuple $(Q, A, \delta, q_0, F)$, where $Q$ is a finite set of \textup{states}, $A$ is the alphabet, $\delta: Q \times A \rightarrow Q$ is the \textup{transition function}, $q_0 \in Q$ is the \textup{initial state}, and $F \subseteq Q$ is the set of \textup{final states}.
\end{definition}

Intuitively, when $M$ is in some state $q \in Q$ and reads some $a \in A$, it transitions to state $\delta(q, a)$.

For arbitrary $q \in Q$ and $w \in A^*$, let $\delta^*(q, w)$ denote the state which $M$ ends up in if it begins from state $q$ and reads the string $w$.

Suppose that $M$ reads in some $x \in A^*$. Beginning from $q_0$, $M$ moves through a sequence of states as it reads $x$ symbol-by-symbol, as determined by $\delta$. This sequence of states is referred to as a \emph{run of $M$ on $x$}, and can be denoted as $q_0 \xrightarrow{x_1} q_1 \xrightarrow{x_2} q_2 \xrightarrow{x_3} \dots \xrightarrow{x_n} q_n$, where $x = x_1x_2x_3 \dots x_n$ and each $x_i \in A$. If $q_n \in F$, we say that the run is \emph{accepting}. Due to a DFA's deterministic nature, for each $x \in A^*$, there is exactly \emph{one} run of $M$ on $x$. If the run of $M$ on $x$ is accepting, then we say that $M$ \emph{accepts} $x$. Equivalently, if $\delta^*(q_0, x) \in F$, we once again say that $M$ \emph{accepts} $x$. The set $L(M)$, called the \emph{language of $M$}, denotes the set of all strings accepted by $M$ i.e., $L(M) = \{x \in A^* \mid \delta^*(q_0, x) \in F\}$. This can be generalised to the notion of a \emph{state language} i.e., for any $q \in Q$, the \emph{state language of $q$}, denoted by $L(M; q)$, is defined as follows:

$$L(M; q) = \{w \in A^* \mid \delta^*(q, w) \in F\}$$

A language is called \emph{regular} if some DFA recognises it.

\begin{example}
    Figure~\ref{fig:dfa-example} is a simple example of a DFA $M = (Q, A, \delta, q_0, F)$, with $Q = \{q_0, q_1\}$, $A = \{0, 1\}$, start state $q_0$, $F = \{q_1\}$, and $\delta$ as defined in Table~\ref{tab:dfa-example-delta}.
        
    \begin{table}[ht]
    \begin{center}
    \begin{tabular}{  | c | c | c |  } 
        \hline
        $\delta$ & $0$ & $1$ \\
        \hline
        $q_0$ & $q_1$ & $q_0$ \\
        \hline
        $q_1$ & $q_1$ & $q_0$ \\
        \hline
    \end{tabular}
    \caption{\label{tab:dfa-example-delta}Transition table for example DFA $M$}
    \end{center}
    \end{table}
    
    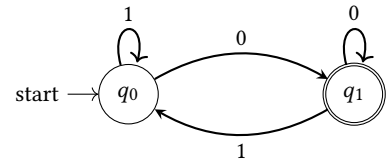
\begin{figure}[!htb]
    \centering
    \begin{tikzpicture} [node distance = 3cm, on grid, auto]
        \node (q0) [state, initial] {$q_0$};
        \node (q1) [state, right = of q0, double] {$q_1$};
        
        \path [-stealth, thick]
        (q0) edge [loop above]  node {$1$} ()
        (q0) edge [bend left] node {$0$} (q1)
        (q1) edge [loop above]  node {$0$} ()
        (q1) edge [bend left] node {$1$} (q0)
        ;
    \end{tikzpicture}
    \caption{A simple DFA example} \label{fig:dfa-example}
    \end{figure}
    
    If $M$ takes in the string $001$ as input, then, beginning from $q_0$, it transitions to $q_1$ upon reading the first $0$. Upon reading the second $0$, $M$ remains in $q_1$. Finally, upon reading the $1$, $M$ transitions back to $q_0$, which is not a final state. Thus, $001$ is rejected by $M$ i.e., $001 \not\in L(M)$.
    
    What if $M$ took in the string $010$ as input? Then, beginning from $q_0$, it would first transition to $q_1$ upon reading $0$. Upon reading the $1$, $M$ would then transition back to $q_0$. Finally, upon reading the second $0$, $M$ would transition to $q_1$ again, which \emph{is} a final state. Hence, $010$ is accepted by $M$ i.e., $010 \in L(M)$.
    
    After some thought, we can infer that $M$ accepts all strings ending with the character $0$.
\end{example}

\subsubsection{Minimal Deterministic Finite Automata}\label{finite-automata-minimal-dfa}

Let $L \subseteq A^*$ be an arbitrary regular language. Let $x, y \in A^*$ also be arbitrary. A string $z \in A^*$ is said to be a \emph{distinguishing string} of $x$ and $y$ if exactly one of $xz$ and $yz$ is in $L$. 

\begin{definition}[The Nerode Congruence]\label{definition-nerode-congruence}
    Let $\equiv_L^\ner$ denote the \textup{Nerode congruence} for $L$, where $x \equiv_L^\ner y$ if and only if there is no distinguishing string for $x$ and $y$.
\end{definition}

The congruence $\equiv_L^\ner$ is an equivalence relation on strings in $A^*$ -- and so, for arbitrary $x \in A^*$, we can use $[x]$ to denote the equivalence class of $x$ with respect to $\equiv_L^\ner$:

$$[x] = \{y \in A^* \mid x \equiv_L^\ner y\}$$

\begin{definition}[The Nerode Automaton]\label{definition-nerode-automaton}
    The \textup{Nerode automaton} recognising $L$, which we will denote by $M_L = (Q, A, \delta, q_0, F)$, is defined as follows:

    \begin{align*}
        Q &= \{[s] \mid s \in A^*\} \\
        q_0 &= [\varepsilon] \\
        F &= \{[s] \mid s \in L\} \\
        \delta([s], a) &= [sa]
    \end{align*}
\end{definition}

Since $L$ is regular, $Q$ is finite. In fact, if $L$ were non-regular, then $Q$ would be \emph{infinite}. \citet{RN72} prove that $M_L$ is the \emph{minimal DFA} recognising $L$ i.e., no other DFA recognising $L$ has fewer states than $M_L$.

We can also define the minimal DFA recognising $L$ in terms of \emph{residual languages}. This notion will be useful when we describe various automata learning algorithms in ~\Cref{chap:related-work} (Related Work). 

\begin{definition}[Residual Language]\label{definition-residual}
    The \textup{residual language} (or \textup{left derivative}) of $L$ with respect to some $x \in A^*$, denoted by $x \backslash L$, is defined as follows:

    $$x \backslash L = \{w \in A^* \mid xw \in L\}$$
\end{definition}

The connection between the Nerode congruence $\equiv_L^\ner$ and residual languages is as follows:

$$u \equiv_L^{\ner} v \hspace{3pt} \Longleftrightarrow \hspace{3pt} u \backslash L = v \backslash L$$

We can now use this notion of residual languages to define an automaton by setting

\begin{align*}
    Q &= \{s \backslash L \mid s \in A^*\} \\
    q_0 &= L \\
    F &= \{s \backslash L \mid s \in L\} \\
    \delta(X, a) &= a \backslash X
\end{align*}

This is again the minimal DFA recognising $L$.

\subsubsection{Non-Deterministic Finite Automata}\label{finite-automata-nfa}

\begin{definition}[Non-Deterministic Finite Automaton]\label{definition-nfa}
    A \textup{non-deterministic finite automaton} (NFA) $N$ is a 5-tuple $(Q, A, \delta, q_0, F)$. It is exactly the same as a DFA, except that $\delta: Q \times A_{\varepsilon} \rightarrow \mathcal{P}(Q)$ is not the transition \emph{function}, but instead the \textup{transition relation}.
\end{definition}

Intuitively, when $N$ is in some state $q \in Q$ and reads some $a \in A$, there is no \emph{single} state that $N$ must transition to (as is the case in a DFA); rather, there is a \emph{set} of states it could transition to i.e., we allow non-determinism. Furthermore, $A_{\varepsilon} = A \cup \{\varepsilon\}$, and so in NFAs we also allow \emph{epsilon transitions} i.e., transitions that do \emph{not} consume an input symbol. 

Suppose that $N$ reads in some $y \in A^*$. Beginning from $q_0$, $N$ moves through a sequence of states as it reads $y$ symbol-by-symbol, based on $\delta$. This sequence of states is referred to as a \emph{run of $N$ on $y$}, and can be denoted as $q_0 \xrightarrow{y_1} q_1 \xrightarrow{y_2} q_2 \xrightarrow{y_3} \dots \xrightarrow{y_m} q_m$, where $y = y_1y_2y_3 \dots y_m$ and each $y_i \in A_{\varepsilon}$. If $q_m \in F$, we say that the run is \emph{accepting}. Unlike with a DFA, there may be more than one run of $N$ on $y$. If $y$ has \emph{at least one} accepting run, then we say that $N$ \emph{accepts} $y$. The set $L(N)$, called the \emph{language of $N$}, denotes the set of all strings accepted by $N$ i.e., $L(N) = \{y \in A^* \mid \text{ $N$ accepts $y$}\}$.

A DFA is an NFA $N = (Q, A, \delta, q_0, F)$ that satisfies the following: for every state $q \in Q$ and symbol $a \in A$, $|\delta(q, a)| = 1$. Conversely, any NFA can be converted into an equivalent DFA (i.e., a DFA which accepts exactly the same language) using the \emph{subset construction} described by \citet{RN72}.

\begin{example}
    Figure~\ref{fig:nfa-example} is a simple example of an NFA $N = (Q, A, \delta, q_0, F)$, with $Q = \{q_0, q_1, q_2\}$, $A = \{0, 1\}$, start state $q_0$, $F = \{q_2\}$, and $\delta$ as defined in Table~\ref{tab:nfa-example-delta}.

    \begin{table}[ht]
    \begin{center}
    \begin{tabular}{  | c | c | c |  } 
        \hline
        $\delta$ & $0$ & $1$ \\
        \hline
        $q_0$ & $\{q_0, q_1\}$ & $\{q_0\}$ \\
        \hline
        $q_1$ & $\{q_2\}$ & $\{q_2\}$ \\
        \hline
        $q_2$ & $\varnothing$ & $\varnothing$ \\
        \hline
    \end{tabular}
    \caption{\label{tab:nfa-example-delta}Transition table for example NFA $N$}
    \end{center}
    \end{table}
    
    \begin{figure}[!htb]
    \centering
    \begin{tikzpicture} [node distance = 3cm, on grid, auto]
        \node (q0) [state, initial] {$q_0$};
        \node (q1) [state, right = of q0] {$q_1$};
        \node (q2) [state, right = of q1, double] {$q_2$};
        
        \path [-stealth, thick]
        (q0) edge [loop above]  node {$0, 1$} ()
        (q0) edge [above] node {$0$} (q1)
        (q1) edge [above] node {$0, 1$} (q2)
        ;
    \end{tikzpicture}
    \caption{A simple NFA example} \label{fig:nfa-example}
    \end{figure}
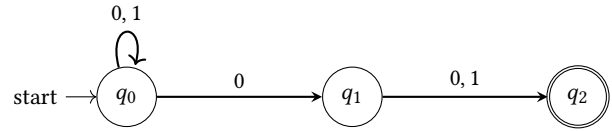
    
    If the cell in row $q \in Q$ and column $a \in A$ in Table~\ref{tab:nfa-example-delta} has value $\varnothing$, this means that, if $N$ is in state $q$ and then reads the symbol $a$, it crashes.
    
    If $N$ takes in the string $01$ as input, then, beginning from $q_0$, it can choose to transition to $q_1$ upon reading $0$. Upon reading the $1$, it will then transition to $q_2$, which \emph{is} a final state. Hence, $01$ is accepted by $N$ i.e., $01 \in L(N)$.
    
    What if $N$ takes in the string $110$ as input? After some thought, we can see that, in order for a run of $N$ on some $x \in A^*$ to be accepting, the second-last character of $x$ must be $0$. If the second-last character of $x$ is $1$, a run of $N$ on $x$ would either (1) not reach $q_2$ at all, or (2) reach $q_2$ but eventually crash. Thus, because the second-last character of $110$ is $1$, $110$ is rejected by $N$ i.e., $110 \not\in L(N)$.
\end{example}
\subsection{Grammars}\label{grammars}

A \emph{grammar} refers to a set of rules that can be used to generate strings. As with the finite automata, there are various classes of grammars, however we will focus on the class of \emph{context-free grammars} as these are the most widely-used in the field of language learning.

\begin{definition}[Context-Free Grammar]\label{definition-cfg}
    A \textup{context-free grammar} (CFG) $G$ is a 4-tuple $(V, A, R, S)$, where $V$ is a finite set of \textup{variables}, $A$ is a finite set of \textup{terminals}, $R$ is a finite set of \textup{rules}, and $S \in V$ is the \textup{start variable}. A rule in a CFG is of the form $B \rightarrow w$, where $B \in V$ and $w$ is a string of variables and/or terminals.
\end{definition}

To generate a string using $G$, use the following procedure:

\begin{enumerate}
    \item Begin with a string consisting of only $S$.
    \item Select any variable $B$ occurring in the current string, find a rule of the form $B \rightarrow w$, and substitute $w$ for $B$.
    \item Keep performing similar substitutions until the string consists of only terminal symbols.
\end{enumerate}

The above procedure computes what is called a \emph{derivation}; in particular a derivation from the start variable $S$ (because that's the variable we began with in Step (1)). By beginning with a different $B \in V$ in Step (1), we can compute a derivation from $B$. If some $x \in A^*$ can be derived from $B \in V$ (using the above procedure), we write $B \Rightarrow^* x$.

In Step (2) of the above procedure, if the leftmost variable is always chosen, then the derivation is called a \emph{leftmost derivation}. Conversely, if the rightmost variable is always chosen, then the derivation is called a \emph{rightmost derivation}.

The set $L(G, B)$ denotes the set of all strings over $A$ that can be derived from some non-terminal $B \in V$ i.e., $L(G, B) = \{u \in A^* \mid B \Rightarrow^* u\}$. The set $L(G, S)$ is called the \emph{language of $G$}, and is generally abbreviated to $L(G)$.

A language is called \emph{context-free} if some CFG generates it. CFGs are strictly more powerful than finite automata i.e., every language that can be recognised by a finite automaton can be generated by a CFG, but there are some languages that can be generated by a CFG but not recognised by a finite automaton. Given an arbitrary DFA $M = (Q, A, \delta, q_0, F)$ over $A$, there is a CFG $G$ over $A$ such that $L(G) = L(M)$. We now provide a definition of $G = (V, A, R, S)$, adapted from \citet{RN75}.

\begin{align*}
    V &= Q \\
    S &= q_0 \\
    R &= \{B \rightarrow aC \mid \delta(B, a) = C\} \cup \{B \rightarrow a \mid \delta(B, a) \in F\}
\end{align*}

$G$ as defined above is a special kind of CFG, known as a \emph{regular grammar}.

The language $\{0^n1^n \mid n \geq 1\}$ (over alphabet $\{0, 1\}$) is an example of a language that can be generated by a CFG, but not recognised by a finite automaton, thus demonstrating that CFGs are indeed strictly more expressive than finite automata.

\begin{example}
    Below is a simple example of a CFG $G = (V, A, R, S)$, with $V = \{S\}$, $A = \{(, )\}$, start symbol $S$, and the set of rules $R$ defined as follows:

    \begin{align*}
        S &\rightarrow (S) \\
        S &\rightarrow SS \\
        S &\rightarrow \varepsilon
    \end{align*}
    
    Because the above rules all have $S$ on the left-hand side, we may abbreviate them into a single line, using the symbol ``$\mid$'' as a separator:
    
    $$S \rightarrow (S) \mid SS \mid \varepsilon$$
    
    $G$ is able to generate the string $()()$:
    
    $$S \Rightarrow SS \Rightarrow (S)S \Rightarrow (S)(S) \Rightarrow ()(S) \Rightarrow ()()$$
    
    It is also able to generate the string $(())$:
    
    $$S \Rightarrow (S) \Rightarrow ((S)) \Rightarrow (())$$
    
    After some thought, we can see that $G$ generates the set of all strings of balanced parentheses.
\end{example}

\subsection{Language Learning}\label{language-learning}

The goal of language learning is to infer a mathematical model, typically either a finite automaton or grammar, which can accurately classify an arbitrary $s \in A^*$ as being either a member, or not a member, of an unknown \emph{target language} $X$. In general, a language learning algorithm is provided with the following inputs:

\begin{itemize}
    \item an alphabet $A$,
    \item a class $\mathcal{C}$ of languages over $A$, of which $X$ is known to be a member (i.e., $X \in \mathcal{C}$ is a fact that can be assumed by the language learning algorithm),
    \item a \emph{hypothesis space}, $\mathcal{R}$ (typically either a set of automata or a set of grammars), where each $L \in \mathcal{C}$ is recognised or generated by some hypothesis in $\mathcal{R}$, and
    \item some information about $X$ (this information can take various forms, as described in \Cref{learning-models}).
\end{itemize}

The language learning algorithm is then required to output a \emph{hypothesis} i.e., an element of $\mathcal{R}$, which can accurately classify strings in $A^*$ according to their membership in $X$. Typically, existing language learning algorithms output a hypothesis which is either a finite automaton or a CFG.

\subsection{Learning Models}\label{learning-models}

A \emph{learning model} refers to the framework within which a learning algorithm operates. In particular, it specifies exactly what information about the unknown target language $X$ is available to the learner. We now describe the main learning models used in language learning.

\subsubsection{Passive Learning}\label{learning-models-passive}

In the \emph{passive learning} model, a learner is given a finite set of examples classified according to their membership in $X$, and is required to produce an automaton which generalises these examples.

Let us now provide a more formal definition of this learning model, adapted from \citet{RN37}. It is useful to view our target language $X$ as a mapping $X: A^* \rightarrow \{0, 1\}$, where, for arbitrary $w \in A^*$, $X(w) = 1$ if $w \in X$, and $X(w) = 0$ if $w \not\in X$. A \emph{sample} of $X$ on $D \subseteq A^*$, denoted by ${X \mid}_{D}$, is a set of classified examples $\{(w, X(w)) \mid w \in D\}$. The set of all positive examples within this sample is equivalent to ${X \mid}^{-1}_{D}(1)$, and likewise the set of all negative examples within this sample is equivalent to ${X \mid}^{-1}_{D}(0)$. 

We can now formally define passive learning as follows: given a sample ${X \mid}_{D}$, for some finite $D \subseteq A^*$, we want to find a representation $\mathcal{A}$ in our hypothesis space $\mathcal{R}$ that is \emph{consistent} with the sample i.e., for each $w \in D$, $\mathcal{A}(w) = X(w)$. Ideally, this representation $\mathcal{A}$ should be small; in other words, it should generalise the given examples.

\subsubsection{Active Learning}\label{learning-models-active}

A passive learning algorithm has access to a finite set of strings $D \subseteq A^*$ classified according to their membership in $X$, without being able to influence which strings are placed into $D$. By contrast, an \emph{active learning} algorithm is afforded a greater degree of freedom, in that it is allowed to \emph{choose} certain questions to ask about $X$. Specifically, the active learning scenario involves two entities -- a \emph{learner} and a \emph{teacher}. The learner wants to learn an unknown target language $X$ known by the teacher. In order to do so, the learner is permitted to ask questions, or make queries, about $X$, which must be answered by the teacher. \citet{RN1} defines six key types of queries which can be made by the learner: 

\begin{itemize}
    \item \emph{Membership} -- the input is a string $w \in A^*$, and the output is \y if $w \in X$, and \n otherwise.

    \item \emph{Equivalence} -- the input is a representation $\mathcal{A'} \in \mathcal{R}$ of language $X' \in \mathcal{C}$, and the output is \y if $X' = X$ and \n otherwise. If the output is \n, an arbitrary $x \in X \oplus X'$ is returned.

    \item \emph{Subset} -- the input is a representation $\mathcal{A'} \in \mathcal{R}$ of language $X' \in \mathcal{C}$, and the output is \y if $X' \subseteq X$ and \n otherwise. If the output is \n, an arbitrary $x \in X' \setminus X$ is returned.

    \item \emph{Superset} -- the input is a representation $\mathcal{A'} \in \mathcal{R}$ of language $X' \in \mathcal{C}$, and the output is \y if $X' \supseteq X$ and \n otherwise. If the output is \n, an arbitrary $x \in X \setminus X'$ is returned.

    \item \emph{Disjointness} -- the input is a representation $\mathcal{A'} \in \mathcal{R}$ of language $X' \in \mathcal{C}$, and the output is \y if $X' \cap X = \varnothing$ and \n otherwise. If the output is \n, an arbitrary $x \in X' \cap X$ is returned.

    \item \emph{Exhaustiveness} -- the input is a representation $\mathcal{A'} \in \mathcal{R}$ of language $X' \in \mathcal{C}$, and the output is \y if $X' \cup X = A^*$ and \n otherwise. If the output is \n, an arbitrary $x \not\in X \cup X'$ is returned.
\end{itemize}

For all query types other than membership, the element $x$ returned with the output \n is known as a \emph{counterexample}. Counterexamples are assumed to be chosen arbitrarily i.e., the learning algorithm cannot make any assumptions about how the counterexample has been chosen.

A widely-used active learning framework is the \emph{minimally-adequate teacher} framework, introduced by \citet{RN6}. A \emph{minimally-adequate teacher} (MAT) is a teacher that can answer only two kinds of queries -- membership queries and equivalence queries. Many language learning algorithms have been formulated within this MAT framework, including the classical L* algorithm proposed by \citet{RN6} herself.

\noindent\textbf{Probably Approximately Correct Learning.}
\label{learning-models-active-pac}
Though many active learning algorithms make use of a minimally-adequate teacher, it is often not possible to simulate such a teacher in practice. Many practical systems of interest are capable of answering membership queries; however, answering equivalence queries is almost always infeasible. This however does not present a barrier to MAT algorithms being used in practice. Rather, the ability to answer equivalence queries can be replaced with a \emph{random sampling oracle}, allowing learning to take place within the \emph{probably approximately correct} (PAC) learning framework. We now describe this framework in more detail.

The \emph{probably approximately correct} (PAC) framework was originally introduced by \citet{RN42} to study the complexity of machine learning problems, and so his setting is much broader than just language learning. Thus, we use the formal definition presented by \citet{RN1}. Let $D$ be some (unknown but fixed) probability distribution over $A^*$. The learner has access to a \emph{random sampling oracle} which, when called, returns some $x \in A^*$ drawn according to $D$. Intuitively, instead of asking an equivalence query, the learner makes a certain number of calls to the random sampling oracle. If, for every $x \in A^*$ returned by the random sampling oracle, the learner's hypothesis $\mathcal{A}$ classifies $x$ correctly according to its membership in $X$, then $L(\mathcal{A})$ is deemed equivalent to $X$ and so $\mathcal{A}$ is returned. Otherwise, some sampled string is \emph{not} classified correctly by $\mathcal{A}$ -- which is a counterexample.

The PAC framework has two parameters -- the \emph{accuracy} parameter $\varepsilon$, and the \emph{confidence} parameter $\delta$. Once values have been chosen for $\varepsilon$ and $\delta$, the number of calls made to the random sampling oracle in place of each equivalence query can be set accordingly, as described by \citet{RN1}. Specifically, the number of calls made in place of the $i^{\text{th}}$ equivalence query is $q_i$, where

$$q_i = \left\lceil{\frac{1}{\varepsilon} \left(\ln{\frac{1}{\delta}} + i \ln{2}\right)}\right\rceil$$

Replacing each equivalence query with calls to the sampling oracle in the above-described manner gives rise to a theoretical guarantee on the classification accuracy of the inferred hypothesis. Before stating this guarantee, let us first define the notion of the \emph{difference} between arbitrary $L, L' \subseteq A^*$ with respect to $D$:

\begin{equation}\label{eqn:diff}
    d(L, L') = \sum_{x \in L \oplus L'} \Pr(x)
\end{equation}

Intuitively, $d(L, L')$ quantifies the probability that some element $x \in A^*$ drawn according to $D$ is in $L$ but not in $L'$, or in $L'$ but not in $L$. So, $\Pr$ in Equation~\ref{eqn:diff} is a probability over the sample space $A^*$.

If $X' \subseteq A^*$ denotes the language of the inferred hypothesis and $X \subseteq A$ the target language, then the guarantee satisfied by $X'$ can be expressed as follows:

\begin{equation}\label{eqn:pac-guarantee}
    \Pr(d(X', X) \leq \varepsilon) \geq 1 - \delta
\end{equation}

In words: the probability that the hypothesis returned by the learner is a ``good'' (quantified by $\varepsilon$) approximation of $X$ should be ``high'' (quantified by $\delta$). In Equation~\ref{eqn:pac-guarantee}, $\Pr$ is a probability over the samples drawn from $A^*$ and the internal randomness of the learning algorithm. Its sample space then becomes $(A^*)^m \times \{0, 1\}^r$, where $m$ is the number of samples drawn from $A^*$, and $r$ the number of random choices made by the algorithm (with each random choice being modelled by a single bit).

Note that this PAC guarantee is in contrast to the criterion of \emph{exact identification} which is expected of a MAT learning algorithm. Specifically, because a MAT learning algorithm can ask \emph{exact equivalence queries} (i.e., equivalence queries as described in ~\Cref{learning-models-active}), it is expected to output a hypothesis whose language is \emph{exactly} equivalent to the target language.

\subsection{The Prefix Query}\label{pref-query}

Recall the six types of queries described in \Cref{learning-models-active}. A \emph{prefix oracle} is able to answer a seventh kind of query, referred to as a \emph{prefix query}.

\begin{definition}[Prefix Query]\label{definition-pref-query}
    Given a string $x \in A^*$ as input, the output of a prefix query is one of the following:
    
    \begin{itemize}
        \item \member, if $x$ is a member of $X$,
        \item \livepref, if $x$ is a live prefix of $X$ i.e., $x$ is not in $X$, but it can be extended to form a string in $X$, or
        \item \deadpref, if $x$ is a dead prefix of $X$ i.e., $x$ is not in $X$, and furthermore it \emph{cannot} be extended to form a string in $X$.
    \end{itemize}
\end{definition}

\subsubsection{Parsers}\label{pref-query-parsers}

The prefix query arises quite naturally from the field of software engineering. Language learning is often used in the software engineering domain to learn the language accepted by a \emph{parser}. In general, a parser can be thought of as a program which, given a string $x \in A^*$ as input, does either one of the following:

\begin{itemize}
    \item constructs an internal representation of $x$ (for example, a syntax tree, as depicted in Figure~\ref{fig:parser-syntax-tree}), if $x$ conforms to the format expected by the parser, or
    \item returns an error message, if $x$ does \emph{not} conform to the format expected by the parser.
\end{itemize}

\begin{figure*}
\centering
\includegraphics[width=\columnwidth*1/2]{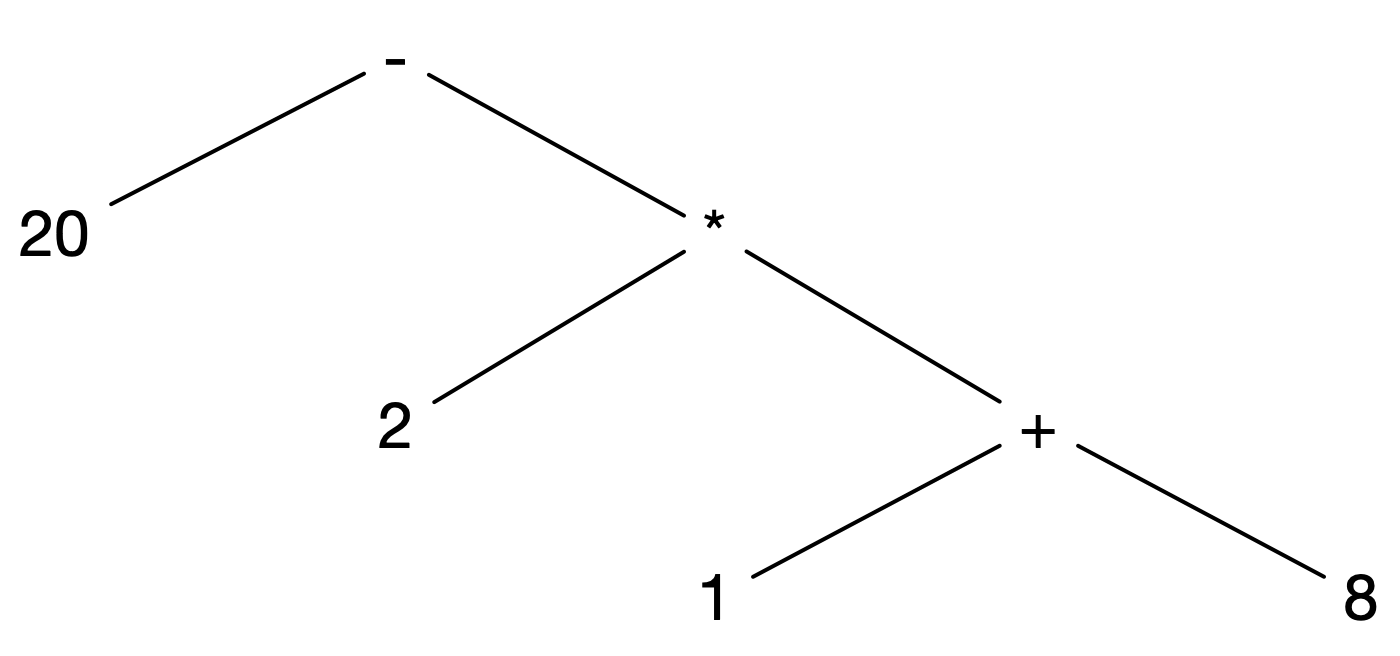}
\caption{Syntax tree for the arithmetic expression $20 - (2 * (1 + 8))$}\label{fig:parser-syntax-tree}
\end{figure*}

\citet{RN89} identify that many parsers prevalent in the field of software engineering are naturally capable of answering prefix queries. For example, two widely-used classes of parsers that can answer prefix queries are \emph{LL parsers} and \emph{LR parsers}. LL parsers are able to parse restricted context-free languages, known as \emph{LL languages}, whilst LR parsers are able to parse a class of context-free languages known as the \emph{deterministic context-free languages}. The LL languages are a strict subset of the deterministic context-free languages; i.e., every LL language is a deterministic context-free language, but some deterministic context-free languages are \emph{not} LL languages. Thus, LR parsers are strictly more powerful than LL parsers.

An \emph{LL grammar} is a context-free grammar (CFG) that can be parsed by an LL parser, whilst an \emph{LR grammar} is a CFG that can be parsed by an LR parser. Just as LR parsers are strictly more powerful than LL parsers, LR grammars are strictly more expressive than LL grammars.

Both LL and LR parsers read a given input string from left-to-right, symbol-by-symbol. An LL (LR) parser is called an \emph{LL(k) (LR(k)) parser} if it uses $k$ symbols of \emph{lookahead} when parsing an input string (typically $k = 1$). Specifically, this means that it ``peeks ahead'' at the $k$ unread symbols following the most recently-read symbol -- without actually reading them. Given a string $x \in A^*$ as input, an LL($k$) parser computes a leftmost derivation of $x$, whilst an LR($k$) parser computes a rightmost derivation of $x$ \emph{in reverse} (i.e., beginning with $x$ and ending with the start variable).

\subsubsection{The Viable-Prefix Property}\label{pref-query-viable-pref}

As previously stated, LL and LR parsers are naturally capable of answering prefix queries, due to possessing a property known as the \emph{viable-prefix property} \citep{RN74}:

\begin{definition}[Viable-Prefix Property]\label{definition-viable-pref-prop}
    Given an input string $x \in A^*$, a parser which has the \textup{viable-prefix property} throws an error as soon as it has read a prefix of $x$ which cannot be extended to form a string in the language. 
\end{definition}

We now describe how a parser which has the viable-prefix property can answer a prefix query.

Consider a parser that has the viable-prefix property and accepts the language $L \subseteq A^*$. Given an input string $x \in A^*$,

\begin{enumerate}
    \item if the parser accepts $x$, then return \member;
    \item if the parser fully consumes $x$ without throwing an error, but has not yet accepted (i.e., is waiting for the next input symbol), then return \livepref; and
    \item if the parser throws an error, then return \deadpref.
\end{enumerate}

We now argue that, using the above approach, prefix queries can be answered accurately:

\begin{enumerate}
    \item If the parser accepts $x$, then, by definition, $x \in L$, and so \member would indeed be the correct response.
    \item Since $x$ is not accepted by the parser, clearly $x \not\in L$. However, because $x$ could be consumed fully without an error being thrown, then by the viable-prefix property, $xy \in L$ for some $y \in A^+$ i.e., $x \in \PP(L)$. Thus, $x$ is a live prefix, and so \livepref would indeed be the correct response.
    \item Let's say that an error is thrown after the parser has read some prefix $x'$ of $x$. This means that $x' \not\in \PP(L)$ -- and so it is certainly the case that $x \not\in \PP(L)$. If an error is thrown, clearly $x$ is not accepted i.e., $x \not\in L$. Thus, $x$ is a dead prefix, and so \deadpref would indeed be the correct response.
\end{enumerate}

\section{Problem Statement} \label{chap:problem-statement}

In \Cref{pref-query-parsers}, we identified that many parsers are naturally capable of answering \emph{prefix queries}; i.e., intuitively, they indicate whether or not an input is valid, and also whether or not it is a \emph{prefix} of some valid input. However, as we shall see in \Cref{chap:related-work} (Related Work), no existing language learning algorithms make use of these prefix queries. Hence, broadly-speaking, the aim of this work is to investigate how prefix queries can be used in the field of language learning. Specifically, we aim to understand how existing language learning algorithms which use membership queries can be modified to exploit the additional information given by prefix queries, in order to either learn more efficiently (in terms of runtime or space usage, or both), or learn a broader class of languages. Then, for any approach we propose, we aim to demonstrate how it can be applied to the problem of learning the language of a blackbox parser.

\section{Related Work} \label{chap:related-work}

In this section, we describe a range of existing language learning algorithms, including both classical and state-of-the-art approaches. Notably, none of the algorithms we describe here make use of the prefix query introduced in Section~\ref{pref-query}; in fact, to our knowledge, this type of query has not been used by \emph{any} existing language learning algorithm. Hence, in ~\Cref{chap:method} (Method), we will describe how one existing language learning algorithm can be improved by modifying it to exploit prefix queries.

\subsection{Learning Deterministic Finite Automata}\label{learning-dfas}

The majority of language learning algorithms which learn a finite automaton recognising a regular target language learn a \emph{deterministic} finite automaton (DFA). Indeed, the novel algorithm we introduce in \Cref{chap:method} is one that learns a DFA representation of the target language. We now describe several algorithms that learn DFAs.

\subsubsection{Active Learning}\label{learning-dfas-active}

Recall the active learning model introduced in Section~\ref{learning-models-active}. In short, an active learning algorithm is one that learns by asking certain kinds of queries about the target language. Many active learning algorithms that infer a DFA representation of the target language make use of a minimally-adequate teacher (MAT), namely a teacher that can answer two kinds of queries -- membership queries and equivalence queries (see Section~\ref{learning-models-active} for further details about these two query types). There are a few algorithms however that use other query types, including a slight variant of the prefix query introduced in Section~\ref{pref-query}. In this section, we describe a range of active learning algorithms for inferring DFAs -- some of which rely on a MAT, whilst others use alternative query types.

\noindent\textbf{Learning from a Minimally-Adequate Teacher}\label{learning-dfas-active-mat}

The first, and arguably most well-known, minimally-adequate teacher (MAT) algorithm was introduced by \citet{RN6} as the L* algorithm. Angluin proves that L* learns the minimal DFA (Definition~\ref{definition-nerode-automaton}) recognising the unknown regular target language $X$ in time polynomial in the size of the minimal DFA recognising $X$ and the length of the longest counterexample returned by the teacher. Because the algorithm described in \Cref{chap:method} (Method) is a modification of L*, we provide a more detailed description of L* in Section~\ref{angluin-lstar-overview}. 

Another algorithm which efficiently learns a DFA representation of the target language is the discrimination tree (DT) algorithm proposed by \citet{RN71}. As observed by \citet{RN38}, an advantage of this approach is that, unlike L*, which performs a membership query for \emph{every} $x \in (S_{\pre} \cup S_{\pre} \cdot A) \cdot S_{\suff}$, the DT approach only performs membership queries which distinguish between states. Let $X \subseteq A^*$ be the unknown regular language which is to be learned, and $M = (Q, A, \delta, q_0, F)$ the minimal DFA recognising $X$. Intuitively, the algorithm maintains two sets of strings -- a set $S \subseteq A^*$ of \emph{access strings} (where, for each $s \in S$, $\delta^*(q_0, s)$ is a unique element of $Q$) and a set $D \subseteq A^*$ of \emph{distinguishing strings} (where, for every $s_1, s_2 \in S$, there is some $d \in D$ such that $M$ accepts exactly one of $s_1d$ and $s_2d$). The learning algorithm stores $S$ and $D$ in a binary classification tree, where each internal node is labelled by some $d \in D$, and each leaf node by some $s \in S$. The root of the tree is labelled by $\varepsilon$, and, for each internal node labelled $d \in D$, its left subtree will contain leaf nodes for each $s \in S$ such that $sd$ is rejected by $M$, and its right subtree will contain leaf nodes for each $s \in S$ such that $sd$ is accepted by $M$. Thus, any $s_1, s_2 \in S$ will be distinguished by the string $d \in D$ labelling the least common ancestor of their corresponding leaf nodes. The algorithm's main goal is to ``continually discover new states of $M$'' \citep{RN71} i.e., states which can be distinguished from every other state discovered so far. In doing so, it gradually enlarges $S$ till eventually $|S| = |Q|$ -- at which point the hypothesis produced by the learner is equivalent to $M$.

The state-of-the-art MAT algorithm is widely considered to be the TTT algorithm proposed by \citet{RN38}. TTT is an improvement upon the DT algorithm proposed by \citet{RN71}, which is proven to achieve lower, and in fact optimal, space complexity. It does so by analysing each counterexample received from the teacher and storing only the parts which offer new information about $X$. Specifically, the authors explain how the learner may receive an excessively-long counterexample, where a large portion of it presents \emph{redundant} information, or information that the learner had already ascertained from previous counterexamples. The TTT algorithm is able to identify and discard such redundancies, storing only the new information provided by each counterexample.

\noindent\textbf{Learning from Other Query Types}\label{learning-dfas-active-other-query-types}

Several active learning approaches use query types other than membership queries and equivalence queries, usually to reduce the number of queries asked (compared to if a MAT was used). For example, the algorithm proposed by \citet{RN10} is a modification of Angluin's L* which queries a teacher that, like a MAT, can answer both membership queries and equivalence queries. However, unlike a MAT, this teacher supplements an \n response to a membership query with some additional information, namely a set of ``similar'' strings which also are negative examples (i.e., not elements of the target language). Specifically, when asked a membership query with the string $w \in A^* \setminus X$, the teacher is required to supplement the response \n with one of the following:

\begin{enumerate}
    \item A set of strings that all start with the prefix $p$, for some $p \in A^*$ satisfying $w = px$, and are also negative examples.
    \item A set of strings that all end with the suffix $s$, for some $s \in A^*$ satisfying $w = xs$, and are also negative examples.
    \item A sequence of substrings $(w_1, w_2, \dots, w_n)$ of $w$ which, when present in a given string, in that same order, imply that the string is a negative example.
\end{enumerate}

A strength of this work is that the authors provide a clear practical application of their algorithm, namely to the field of ontology learning. A weakness however is that, though the authors show that their proposed algorithm performs better than L* \emph{in practice}, they do not analyse the algorithm's \emph{theoretical} complexity. If the proposed algorithm had been shown to outperform L* in terms of theoretical complexity as well, this would have been a stronger positive result.

The LCA algorithm, proposed by \citet{RN20}, is another approach that uses a more informative variant of the membership query described by \citet{RN1}. Specifically, LCA uses two kinds of queries -- equivalence queries and \emph{correction queries}. A \emph{correction query} is in fact identical to the prefix query described in Section~\ref{pref-query}, except that additional information is provided alongside the \livepref response. Specifically, if a correction query is asked for some $x \in A^*$, and $x$ is found to be a live prefix of the target language $X$, then, together with the \livepref response, a string $y \in A^*$ such that $y$ is one of the shortest strings satisfying $xy \in X$ is returned. This $y$ is termed a \emph{correcting string}, and the authors describe how, intuitively, it mimics the feedback or ``corrections'' given to a child who is learning a new language. Because a correction query is a slight variant of our prefix query (Definition~\ref{definition-pref-query}), we will henceforth refer to correction queries as \emph{extended prefix queries}.

LCA is arguably only a slight modification of Angluin's L*, with the only notable differences between them being the following:

\begin{enumerate}
    \item In both L* and LCA, each cell in the observation table corresponds to some $w \in A^*$. In L*, the cell corresponding to $w$ is filled in with $1$ if $w \in X$ and $0$ otherwise. In LCA, the cell corresponding to $w$ is filled in with
    
    \begin{itemize}
        \item $\varepsilon$, if $w \in X$,
        \item $y$, if $w \not\in X, w \in \PP(X)$ and $y \in A^*$ is the one of the shortest strings satisfying $wy \in X$, or
        \item $\varphi$, if $w \not\in X, w \not\in \PP(X)$ (where $\varphi \not\in A$).
    \end{itemize}

    \item LCA is able to use the results of previous extended prefix queries to reduce the number of queries that need to be made. For example, if the cell corresponding to $w \in A^*$ is filled in with $\varphi$, then it immediately deduces that the cell corresponding to $wx \in A^*$ (for some $x \in A$) should also be filled in with $\varphi$.
\end{enumerate}

Though \citet{RN20} do not \emph{theoretically} quantify the improvement given by LCA over L* (they only demonstrate this improvement \emph{empirically}), \citet{RN68} prove theoretical upper bounds on the number of equivalence queries and extended prefix queries performed by LCA. Intuitively, they define a property of languages called \emph{injectivity}, and argue that, if the target language has small injectivity, the number of extended prefix queries made by LCA is substantially smaller than the number of membership queries made by L*. Despite this positive result, a key shortcoming of the LCA algorithm is that it uses an extended prefix query which does not appear to be answerable in a practical context -- even by a parser that can answer (non-extended) prefix queries (as discussed in Section~\ref{pref-query-parsers}). This therefore makes the practical applicability of this algorithm unclear.

\subsubsection{Passive Learning}\label{learning-dfas-passive}

Recall the passive learning model described in Section~\ref{learning-models-passive}. In short, a passive learning algorithm is provided with a set of example strings labelled according to their membership in the target language. Passive learning algorithms that learn a DFA representation of a regular language typically use a technique known as \emph{state-merging}. In general, a state-merging algorithm learning a DFA that recognises the regular language $X \subseteq A^*$ begins by constructing a \emph{prefix tree acceptor} (PTA) $(Q, A, \delta, q_0, F)$ for the given positive examples ${X \mid}_{D}^{-1}(1)$, defined as follows:

\begin{align*}
    Q &= \prefix({X \mid}_{D}^{-1}(1)) \cup \{\varnothing\} \\
    \delta(x, a) &= 
        \begin{cases}
            xa, &\text{if both $x$ and $xa$ are elements of $\prefix({X \mid}_{D}^{-1}(1))$}\\
            \varnothing, &\text{otherwise}\\
        \end{cases} \\
    q_0 &= \varepsilon \\
    F &= {X \mid}_{D}^{-1}(1)
\end{align*}

Intuitively, a PTA accepts precisely the positive examples from the sample. After constructing this PTA, the state-merging algorithm then tries to generalise it to accept a larger subset of $X$ by merging as many states as possible, checking before each merge that the resulting automaton would not accept any of the negative examples from the sample.

One of the most well-known state-merging algorithms is the regular positive and negative inference (RPNI) algorithm proposed by \citet{RN40}, which controls the order in which state merges are performed to ensure the automaton remains consistent with the given examples and also does not have any non-determinism. The algorithm first constructs a PTA from the given examples, then assigns colours to each of the PTA's states. Specifically, it colours the PTA's root RED, the root's successors BLUE, and every other state WHITE. The algorithm then attempts to merge a BLUE state with a RED state. If performing such a merge would produce an automaton consistent with the provided examples, then this is done, after which the new combined state is coloured RED and all its (currently-WHITE) successors are coloured BLUE. The merge is followed by a ``folding'' operation, to eliminate any nondeterminism in the automaton produced. On the other hand, if a BLUE state cannot be merged with any RED state, it is coloured RED, and all its successors BLUE. The algorithm keeps performing these state merges until, eventually, there are no BLUE-coloured nodes remaining.

Though RPNI does provide \emph{some} direction on how state merges should be performed, one characteristic of the algorithm, which could be viewed as a limitation, is that it does not \emph{fully} specify the order in which the merges should be performed. This is because a few steps in the algorithm require choices to be made but do not specify how exactly this should be done. These choices could significantly affect the performance of the algorithm, specifically how well the final automaton captures the target language $X$. To address this, several variants of the RPNI algorithm have emerged, with one such algorithm being the evidence driven state merge (EDSM) algorithm proposed by \citet{RN41}. This algorithm defines a heuristic to score possible merges, and then performs the merge with the highest score. The EDSM algorithm is considered to be the state-of-art passive learning algorithm for inferring DFAs.

\subsection{Learning Nondeterministic Finite Automata}\label{learning-nfas}

All the learning algorithms we have considered so far learn DFA representations of regular languages. In many cases, however, it would be preferable to learn an NFA representation, particularly because the NFA recognising a given regular language may be exponentially smaller than the corresponding (minimal) DFA. Many algorithms that learn DFAs rely on the existence of a \emph{unique, minimal} DFA for every regular language. This property however does not always hold for NFAs; that is, not every regular language has a unique, minimal NFA accepting it. Because of this, most algorithms which learn NFA representations of regular languages concentrate on subclasses of NFAs which do satisfy this property. In particular, the class of \emph{residual finite state automata} (RFSAs) is a subclass of the NFAs which satisfies the following property: every regular language has a unique minimal \emph{canonical} RFSA accepting it.

Let $L \subseteq A^*$ be an arbitrary regular language. In the minimal DFA recognising $L$, each state accepts a unique residual language of $L$, and furthermore every residual language of $L$ is accepted by some state. Likewise, in an RFSA recognising $L$, each state accepts a residual language of $L$. However, not \emph{every} residual language of $L$ is accepted by a \emph{single} state; some residual languages are instead the \emph{union} of languages of other states. This therefore implies that an RFSA recognising $L$ may have fewer states than the minimal DFA recognising $L$. The \emph{canonical} RFSA recognising $L$ is the unique \emph{reduced} and \emph{saturated} RFSA recognising $L$. A \emph{reduced} RFSA has no state corresponding to a language which is the union of languages of some other states. A \emph{saturated} RFSA is one for which it is not possible to make any additional state initial, or add any edge, without modifying its language.

Two algorithms that learn RFSA representations of regular languages are the NL* \citep{RN12} and DeLeTe2 \citep{RN43} algorithms, which we will now briefly describe.

\subsubsection{Active Learning}\label{learning-nfas-active}

As previously described, RFSAs share several properties with DFAs, notably the existence of a unique minimal DFA as well as a unique minimal canonical RFSA for every regular language. \citet{RN12} recognise this similarity, and so make a few changes to the L* algorithm to allow it to learn the minimal canonical RFSA, rather than the minimal DFA, recognising the target language. In particular, the notions of closedness and consistency described by \citet{RN6} are modified slightly, giving rise to the notions of \emph{RFSA-closedness} and \emph{RFSA-consistency}, respectively. Aside from this, their algorithm, which they refer to as NL*, is essentially identical to L*, in terms of the general principles of maintaining an (RFSA-)closed and (RFSA-)consistent observation table by extending it as needed (using membership queries), and subsequently using this table to produce an automaton which is then presented as a conjecture. One shortcoming of this algorithm is that its query complexity is found to be poorer than that of the L* algorithm. However, the authors perform experiments to show that, \emph{in practice}, NL* tends to outperform L* in terms of the number of queries required.

\subsubsection{Passive Learning}\label{learning-nfas-passive}

Like NL*, the DeLeTe2 algorithm proposed by \citet{RN43} also learns a RFSA representation of a regular language. However, instead of using a MAT-based approach, DeLeTe2 constructs an RFSA consistent with a sample containing both positive and negative examples within polynomial time. The authors compare their algorithm with RPNI, which learns a DFA from a given sample (as described in Section~\ref{learning-dfas-passive}). In short, RPNI first constructs a prefix tree acceptor for the given positive examples and then tries to merge states that accept equivalent residual languages. Likewise, DeLeTe also considers prefixes of positive examples as candidate states; however, instead of considering \emph{equivalences} between state languages, it considers \emph{inclusions} (i.e., subset relations) between them. One limitation of this algorithm compared to NL* is that, whilst NL* learns the unique minimal canonical RFSA for the target language $X \subseteq A^*$, DeLeTe2 learns an RFSA whose size is between the sizes of the canonical RFSA and the minimal DFA for $X$. In other words, the representation learned by DeLeTe2 might not be as concise as that which is learned by NL*, however it would still be more concise than the representation learned by L*.

\subsection{Learning Context-Free Grammars}\label{learning-grammars}

All the algorithms we have considered so far only learned finite automata modelling an unknown target language. However, in some cases, it would be preferable to learn a more expressive context-free grammar (CFG) representation instead, as this representation may be able to better-capture the properties of the target language (in particular, recursion) compared to a finite automaton. We now describe some algorithms which learn CFGs.

\subsubsection{Active Learning}\label{learning-grammars-active}

\citet{RN48} presents an algorithm which learns a CFG from a teacher that can answer two kinds of queries -- \emph{structural membership queries} and \emph{structural equivalence queries}. Before defining these query types, we first define the notion of \emph{structural data}. Given a CFG $G$ over alphabet $A$, \emph{structural data} of $G$ refers to unlabelled derivation trees of $G$ i.e., derivation trees which only show terminal symbols, without showing any non-terminal symbols (non-terminals are all simply replaced with a special label $\sigma$). An unlabelled derivation tree of $G$ is referred to as a \emph{structural description of $G$}. Let $K(G)$ denote the set of all structural descriptions of $G$. A \emph{structural membership query} takes as input a structural description $s$ and asks if it is indeed a structural description of the target grammar $G$ i.e., if $s \in K(G)$. A \emph{structural equivalence query} takes as input a grammar $G'$ and asks if it is \emph{structurally equivalent} to $G$ i.e., if $K(G) = K(G')$. If not, a counterexample is provided -- a structural description of $G$ but not of $G'$; or of $G$ but not of $G'$.

Sakakibara reduces the problem of learning a CFG from structural membership and structural equivalence queries to the problem of learning a \emph{tree automaton}, namely an automaton which takes in a tree, rather than a string, as input. Sakakibara then presents a modification of Angluin's L* algorithm which learns a tree automaton, rather than a DFA. In particular, instead of using membership and equivalence queries, the algorithm uses \emph{structural} membership and equivalence queries. Furthermore, rather than constructing a DFA from a closed and consistent observation table, the algorithm constructs a tree automaton, and then converts it to a CFG which can then be used to make a structural equivalence query. Merits of this algorithm include its polynomial running time, as well as its ability to learn the full class of context-free languages. One limitation however is that it is unclear how structural membership and structural equivalence queries could be answered in a practical setting.

\citet{RN49} takes a different approach than \citet{RN48}; rather than trying to learn the \emph{full} class of context-free languages using more informative variants of membership and equivalence queries, he instead presents an algorithm which learns only a \emph{subclass} of the context-free languages, albeit using the traditional membership and equivalence queries introduced by \citet{RN6}. Specifically, the algorithm learns the class of all context-free languages definable by a \emph{congruential CFG}.

Let us define the notion of a congruential CFG. Before doing so, we introduce some notation. Let $L \subseteq A^*$ be arbitrary. A \emph{context} refers to a pair of strings $(l, r) \in A^* \times A^*$. The \emph{distribution} of a string $w \in A^*$ with respect to $L$, denoted by $C_L(w)$, is defined as follows:

$$C_L(w) = \{(l, r) \in A^* \times A^* \mid lwr \in L\}$$

The strings $u, v \in A^*$ are \emph{congruent with respect to $L$}, denoted by $u \equiv_L^\cl v$, if and only if $C_L(u) = C_L(v)$. This is an equivalence relation, with $[u]_L = \{v \mid u \equiv_L^\cl v\}$ denoting the equivalence class of $u$ under $\equiv_L^\cl$. Let $G = (V, A, R, S)$ be an arbitrary CFG over $A$. We say that $G$ is \emph{congruential} if, for every non-terminal $B \in V$, the set $L(G, B)$ is a subset of some congruence class of $L(G)$.

\citet{RN49} notes that, for arbitrary CFG $G$ and DFA $M$ over $A$ (where $M$ is minimal), the correspondence between congruence classes under $\equiv_{L(G)}^{\mathcal{C}}$ and non-terminals of $G$ is analogous to the correspondence between congruence classes under $\equiv_{L(M)}^{\mathcal{N}}$ (Definition~\ref{definition-nerode-congruence}) and states of $M$. Likewise, there is a resemblance between Clark's algorithm and Angluin's L* (which learns the minimal DFA recognising a regular language) -- it maintains a consistent observation table (though it consists of substrings and contexts in place of prefixes and suffixes), fills it in using membership queries, and uses it to construct a grammar which is then presented to the equivalence oracle. If the current grammar is incorrect, and a counterexample is received, then the grammar is modified accordingly. One advantage of this algorithm is its polynomial time complexity. However, though the author does give some observations about the class of context-free languages learned by the algorithm, for example that it is a strict superset of the regular languages (as observed by Clark, the Dyck language is generated by a congruential CFG, but is non-regular), the practical applicability of this language class is not explored. This therefore makes it unclear how the algorithm could be used in practice.

\citet{RN63} presents an algorithm which can learn a \emph{simple deterministic CFG} in polynomial time using extended prefix queries (defined in Section~\ref{learning-dfas-active-other-query-types}) and \emph{derivative queries}. We now define the notion of a simple deterministic CFG. A CFG $G = (V, A, R, S)$ is in \emph{Greibach normal form} (GNF) if every production rule is of the form $B \rightarrow aE$, for $B \in V, a \in A$ and $E$ a sequence of zero or more non-terminals. If $G$ is in GNF, then it is a \emph{simple deterministic CFG} (or \emph{simple grammar}) if, for every $B \in V, a \in A$ and $\alpha, \beta$ (possibly-empty) sequences of non-terminals, the existence of the rules $B \rightarrow a\alpha$ and $B \rightarrow a\beta$ implies that $\alpha = \beta$. Intuitively, a simple grammar is one in which, if you have some non-terminal $B$ and want to generate from it a string beginning with a particular terminal symbol $a$, there is \emph{at most one} possible production rule you could apply. A \emph{simple deterministic context-free language} (or \emph{simple language}) is one that can be generated by a simple grammar. Interestingly, there is a connection between LL($k$) grammars (defined in Section~\ref{pref-query-parsers}) and simple grammars: the class of LL(1) grammars in GNF is equivalent to the class of simple grammars \citep{RN73}. It is important also to note that the class of simple languages is a proper superset of the class of regular languages; for example, as observed by Yokomori, the language $\{ab^nab^na \mid n \geq 0\}$ is a simple language, but is clearly non-regular.

For arbitrary $L \subseteq A^*$ and $x \in A^*$, the \emph{left derivative of $L$ with respect to $x$} is defined in Definition~\ref{definition-residual}. The \emph{right derivative of $L$ with respect to $x$} is the set $L / x = \{w \in A^* \mid wx \in L\}$. Combining these two definitions, we have the following (for arbitrary $u, v \in A^*$):

$$u \backslash L / v = \{x \in A^* \mid uxv \in L\}$$

Also, note that

$$x \in u \backslash L / v \Leftrightarrow (u, v) \in C_L(x)$$

where $C_L(x)$ is the distribution of $x$ as defined by \citet{RN49}.

A \emph{derivative oracle} answers a type of query known as a \emph{derivative query}, which takes as input two pairs of strings $(u_1, v_1), (u_2, v_2) \in A^* \times A ^*$, and returns the output
\begin{itemize}
    \item \y, if $u_1 \backslash L / v_1 = u_2 \backslash L / v_2$, or
    \item \n otherwise.
\end{itemize}

Let $X \subseteq A^*$ be the target simple language. The algorithm proposed by \citet{RN63} uses extended prefix and derivative queries to construct the \emph{characteristic cover graph} for some simple grammar $G$ generating $X$. Intuitively, in this characteristic cover graph, each node would correspond to a left derivative $a \backslash X$, for $a \in A$. An edge between two nodes $a \backslash X$ and $ab \backslash X$ ($a, b \in A$) would be labelled by $b$ -- essentially the terminal symbol that must be "cut off" from the beginning of a string $x \in a \backslash X$ to convert it to a string in $ab \backslash X$. In order to construct this graph, the algorithm constructs and maintains a tabular data structure similar to the observation table introduced by \citet{RN6}. This characteristic cover graph can then be converted into a simple grammar generating the target language $X$.

A key merit of this work is that it was the first to introduce and make use of extended prefix and derivative queries. However, the author does not adequately explain how either the extended prefix or derivative oracles could be simulated in practice, thus limiting the practical applicability of the algorithm.

\subsubsection{Learning from Examples and Membership Queries}\label{learning-grammars-examples-mqs}

\citet{RN51} and \citet{RN52} both use a set of positive examples and membership queries to learn a CFG representing a \emph{program input language}, namely the set of all valid inputs to a software program.

\citet{RN51} propose an algorithm known as GLADE, which uses an approach similar to traditional state-merging algorithms (see Section~\ref{learning-dfas-passive}). Specifically, GLADE begins with a representation which accepts \emph{exactly} the positive examples, and then progressively generalises it. The first phase of the algorithm generalises a given positive example into a \emph{regular expression}, which intuitively is an algebraic way of representing a regular language. It does so by applying two regular expression operators, namely \emph{Kleene star} and \emph{union}, to different substrings of the example string. The second phase then converts this regular expression into a CFG, which is able represent certain recursive properties of the target language that cannot be represented by the regular expression.

Conversely, \citet{RN52} present an algorithm called ARVADA which learns a parse tree for each positive example, from which the algorithm can then extract a CFG. Each parse tree is initially a ``flat'' tree of height 1, with each leaf node corresponding to a symbol in the example string. ARVADA then begins to add layers and structure to this ``flat'' tree by repeatedly replacing a sequence of sibling nodes with a new node, which then becomes the parent of those sibling nodes. Each time this operation occurs, the grammar produced is gradually generalised.

Both \citet{RN51} and \citet{RN52} evaluate their respective algorithms in terms of a metric known as \emph{F1-score} (Definition~\ref{definition-f1}). In terms of F1-score, GLADE is shown to outperform L* and RPNI on several benchmarks, whereas ARVADA is shown to outperform GLADE itself. Both GLADE and ARVADA are considered to be state-of-the-art algorithms for modelling program input languages.

\subsection{Summary of Algorithms}\label{summary}

In Table~\ref{tab:summary-table}, we summarise all the algorithms described in \Cref{chap:related-work}.

\renewcommand{\arraystretch}{1.25}

\begin{table*}[ht]
\centering
\begin{tabular}{ | p{0.3\textwidth}l|  p{0.1\textwidth} |  p{0.1\textwidth} | p{0.2\textwidth} | p{0.2\textwidth} | } 
    \hline
    \textbf{Citation} & \textbf{Name} & \textbf{Section} & \textbf{Learning Model} & \textbf{Representation} \\
    \hline
    \citet{RN6} & L* & \ref{learning-dfas-active-mat} & Active & DFA \\
    \hline
    \citet{RN71} & DT & \ref{learning-dfas-active-mat} & Active & DFA \\
    \hline
    \citet{RN38} & TTT & \ref{learning-dfas-active-mat} & Active & DFA \\
    \hline
    \citet{RN10} & -- & \ref{learning-dfas-active-other-query-types} & Active & DFA \\
    \hline
    \citet{RN20} & LCA & \ref{learning-dfas-active-other-query-types} & Active & DFA \\
    \hline
    \citet{RN40} & RPNI & \ref{learning-dfas-passive} & Passive & DFA \\
    \hline
    \citet{RN41} & EDSM & \ref{learning-dfas-passive} & Passive & DFA \\
    \hline
    \citet{RN12} & NL* & \ref{learning-nfas-active} & Active & NFA \\
    \hline
    \citet{RN43} & DeLeTe2 & \ref{learning-nfas-passive} & Passive & NFA \\
    \hline
    \citet{RN48} & -- & \ref{learning-grammars-active} & Active & CFG \\
    \hline
    \citet{RN49} & -- & \ref{learning-grammars-active} & Active & Congruential CFG \\
    \hline
    \citet{RN63} & -- & \ref{learning-grammars-active} & Active & Simple deterministic CFG \\
    \hline
    \citet{RN51} & GLADE & \ref{learning-grammars-examples-mqs} & Passive \& active & CFG \\
    \hline
    \citet{RN52} & ARVADA & \ref{learning-grammars-examples-mqs} & Passive \& active & CFG \\
    \hline
\end{tabular}
\caption{\label{tab:summary-table}Summary of language learning algorithms}
\end{table*}

\section{Method} \label{chap:method}

The PL* algorithm is a novel modification to the classical language learning algorithm L* \citep{RN6}. Whilst L* queries a minimally-adequate teacher (MAT), PL* queries a teacher that answers prefix and equivalence queries. In this section, we begin by providing an overview of the classical L* algorithm proposed by \citet{RN6} (Section~\ref{angluin-lstar-overview}). We then describe the teacher used by PL* (Section~\ref{teacher}), before providing a summary of the PL* learning algorithm itself (Section~\ref{learner}). Subsequently, we describe in more detail the observation table maintained by PL* -- the key difference between itself and Angluin's L* (Section~\ref{obs-table}). We then prove the correctness and termination of PL* (Section~\ref{correctness-termination}), and show that it always asks less queries than L* when both learners are tasked with learning the same non-dense language (Section~\ref{comparison-with-lstar}). Finally, we discuss how PL* can be adapted to allow it to be used in a more practical setting, namely one in which an equivalence oracle is not available (Section~\ref{random-sampling-oracle}).

\subsection{Overview of Angluin's L*}\label{angluin-lstar-overview}

L* is a classical language learning algorithm proposed by \citet{RN6}, which learns the minimal deterministic finite automaton (DFA) recognising an unknown regular target language $X \subseteq A^*$, for $A$ a finite alphabet given as input to L*. During the learning process, L* is given access to a \emph{minimally-adequate teacher} (MAT), a teacher which can answer only \emph{membership queries} and \emph{equivalence queries}. In short, a membership query indicates whether or not a given string is in the target language, whilst an equivalence query indicates whether or not a given hypothesis recognises exactly the target language. Both query types are defined more formally in Section~\ref{learning-models-active}.

During its execution, L* maintains an observation table $(S_{\pre}, S_{\suff}, T)$ which stores information about a finite set of strings over $A$ -- namely whether or not each string in the set is a member of $X$. Below we describe each of the components of an L* observation table:

\begin{itemize}
    \item $S_{\pre}$ is a non-empty, finite, prefix-closed set of strings. (A set is \emph{prefix-closed} if and only if, for every member of the set, every one of its prefixes is also a member of the set.)
    \item $S_{\suff}$ is a non-empty, finite, suffix-closed set of strings. (A set is \emph{suffix-closed} if and only if, for every member of the set, every one of its suffixes is also a member of the set.)
    \item $T$ is a function mapping $((S_{\pre} \cup S_{\pre} \cdot A) \cdot S_{\suff})$ to $\{0, 1\}$. The value of $T(u)$ is set to 1 if $u \in X$, and 0 otherwise.
\end{itemize}

Intuitively, rows labelled by elements of $S_{\pre}$ correspond to potential states in the hypothesis automaton which L* will construct. The elements of $S_{\suff}$ are used to distinguish the states of the hypothesis automaton. Rows labelled by elements of $(S_{\pre} \cdot A)$ are used when constructing the hypothesis automaton's transition function.

The L* observation table may be visualised as a two-dimensional array, wherein each row is labelled by a string in $(S_{\pre} \cup S_{\pre} \cdot A)$, and each column by a string in $S_{\suff}$. The cell in row $s$ and column $e$ has value $T(se)$. For arbitrary $s \in (S_{\pre} \cup S_{\pre} \cdot A)$, let $\row_s$ be a function mapping $S_{\suff}$ to $\{0,1\}$, where for each $e \in S_{\suff}$, $\row_s(e) = T(se)$. Figure~\ref{fig:obs-table-angluin} provides a visualisation of this.

\begin{figure*}
\centering
    \includegraphics[width=\columnwidth*7/8]{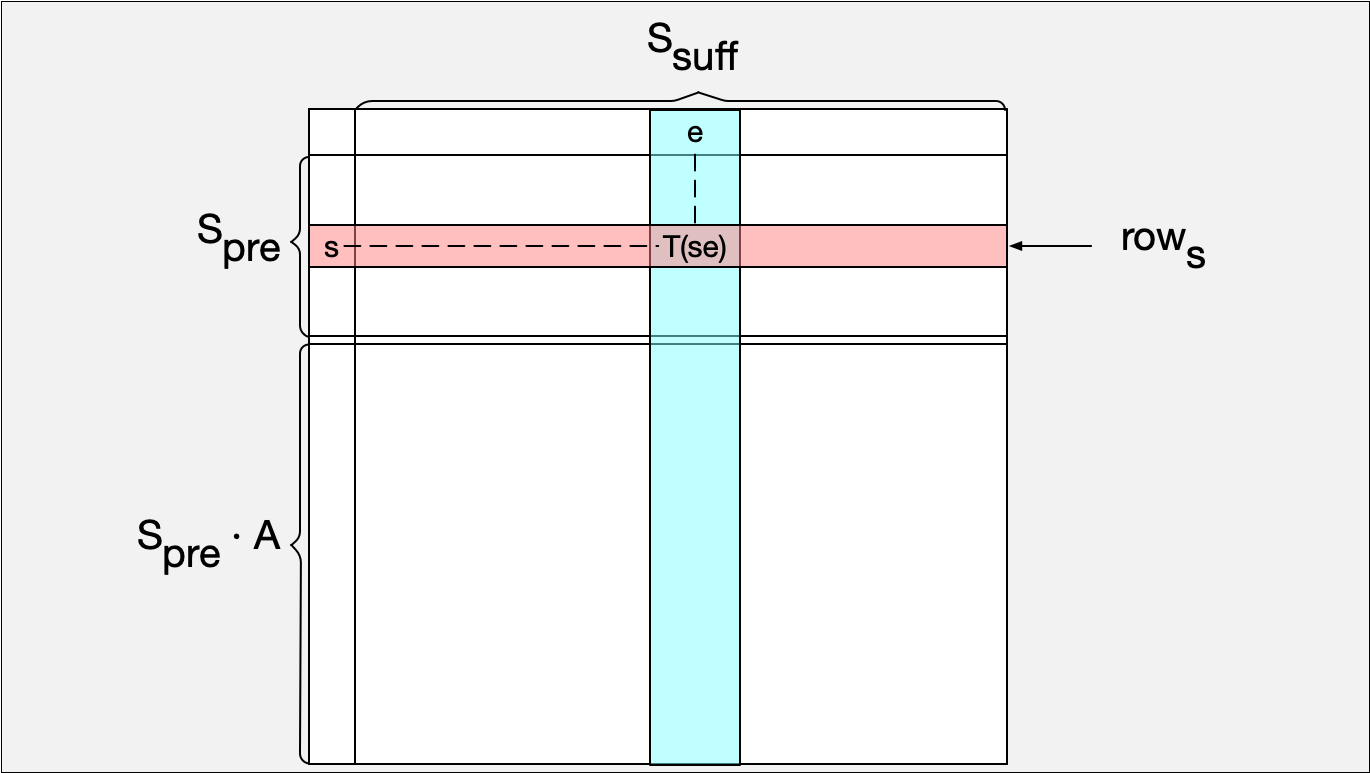}
\caption{Visualisation of an L* observation table}\label{fig:obs-table-angluin}
\end{figure*}

We say that two observation tables $(S_{\pre}, S_{\suff}, T)$ and $(S_{\pre}', S_{\suff}', T')$ are \emph{equivalent} if $S_{\pre} = S_{\pre}'$, $S_{\suff} = S_{\suff}'$ and $T = T'$.

Initially, $S_{\pre}$ and $S_{\suff}$ are both set to $\{\varepsilon\}$. To determine $T$, the learner asks membership queries for $\varepsilon$ and then every $a \in A$. Subsequently, the main loop of L* begins. The learner first checks whether the current observation table satisfies two properties -- \emph{closedness} (meaning that, for every $s \in S_{\pre}$ and $a \in A$, $\row_{sa} = \row_{s'}$ for some $s' \in S_{\pre}$) and \emph{consistency} (meaning that, for any $s_1, s_2 \in S_{\pre}$ such that $\row_{s_1} = \row_{s_2}$, $\row_{s_1a} = \row_{s_2a}$ for all $a \in A$). If the table is found to be not closed, L* identifies some $s \in S_{\pre}, a \in A$ such that $\row_{sa} \neq \row_{s'}$ for every $s' \in S_{\pre}$, and then adds $sa$ to $S_{\pre}$, extending $T$ accordingly. If the table is found to be not consistent, L* identifies some $s_1, s_2 \in S_{\pre}, a \in A, e \in S_{\suff}$ such that $\row_{s_1} = \row_{s_2}$ and $\row_{s_1a}(e) \neq \row_{s_2a}(e)$, and then adds $ae$ to $S_{\suff}$, extending $T$ accordingly. Once the table has been made both closed and consistent, the learner uses it to construct a hypothesis DFA $\mathcal{A}$. It then makes an equivalence query with $\mathcal{A}$. If the teacher responds with the output \y, L* terminates immediately returning $\mathcal{A}$. Otherwise, the learner processes the counterexample provided by the teacher by adding it and all its prefixes to $S_{\pre}$, extending $T$ accordingly. The main loop then recommences. \citet{RN6} shows that, in time polynomial in the size of the minimal DFA recognising $X$ and the length of the longest counterexample returned by the teacher, the learning algorithm will terminate returning the minimal DFA recognising $X$.

\subsection{The PL* Algorithm} \label{sect:pl*}

PL* is a modification of L* which uses prefix queries in place of membership queries. Intuitively, PL* exploits the additional information given by prefix queries over membership queries in two main ways: (1) by automatically filling in certain cells in its observation table without needing to make additional queries, and (2) by explicitly-storing fewer rows in its observation table. Hence, PL* in general uses less queries than L*, as well as less storage space.

Let $A$ be some finite alphabet given as input to PL*. Let the unknown regular target language which our learner is trying to identify be $X \subseteq A^*$.

\subsubsection{The Teacher} \label{teacher}

Instead of a MAT \citep{RN6} which answers membership queries and equivalence queries, our learning algorithm PL* relies on a teacher which answers \emph{prefix} queries and equivalence queries. In other words, our algorithm has access to two oracles: a \emph{prefix oracle} and an \emph{equivalence oracle}. A prefix oracle is able to answer \emph{prefix queries} (Definition~\ref{definition-pref-query}), and likewise an equivalence oracle is able to answer \emph{equivalence queries} (defined in Section~\ref{learning-models-active}).

\subsubsection{The Learner PL*}\label{learner}

The Learner PL* behaves in a very similar manner to L* \citep{RN6}, with the two main differences being that it maintains a different kind of observation table (which we will describe in Section~\ref{obs-table}) and uses prefix queries instead of membership queries.

The pseudocode for PL* is presented in Algorithm \ref{pl*} (in a similar style to the L* pseudocode presented by \citet{RN37}).

\begin{algorithm}
\caption{The PL* algorithm}\label{pl*}
\begin{flushleft}
    \hspace*{\algorithmicindent} \textbf{Input:} A finite alphabet $A$, prefix oracle, and equivalence oracle\\
    \hspace*{\algorithmicindent} \textbf{Output:} A deterministic finite automaton $\A$
\end{flushleft}
\begin{algorithmic}[1]
\STATE initialise $S_{\stored}$ and $S_{\deadpref}$ to $\{\}$
\STATE initialise $S_{\pre}$ and $S_{\suff}$ to $\{\varepsilon\}$, and update $S_{\stored}$, $S_{\deadpref}$ and $T$ accordingly
\REPEAT
    \WHILE{$\obstable$ is not closed or not consistent}
        \IF{$\obstable$ is not closed}
            \STATE find $s \in S_{\pre}$ and $a \in A$ such that $sa \in S_{\stored}$ and $\row_{sa}$ is different from $\row_{s'}$ for all ${s'} \in S_{\pre} \cap S_{\stored}$
            \STATE add $s \cdot a$ to $S_{\pre}$
            \STATE update $S_{\stored}$, $S_{\deadpref}$ and $T$
        \ENDIF
        \IF{$\obstable$ is not consistent}
            \STATE find $s_1$ and $s_2$ in $S_{\pre} \cap S_{\stored}$, $a \in A$ and $e \in S_{\suff}$ such that $\row_{s_1} = \row_{s_2}$ and $r_{s_1a}(e) \neq r_{s_2a}(e)$ \algorithmiccomment{The function $r$ is defined in Definition~\ref{definition-consistency}}
            \STATE add $a \cdot e$ to $S_{\suff}$
            \STATE update $S_{\stored}$, $S_{\deadpref}$ and $T$
        \ENDIF
    \ENDWHILE
    \STATE $\A \gets$ automaton constructed from $\obstable$
    \STATE make the conjecture $\A$
    \IF{the teacher replies with a counterexample $w$}
        \STATE add $w$ and all its prefixes to $S_{\pre}$, in ascending lexicographic order
        \STATE update $S_{\stored}$, $S_{\deadpref}$ and $T$
    \ENDIF
\UNTIL{the teacher replies \y to the conjecture $\A$}
\STATE \RETURN $\A$
\end{algorithmic}
\end{algorithm}

\subsubsection{The Observation Table}\label{obs-table}

One of the key differences between PL* (Algorithm~\ref{pl*}) and L* (Section~\ref{angluin-lstar-overview}) lies in the structure of their observation tables. Intuitively, PL* observation tables are like L* tables except they don't explicitly store rows for all the dead prefixes in $(S_{\pre} \cup S_{\pre} \cdot A)$. Instead, they introduce $S_{\stored}$ and $S_{\deadpref}$.

More formally, a PL* observation table is a 5-tuple \\
$(S_{\pre}, S_{\suff}, S_{\stored}, S_{\deadpref}, T)$. Below we describe each of these five components: 

\begin{itemize}
    \item $S_{\pre}$ is a non-empty, finite, prefix-closed set of strings. (A set is \emph{prefix-closed} if and only if, for every member of the set, every one of its prefixes is also a member of the set.)
    \item $S_{\suff}$ is a non-empty, finite, suffix-closed set of strings. (A set is \emph{suffix-closed} if and only if, for every member of the set, every one of its suffixes is also a member of the set.)
    \item $S_{\stored}$ is a non-empty subset of $(S_{\pre} \cup S_{\pre} \cdot A)$ containing all the strings that label explicitly-stored rows in the observation table. Intuitively, when a string $s$ is added to $S_{\pre}$, PL* considers adding each $x \in \{s\} \cup \{sa \mid a \in A\}$ to $S_{\stored}$. However, for certain $x \in \{s\} \cup \{sa \mid a \in A\}$, PL* may deem it unnecessary to add $x$ to $S_{\stored}$, based on the criteria given in Section~\ref{obs-table-update-s-pre-standard}.
    \item $S_{\deadpref}$ is a possibly-empty, finite set containing all the \emph{minimal} dead prefixes of $X$ that have been identified so far (through prefix queries).
    \begin{itemize}
        \item \emph{Minimal} means that no string in $S_{\deadpref}$ is a proper prefix of another string in \\ $S_{\deadpref}$. 
        \item Recall from Definition~\ref{definition-string-prop} that a \emph{dead prefix} is a string $x \in A^*$ which is not in $X$, and furthermore cannot be extended to a string in $X$.
    \end{itemize}
    \item $T$ is a function mapping $(S_{\stored} \cdot S_{\suff})$ to $\{0, 1\}$. The value of $T(u)$ is set to 1 if $u \in X$, and 0 otherwise.
\end{itemize}

To make this more concrete, we now give an example of a PL* observation table and the equivalent L* observation table.

\begin{example}\label{example:obs-table}
    In Figure~\ref{fig:obs-table-ex}, we see a PL* observation table on the left, and the equivalent L* observation table on the right. Though both tables have $S_{\pre} = \{\varepsilon, 0, 01, 001\}$ and $S_{\suff} = \{\varepsilon, 1\}$, the PL* table has fewer rows than the L* table. This is because $S_{\deadpref}$ in the PL* observation table contains $00$, implying that, for any extension $s$ of $00$, $\row_s$ would contain only 0s (since $s$, like $00$, would also be a dead prefix of $X$). Thus, the PL* observation table need not explicitly store a row for any such $s$. Conversely, the L* observation table explicitly stores a row for \emph{every} string in $(S_{\pre} \cup S_{\pre} \cdot A)$, including $001$, $0010$ and $0011$ (highlighted in red) which are all extensions of $00$.
    
\begin{figure*}[h]
\centering
    \includegraphics[width=0.75\textwidth]{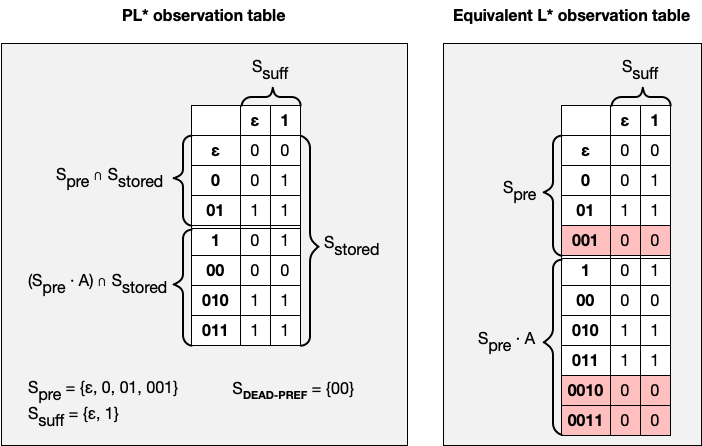}
    \caption{Example of a PL* observation table and equivalent L* observation table}\label{fig:obs-table-ex}
    \end{figure*}
\end{example}

Intuitively, in a PL* observation table, rows labelled by elements of $(S_{\pre} \cap S_{\stored})$ correspond to potential states in the hypothesis automaton which PL* will construct. The elements of $S_{\suff}$ are used to distinguish the states of the hypothesis automaton. Rows labelled by elements of $((S_{\pre} \cdot A) \cap S_{\stored})$ are used when constructing the hypothesis automaton's transition function.

A PL* observation table initially has $S_{\pre} = S_{\suff} = \{\varepsilon\}$ and $S_{\stored} = S_{\deadpref} = \{\}$. The sets $S_{\pre}$ and $S_{\suff}$ are augmented during the learning process, which in turn may trigger updates to $S_{\stored}$, $T$ and $S_{\deadpref}$.

\noindent\textbf{Outline}\label{obs-table-outline}

After we formally define the L* observation table equivalent to a given PL* observation table (Section~\ref{obs-table-equivalence-pl*-lstar*}), the remainder of Section~\ref{obs-table} will be dedicated to filling in some of the gaps in Algorithm~\ref{pl*}. These gaps, together with the corresponding line numbers in Algorithm~\ref{pl*}, are listed below:

\begin{itemize}
    \item \textbf{How does PL* update $S_{\deadpref}$ when a prefix query is asked?} (Section~\ref{obs-table-ask-pref-query}) Prefix queries may be asked when strings are added to $S_{\pre}$ or $S_{\suff}$ and the table is updated accordingly i.e., in lines 2, 8, 13 and 20.
    \item \textbf{How does PL* determine whether or not $x$ is a member of $X$?} (Section~\ref{obs-table-member}) This is done whenever strings are added to $S_{\pre}$ or $S_{\suff}$ and $T$ is updated accordingly i.e., in lines 2, 8, 13 and 20.
    \item \textbf{How does PL* determine whether $x$ is a member, live prefix, or dead prefix of $X$?} (Section~\ref{obs-table-member-live-dead-pref}) This is done whenever strings are added to $S_{\pre}$ and the table is updated accordingly i.e., in lines 2, 8 and 20.
    \item \textbf{How does PL* update $S_{\stored}$, $S_{\deadpref}$ and $T$ after $S_{\pre}$ has been augmented (i.e., after strings have been added to $S_{\pre}$)?} (Section~\ref{obs-table-update-s-pre-standard}) This is done in lines 2, 8 and 20. (Section~\ref{obs-table-update-s-pre-optimised} describes an second, potentially more optimised procedure for performing these updates.)
    \item \textbf{What does it mean for an observation table to be closed and consistent?} (Section~\ref{obs-table-closed-consistent}) These properties are referenced in lines 4, 5 and 10.
    \item \textbf{How does PL* construct an automaton from a closed and consistent observation table?} (Section~\ref{obs-table-automata-construction}) This needs to be done in line 16.
\end{itemize}

We conclude by defining the notion of \emph{size} for both PL* and L* observation tables (Section~\ref{obs-table-size}).

\noindent\textbf{Equivalence between L* and PL* Observation Tables}\label{obs-table-equivalence-pl*-lstar*}

Though Example~\ref{example:obs-table} gave an example of equivalent PL* and L* observation tables, we now formally define the L* table equivalent to a given PL* table.

There are two key differences between the L* observation table \\
$(S_{\pre}, S_{\suff}, T_{\text{L*}})$ and PL* observation table \\
$(S_{\pre}, S_{\suff}, S_{\stored}, S_{\deadpref}, T_{\text{PL*}})$ which share the same $S_{\pre}$ and $S_{\suff}$:

\begin{enumerate}
    \item The PL* observation table has two additional components not found in the L* observation table -- $S_{\stored}$ and $S_{\deadpref}$.

    \item $T_{\text{L*}}$ is a function mapping from $((S_{\pre} \cup S_{\pre} \cdot A) \cdot S_{\suff})$ to $\{0, 1\}$, whereas $T_{\text{PL*}}$ is a function mapping from $(S_{\stored} \cdot S_{\suff})$ to $\{0, 1\}$.
\end{enumerate}

Every time PL* updates its observation table (i.e., by augmenting $S_{\pre}$ or $S_{\suff}$, and updating $S_{\stored}$, $T$ and $S_{\deadpref}$ accordingly), it will do so in a manner which preserves the following property: 

\begin{property}[Dead Prefix Property]\label{property-dead-pref}
    For any $s \in (S_{\pre} \cup S_{\pre} \cdot A)$, if $s \not\in S_{\stored}$, then $s$ is an extension of a string in $S_{\deadpref}$.
\end{property}

\begin{definition}[L* Table Equivalent to a PL* Table]\label{definition-table-eq}
    Let \\
    $(S_{\pre}, S_{\suff}, S_{\stored}, S_{\deadpref}, T_{\text{PL*}})$ be a PL* observation table. Then the equivalent L* observation table \\
    $(S_{\pre}, S_{\suff}, T_{\text{L*}})$ has $T_{\text{L*}}$ defined as follows:

    \begin{align*}
        T_{\text{L*}}(se) &= \begin{cases}
            T_{\text{PL*}}(se), &\text{if $s \in S_{\stored}$}\\
            0, &\text{otherwise}\\
        \end{cases}
    \end{align*}
\end{definition}

The correctness of Definition~\ref{definition-table-eq} follows from the dead prefix property.

\noindent\textbf{Asking a Prefix Query}\label{obs-table-ask-pref-query}

Whenever PL* wishes to ask a prefix query, it calls Algorithm~\ref{ask-pref-query-algo}, which updates $S_{\deadpref}$ if necessary.

\begin{algorithm}
\caption{Ask a prefix query with $x$}\label{ask-pref-query-algo}
\begin{flushleft}
    \hspace*{\algorithmicindent} \textbf{Input:} $x \in A^*$\\
    \hspace*{\algorithmicindent} \textbf{Output:} \member, \livepref or \deadpref
\end{flushleft}
\begin{algorithmic}[1]
    \STATE ask the teacher a prefix query with $x$
    \IF{the answer received from the teacher is \deadpref}
        \STATE add $x$ to $S_{\deadpref}$
    \ENDIF
    \RETURN the answer received from the teacher
\end{algorithmic}
\end{algorithm}

\noindent\textbf{Determining a String's Membership in the Target Language}\label{obs-table-member}

To determine whether or not some $x \in A^*$ is a member of $X$, PL* will analyse the table itself, and make a prefix query if needed, as described in Algorithm~\ref{member-algo}.

\begin{algorithm}
\caption{Determine whether or not $x \in X$}\label{member-algo}
\begin{flushleft}
    \hspace*{\algorithmicindent} \textbf{Input:} $x \in A^*$\\
    \hspace*{\algorithmicindent} \textbf{Output:} $0$ or $1$ \algorithmiccomment{$0$ indicates that $x \not\in X$; $1$ indicates that $x \in X$}
\end{flushleft}
\begin{algorithmic}[1]
    \IF{$x = se$, where $s \in S_{\stored}, e \in S_{\suff}$}
        \RETURN $\row_s(e)$
    \ELSIF{$x = sy$, for $s \in S_{\deadpref}$ and $y \in A^*$}
        \RETURN $0$ \algorithmiccomment{$x$ is the extension of a dead prefix, so clearly not in $X$}
    \ELSE
        \STATE call Algorithm~\ref{ask-pref-query-algo} with input $x$ \algorithmiccomment{asking a prefix query}
        \IF{answer received was \member}
            \RETURN $1$
        \ELSE
            \RETURN $0$ \algorithmiccomment{answer received was either \livepref or \deadpref}
        \ENDIF
    \ENDIF
\end{algorithmic}
\end{algorithm}

\noindent\textbf{Member, Live Prefix, or Dead Prefix?}\label{obs-table-member-live-dead-pref}

To determine whether some $x \in A^*$ is a member, live prefix, or dead prefix of $X$, PL* will analyse the table itself, and make a prefix query if needed, as described in Algorithm~\ref{member-live-dead-pref-algo}.

\begin{algorithm}
\caption{Determine whether $x$ is a member, live prefix, or dead prefix of $X$}\label{member-live-dead-pref-algo}
\begin{flushleft}
    \hspace*{\algorithmicindent} \textbf{Input:} $x \in A^*$\\
    \hspace*{\algorithmicindent} \textbf{Output:} \member, \livepref or \deadpref
\end{flushleft}
\begin{algorithmic}[1]
    \IF{$x = sy$, for $s \in S_{\deadpref}$ and $y \in A^*$}
        \RETURN \deadpref
    \ELSIF{$x = se$, where $s \in S_{\stored}, e \in S_{\suff}$}
        \IF{$\row_s(e) = 0$}
            \RETURN \livepref \algorithmiccomment{can't be a dead prefix as it's not an extension of a string in $S_{\deadpref}$}
        \ELSE
            \RETURN \member
        \ENDIF
    \ELSE
        \STATE call Algorithm~\ref{ask-pref-query-algo} with input $x$ \algorithmiccomment{asking a prefix query}
        \RETURN the answer received
    \ENDIF
\end{algorithmic}
\end{algorithm}

We now prove a useful lemma about Algorithm~\ref{member-live-dead-pref-algo}.

\begin{lemma}\label{lemma-dead-pref-identified}
    For some $x \in A^*$, if Algorithm~\ref{member-live-dead-pref-algo} determines that $x$ is a dead prefix of $X$, then, by the end of its execution, $x$ will be an extension of a string in $S_{\deadpref}$.
\end{lemma}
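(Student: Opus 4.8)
The plan is to trace through the three mutually exclusive branches of Algorithm~\ref{member-live-dead-pref-algo} and show that, in every branch that returns \deadpref, the string $x$ ends up being an extension of some element of $S_{\deadpref}$. There are exactly three cases, corresponding to the three top-level branches of the algorithm, so the argument is a straightforward case analysis; the work is in handling the last branch correctly.

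First I would handle the first branch: if $x = sy$ for some $s \in S_{\deadpref}$ and $y \in A^*$, the algorithm returns \deadpref, and by definition of ``extension'' (Definition~\ref{definition-ext}), $x$ is already an extension of $s \in S_{\deadpref}$, so the conclusion holds immediately and $S_{\deadpref}$ is not even modified. Second, in the \textbf{elsif} branch where $x = se$ with $s \in S_{\stored}$, $e \in S_{\suff}$: here the algorithm returns either \livepref or \member, never \deadpref, so the claim is vacuously true for this branch. (It is worth remarking, as the inline comment in the algorithm does, that $x$ genuinely cannot be a dead prefix here, because the first branch did not fire, so $x$ is not an extension of any string in $S_{\deadpref}$; but this observation is not strictly needed for the lemma, only the fact that the branch does not output \deadpref.)

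The main case — and the one I expect to be the only real content — is the final \textbf{else} branch, where $x$ is neither an extension of a string in $S_{\deadpref}$ nor of the form $se$ with $s \in S_{\stored}, e \in S_{\suff}$. Here the algorithm calls Algorithm~\ref{ask-pref-query-algo} on $x$ and returns its answer. So Algorithm~\ref{member-live-dead-pref-algo} returns \deadpref precisely when the prefix query on $x$ returns \deadpref. By inspection of Algorithm~\ref{ask-pref-query-algo}, whenever the teacher's answer is \deadpref the algorithm executes the line that adds $x$ to $S_{\deadpref}$. Hence after this call completes, $x \in S_{\deadpref}$, and since every string is (trivially) an extension of itself (take $z = \varepsilon$ in Definition~\ref{definition-ext}), $x$ is an extension of a string in $S_{\deadpref}$, as required. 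The only subtlety to be careful about is that ``adds $x$ to $S_{\deadpref}$'' in Algorithm~\ref{ask-pref-query-algo} may in the full implementation be accompanied by a minimality cleanup (dropping any element of $S_{\deadpref}$ that $x$ is a proper prefix of, per the ``minimal'' condition in the definition of $S_{\deadpref}$); but such a cleanup only ever removes \emph{proper extensions} of $x$ and never removes a proper prefix of $x$, so $x$ itself (or a proper prefix of $x$, if one was already present) remains in $S_{\deadpref}$, and $x$ is still an extension of some member of $S_{\deadpref}$. With the three cases dispatched, the lemma follows.
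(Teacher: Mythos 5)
Your proof is correct and follows essentially the same route as the paper's: a case split on whether \deadpref is returned in the first branch (where $x$ is already an extension of a member of $S_{\deadpref}$) or via the call to Algorithm~\ref{ask-pref-query-algo} (which adds $x$ to $S_{\deadpref}$). Your additional remarks --- the vacuity of the middle branch and the robustness of the conclusion to any minimality cleanup of $S_{\deadpref}$ --- are sound and slightly more careful than the paper's two-line argument, but do not change the substance.
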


\begin{proof}
    There are two places where Algorithm~\ref{member-live-dead-pref-algo} might identify $x$ as being a dead prefix; in either line 1 or line 10. If in line 1, then clearly we are done. If in line 10, then Algorithm~\ref{ask-pref-query-algo} had been called with input $x$, and so it must have added $x$ to $S_{\deadpref}$.
\end{proof}

\noindent\textbf{Updating $S_\stored$, $S_\deadpref$ and $T$ After Augmenting $S_\pre$}\label{obs-table-update-s-pre-standard}

Let $s$ be a string which has just been added to $S_{\pre}$. For first $x = s$, and then each $x \in \{sa \mid a \in A\}$, PL* runs Algorithm~\ref{update-s-pre-algo} with input $x$.

\begin{algorithm}
\caption{Update $S_{\stored}$, $S_{\deadpref}$ and $T$ after adding $x$ to $S_{\pre}$}\label{update-s-pre-algo}
\begin{flushleft}
    \hspace*{\algorithmicindent} \textbf{Input:} $x \in A^*$\\
\end{flushleft}
\begin{algorithmic}[1]
    \STATE determine whether $x$ is a member, live prefix or dead prefix of $X$ using Algorithm~\ref{member-live-dead-pref-algo}
    \IF{$x \in S_{\pre}$, $x$ is a dead prefix of $X$, and no $s \in S_{\stored}$ is a dead prefix of $X$}
        \STATE add $x$ to $S_{\stored}$ and update $S_{\deadpref}$ and $T$ accordingly using prefix queries (PQs)
    \ELSIF{$x \in S_{\pre}$, $x$ is a dead prefix of $X$, and some $s \in S_{\stored} \cap S_{\pre}$ is a dead prefix of $X$}
        \STATE do nothing
    \ELSIF{$x \in S_{\pre}$, $x$ is a dead prefix of $X$, and, though there is no $s \in S_{\stored} \cap S_{\pre}$ which is a dead prefix, there \emph{is} an $s' \in (S_{\stored} \cap (S_{\pre} \cdot A))$ which is a dead prefix}
        \STATE remove $s'$ from $S_{\stored}$
        \STATE add $x$ to $S_{\stored}$
        \STATE update $S_{\deadpref}$ and $T$ accordingly using PQs
    \ELSIF{$x \in (S_{\pre} \cdot A) \setminus S_{\pre}$, $x$ is a dead prefix of $X$, and no $s \in S_{\stored}$ is a dead prefix of $X$}
        \STATE add $x$ to $S_{\stored}$ and update $S_{\deadpref}$ and $T$ accordingly using PQs
    \ELSIF{$x \in (S_{\pre} \cdot A) \setminus S_{\pre}$, $x$ is a dead prefix of $X$, and some $s \in S_{\stored} $ is a dead prefix of $X$}
        \STATE do nothing
    \ELSE
        \STATE add $x$ to $S_{\stored}$ and update $S_{\deadpref}$ and $T$ accordingly using PQs\algorithmiccomment{$x \in \prefix(X)$}
    \ENDIF
\end{algorithmic}
\end{algorithm}

We now briefly explain how Algorithm~\ref{update-s-pre-algo} ensures the preservation of the dead prefix property (Property~\ref{property-dead-pref}). Recall that the dead prefix property refers to the following: For any $s \in (S_{\pre} \cup S_{\pre} \cdot A)$, if $s \not\in S_{\stored}$, then $s$ is an extension of a string in $S_{\deadpref}$. There are two ways a string $s \in (S_{\pre} \cup S_{\pre} \cdot A)$ ends up not being in $S_{\stored}$: it is never added in the first place (lines 5 and 13), or it is removed (line 7). If we reach either line 5 or 13, then we know that Algorithm~\ref{member-live-dead-pref-algo} had identified $s$ as being a dead prefix. From Lemma~\ref{lemma-dead-pref-identified}, we know that $s$ must be an extension of some string in $S_{\deadpref}$, as required. In line 7, the string $s'$ removed from $S_{\stored}$ is a dead prefix. Before $s'$ had been originally added to $S_{\stored}$, Algorithm~\ref{member-live-dead-pref-algo} would have identified it as being a dead prefix -- and so, from Lemma~\ref{lemma-dead-pref-identified}, $s'$ must be an extension of some string in $S_{\deadpref}$, as required.

\noindent\textbf{Updating $S_\stored$, $S_\deadpref$ and $T$ After Augmenting $S_\pre$ -- Optimised}\label{obs-table-update-s-pre-optimised}

Though the observation table behaviour described in Section~\ref{obs-table-update-s-pre-standard} is what we will analyse in Sections~\ref{correctness-termination} and~\ref{comparison-with-lstar}, in Section~\ref{chap:empirical-evaluation} we will also evaluate a second, more optimised version of PL*, which behaves slightly differently. This second version of PL* will be the subject of this section. We will refer to the version of PL* described in Section~\ref{obs-table-update-s-pre-standard} as ``standard PL*'', and the version described in this section as ``optimised PL*''.

Let $s$ be a string which has just been added to $S_{\pre}$. For first $x = s$, and then each $x \in \{sa \mid a \in A\}$, optimised PL* runs Algorithm~\ref{update-s-pre-optimised-algo} with input $x$.

\begin{algorithm}
\caption{Update $S_{\stored}$, $S_{\deadpref}$ and $T$ after adding $x$ to $S_{\pre}$ -- optimised}\label{update-s-pre-optimised-algo}
\begin{flushleft}
    \hspace*{\algorithmicindent} \textbf{Input:} $x \in A^*$\\
\end{flushleft}
\begin{algorithmic}[1]
    \STATE determine whether $x$ is a member, live prefix or dead prefix of $X$ using Algorithm~\ref{member-live-dead-pref-algo}
    \IF{$x \in S_{\pre}$, $x$ is a dead prefix of $X$, and no $s \in S_{\stored}$ is a dead prefix of $X$}
        \STATE add $x$ to $S_{\stored}$ and update $S_{\deadpref}$ and $T$ accordingly using prefix queries (PQs)
    \ELSIF{$x \in S_{\pre}$, $x$ is a dead prefix of $X$, and some $s \in S_{\stored} \cap S_{\pre}$ is a dead prefix of $X$}
        \STATE do nothing
    \ELSIF{$x \in (S_{\pre} \cdot A) \setminus S_{\pre}$, $x$ is a dead prefix of $X$, and no $s \in S_{\stored}$ is a dead prefix of $X$}
        \STATE add $x$ to $S_{\pre}$ and $S_{\stored}$ and update $S_{\deadpref}$ and $T$ accordingly using PQs
    \ELSIF{$x \in (S_{\pre} \cdot A) \setminus S_{\pre}$, $x$ is a dead prefix of $X$, and some $s \in S_{\stored} $ is a dead prefix of $X$}
        \STATE do nothing
    \ELSE
        \STATE add $x$ to $S_{\stored}$ and update $S_{\deadpref}$ and $T$ accordingly using PQs \algorithmiccomment{$x \in \prefix(X)$}
    \ENDIF
\end{algorithmic}
\end{algorithm}

In Algorithm~\ref{update-s-pre-optimised-algo}, we only consider five cases, whereas in Algorithm~\ref{update-s-pre-algo} we had six. This is because here we omit the following case: $x \in S_{\pre}$, $x$ is a dead prefix of $X$, and, though there is no $s \in S_{\stored} \cap S_{\pre}$ which is a dead prefix, there \emph{is} an $s' \in (S_{\stored} \cap (S_{\pre} \cdot A))$ which is a dead prefix. The reason why we are able to omit this case here is that, whenever we add some dead prefix $s$ to $S_{\stored}$ (lines 3 and 7), $s$ is always in $S_{\pre}$ by the time the addition step is complete. So, in subsequent addition steps, every $x \in S_{\stored}$ which is a dead prefix will also be in $S_{\pre}$. Thus, if the omitted case were to be included, it would never arise.

We hypothesise that this version of PL* is more optimal than the version described in Section~\ref{obs-table-update-s-pre-standard}. Intuitively, if we are in line 7, we have identified that some string $x \in A^*$ is a dead prefix of $X$. Thus, the target DFA must have some state whose language is $\varnothing$ -- since otherwise, for every dead prefix $s \in A^*$ of $X$, the target DFA would wrongly accept some extension of $s$. Due to the way that a hypothesis DFA is constructed (see Section~\ref{obs-table-automata-construction} for details), it would only have a state whose language is $\varnothing$ if there is some dead prefix in $S_{\pre}$ (the reasoning here is fairly informal; we provide a more formal statement and justification of this in Lemma~\ref{lemma-some-bad-prefix-row} and its proof). Thus, it seems sensible to \emph{immediately} add $x$ to $S_{\pre}$ (as optimised PL* does) rather than waiting for $x$ or some other dead prefix to be added to $S_{\pre}$ some time later (which is what would happen in standard PL*). In Section~\ref{empirical-evaluation-exact}, we compare the performance of standard and optimised PL* to determine if the latter is indeed more optimal in practice. However, for the remainder of Section~\ref{obs-table}, and in Sections~\ref{learner}, \ref{correctness-termination} and \ref{comparison-with-lstar}, we will be analysing only \emph{standard} (not optimised) PL*. 

\noindent\textbf{Closedness and Consistency}\label{obs-table-closed-consistent}

During its execution, PL* always tries to maintain a \emph{closed} and \emph{consistent} observation table. We now define these two notions for a PL* observation table, and show that they are equivalent to the definitions given by \citet{RN6} for an L* observation table.

Let $(S_{\pre}, S_{\suff}, S_{\stored}, S_{\deadpref}, T_{\text{PL*}})$ be a PL* observation table, and $(S_{\pre}, S_{\suff}, T_{\text{L*}})$ the equivalent L* observation table (Definition~\ref{definition-table-eq}). Let $\row_s$ denote a function mapping $S_{\suff}$ to $\{0, 1\}$, where $\row_s(e) = T_{\text{L*}}(se)$ (note that $T_{\text{L*}}(se) = T_{\text{PL*}}(se)$ for every $e \in S_{\suff}$ if $s \in S_{\stored}$). Figure~\ref{fig:obs-table-closed-consistent-setup} provides a visualisation of this setup.

\begin{figure*}[t]
\centering
\includegraphics[width=\textwidth]{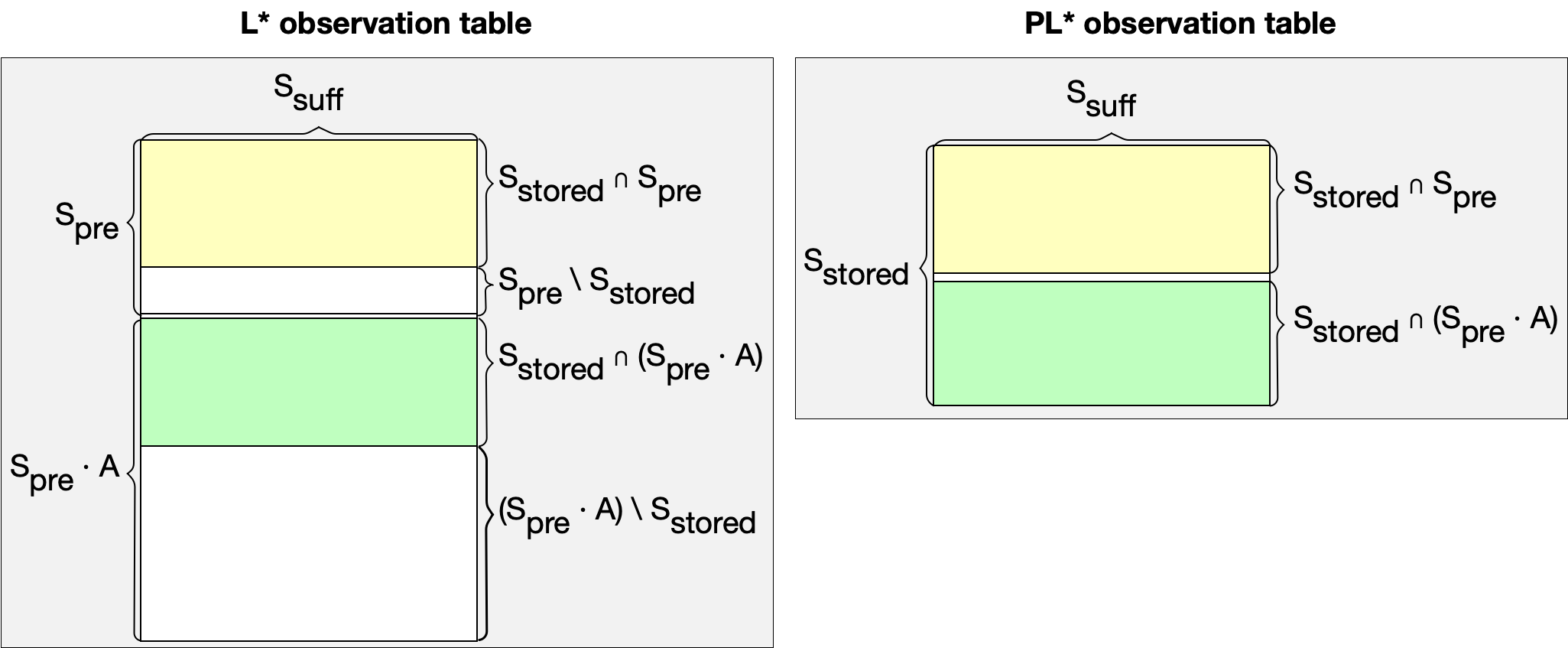}
\caption{Visualisation of our setup}\label{fig:obs-table-closed-consistent-setup}
\end{figure*}

\begin{definition}[Closedness]\label{definition-closedness}
    We say that a PL* observation table is \emph{closed} if, for all $s \in S_{\pre}, a \in A$ such that $sa \in S_{\stored}$, $\row_{sa} = \row_{s'}$ for some $s' \in S_{\pre} \cap S_{\stored}$.
\end{definition}

We now prove the following lemma:

\begin{lemma}\label{lemma-closedness}
    The PL* table \\
    $(S_{\pre}, S_{\suff}, S_{\stored}, S_{\deadpref}, T_{\text{PL*}})$ is closed if and only if the equivalent L* table $(S_{\pre}, S_{\suff}, T_{\text{L*}})$ is closed.
\end{lemma}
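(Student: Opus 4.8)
The plan is to reduce the claim to the single substantive point in Definition~\ref{definition-table-eq}: on strings of $S_{\stored}$ the two tables assign the same row, while on any $s \in (S_{\pre} \cup S_{\pre}\cdot A)\setminus S_{\stored}$ the dead prefix property forces $s$ to be (an extension of a string in $S_{\deadpref}$, hence) a dead prefix, so $se\notin X$ for all $e\in S_{\suff}$ and the L* table assigns $\row_s$ the identically-zero row; dually, a dead prefix lying in $S_{\stored}$ also has an identically-zero row. So, writing $\mathbf{0}$ for the identically-zero row, the L* and PL* row functions differ only in that L* carries extra copies of $\mathbf{0}$ on exactly the dead prefixes that PL* omits. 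Given this dictionary, closedness transfers between the two tables provided that, whenever a relevant row equals $\mathbf{0}$, a matching $\mathbf{0}$-row can be found inside the required set -- namely $S_{\pre}$ for the L* table, and $S_{\pre}\cap S_{\stored}$ for the PL* table.

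Supplying this match is the crux, and I would obtain it from a structural invariant of the PL* run, established first by a short induction over the table updates: if some string of $S_{\pre}\cup S_{\pre}\cdot A$ is a dead prefix then $S_{\stored}$ contains a dead prefix, and if moreover some string of $S_{\pre}$ is a dead prefix then $S_{\pre}\cap S_{\stored}$ contains one. Inspecting Algorithm~\ref{update-s-pre-algo}: a dead prefix $x$ is declined for $S_{\stored}$ only at lines~5 and~13, whose guards assert that a dead prefix is \emph{already} present (in $S_{\pre}\cap S_{\stored}$ and in $S_{\stored}$ respectively); a stored dead prefix is removed only at line~7, where the same step (line~8) inserts another dead prefix, one that simultaneously enters $S_{\pre}$; and a dead prefix already in $S_{\pre}\cap S_{\stored}$ is never removed, since line~7 fires only when $S_{\pre}\cap S_{\stored}$ has no dead prefix. (Lemma~\ref{lemma-dead-pref-identified} and the dead prefix property are what make ``dead prefix in $S_{\stored}$'' well behaved throughout.)

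With the invariant available, both implications become short and parallel. For PL* closed $\Rightarrow$ L* closed, fix $s\in S_{\pre}$ and $a\in A$; if $sa\in S_{\stored}$, PL* closedness gives $s'\in S_{\pre}\cap S_{\stored}$ with $\row_{sa}=\row_{s'}$, and this is also an equality of L* rows since both strings are stored. If $sa\notin S_{\stored}$ then its L* row is $\mathbf{0}$; if in addition $s\notin S_{\stored}$ take $s'=s$, and otherwise the invariant provides a dead prefix $d\in S_{\stored}$, whereupon either $d\in S_{\pre}$ (take $s'=d$) or $d\in (S_{\pre}\cdot A)\setminus S_{\pre}$, and then applying PL* closedness to $d$ exhibits an $\mathbf{0}$-row in $S_{\pre}\cap S_{\stored}$. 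For L* closed $\Rightarrow$ PL* closed, fix $s\in S_{\pre}$, $a\in A$ with $sa\in S_{\stored}$; L* closedness gives $s'\in S_{\pre}$ whose L* row equals that of $sa$, which in turn equals the PL* row of $sa$; if $s'\in S_{\stored}$ we are done, and if $s'\notin S_{\stored}$ then $s'$ is a dead prefix in $S_{\pre}$, so that common row is $\mathbf{0}$ and the invariant supplies a dead prefix $d\in S_{\pre}\cap S_{\stored}$ with $\row_d=\mathbf{0}$. I expect the structural invariant -- and in particular the one point where it must be fed back into PL* closedness (the sub-case $d\in (S_{\pre}\cdot A)\setminus S_{\pre}$ above) -- to be the main obstacle; everything else is bookkeeping with Definition~\ref{definition-table-eq}.
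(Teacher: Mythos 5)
Your proposal is correct and follows essentially the same route as the paper's proof: reduce everything to the fact that unstored strings are dead prefixes with all-zero rows, then invoke the algorithm's guarantee that whenever a dead prefix exists in $S_{\pre} \cup S_{\pre}\cdot A$ (resp.\ $S_{\pre}$) some dead prefix sits in $S_{\stored}$ (resp.\ $S_{\pre}\cap S_{\stored}$), feeding the latter back through closedness in the one sub-case where the stored dead prefix lies only in $S_{\pre}\cdot A$. The only difference is presentational: you package the algorithmic reasoning as an explicit invariant proved by induction over table updates, whereas the paper argues the same facts retrospectively from the non-emptiness of $(S_{\pre}\cdot A)\setminus S_{\stored}$ and $S_{\pre}\setminus S_{\stored}$.
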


\begin{proof}
    We assume that the PL* observation table \\
    $(S_{\pre}, S_{\suff}, S_{\stored}, S_{\deadpref}, T_{\text{PL*}})$ is closed. If $(S_{\pre} \cdot A) \setminus S_{\stored}$ is empty, then for every $s \in S_{\pre}$ and $a \in A$, $sa \in S_{\stored}$. Hence, by the closedness of the PL* observation table, the L* table is clearly also closed, as required. Now, assume that $(S_{\pre} \cdot A) \setminus S_{\stored}$ is non-empty. In order to show that $(S_{\pre}, S_{\suff}, T_{\text{L*}})$ is closed, we only need to verify that, for each $s \in S_{\pre}$ and $a \in A$ such that $sa \not\in S_{\stored}$, $\row_{sa} = \row_{s'}$ for some $s' \in S_{\pre}$. Due to the dead prefix property  (Property~\ref{property-dead-pref}), we know that any $sa \in (S_{\pre} \cdot A) \setminus S_{\stored}$ is a dead prefix, and so $\row_{sa}$ would just map every $e \in S_{\suff}$ to 0. Furthermore, the very fact that $(S_{\pre} \cdot A) \setminus S_{\stored}$ is non-empty means that some $sa \in S_{\pre} \cdot A$ was either never added to $S_{\stored}$ or was added and later removed from it. If $sa$ was never added to $S_{\stored}$, this means that, after $s$ was added to $S_{\pre}$, Algorithm~\ref{update-s-pre-algo} when run with input $sa$ would've ended up in line 13 -- implying that some $x \in S_{\stored}$ was a dead prefix. (Even if this $x$ had later been removed from $S_{\stored}$, Algorithm~\ref{update-s-pre-algo} would have \emph{immediately} afterwards added some other dead prefix to $S_{\stored}$ in its place.) If $sa$ had been added to $S_{\stored}$ but then later removed, Algorithm~\ref{update-s-pre-algo} would've added some other dead prefix to $S_{\stored}$ \emph{immediately} after removing $sa$. Thus, in either case, we have that there is some $x \in S_{\stored}$ which is a dead prefix. Since $x$ is a dead prefix, $\row_x$ would map every $e \in S_{\suff}$ to 0. If $x \in S_{\stored} \cap S_{\pre}$ then $\row_{sa} = \row_x$, and we are done. Otherwise, since the PL* observation table is closed, there would be some $x' \in S_{\pre} \cap S_{\stored}$ such that $\row_x = \row_{x'}$, and so $\row_{sa} = \row_{x'}$.

    Now we prove the other direction. Assume now that the L* observation table $(S_{\pre}, S_{\suff}, T_{\text{L*}})$ is closed. If $S_{\pre} \setminus S_{\stored}$ is empty, then $S_{\pre} \cap S_{\stored} = S_{\pre}$. Thus, by the closedness of the L* observation table, the PL* table is clearly also closed, as required. Now, we assume that $S_{\pre} \setminus S_{\stored}$ is non-empty. In order to show that $(S_{\pre}, S_{\suff}, S_{\stored}, S_{\deadpref}, T_{\text{PL*}})$ is closed, we only need to verify that, for every $s \in S_{\pre} \setminus S_{\stored}$, $\row_s = \row_{s'}$ for some $s' \in S_{\stored} \cap S_{\pre}$. Due to the dead prefix property, any $s \in S_{\pre} \setminus S_{\stored}$ is a dead prefix -- and so $\row_s$ would map every $e \in S_{\suff}$ to 0. The fact that $S_{\pre} \setminus S_{\stored}$ is non-empty (i.e., some string in $S_{\pre}$ was not added to $S_{\stored}$) implies that there is some $x \in S_{\pre} \cap S_{\stored}$ which is a dead prefix. Thus, $\row_s = \row_x$ -- and so we are done.
\end{proof}

Now, we turn our attention to the notion of consistency.

\begin{definition}[Consistency]\label{definition-consistency}
    Let $s \in (S_{\pre} \cup S_{\pre} \cdot A)$ be arbitrary. We define $r_{s}$ as follows:

    \begin{align*}
        r_s &= \begin{cases}
            \row_s, &\text{if $s \in S_{\stored}$}\\
            \row_x, &\text{otherwise,}\\
            &\text{where $x$ is a dead prefix which is also an element}\\
            &\text{of $S_{\stored}$}\\
        \end{cases}
    \end{align*}
    
    We say that a PL* observation table is \emph{consistent} if, for all $s_1, s_2 \in S_{\pre} \cap S_{\stored}$ such that $\row_{s_1} = \row_{s_2}$, $r_{s_1a} = r_{s_2a}$ for all $a \in A$.  
\end{definition}

First, let us verify that $r_{s}$ is well-defined. If $s \in S_{\stored}$, then $\row_s$ is indeed a function mapping $S_{\suff}$ to $\{0, 1\}$, where for each $e \in S_{\suff}$, $\row_s(e) = T_{\text{PL*}}(se)$. Now assume that $s \not\in S_{\stored}$. How do we know that there exists an $x \in S_{\stored}$ which is also a dead prefix? If $s \not\in S_{\stored}$, this means that it was either not added in the first place, or it was added but then later removed. In Algorithm~\ref{update-s-pre-algo}, we see that a string $s$ is not added to $S_{\stored}$ only if it is a dead prefix, and some $x \in S_{\stored}$ is a dead prefix (lines 5 and 13). In the case that $s$ was added but then later removed, the removal would only have occurred if there was some $s' \in S_{\stored}$ which was a dead prefix (line 7). So, $r_{s}$ is indeed well-defined.

We now prove the following lemma:

\begin{lemma}\label{lemma-consistency}
    The PL* table $(S_{\pre}, S_{\suff}, S_{\stored}, S_{\deadpref}, T_{\text{PL*}})$ is consistent if and only if the equivalent L* table $(S_{\pre}, S_{\suff}, T_{\text{L*}})$ is consistent.
\end{lemma}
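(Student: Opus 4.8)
The plan is to reduce both notions of consistency to the equivalent L* table, by first observing that the auxiliary function $r_s$ from Definition~\ref{definition-consistency} is nothing but the row of $s$ in $(S_{\pre}, S_{\suff}, T_{\text{L*}})$. Concretely, the first step establishes the identity: for every $s \in (S_{\pre} \cup S_{\pre} \cdot A)$, $r_s = \row_s$. If $s \in S_{\stored}$ this is immediate from Definition~\ref{definition-consistency} together with the remark preceding Definition~\ref{definition-closedness} that $T_{\text{L*}}$ and $T_{\text{PL*}}$ agree on $s \cdot S_{\suff}$. If $s \notin S_{\stored}$, then the dead prefix property (Property~\ref{property-dead-pref}) makes $s$ an extension of a string in $S_{\deadpref}$, hence a dead prefix, so $\row_s$ is the constant-$0$ function on $S_{\suff}$ (by Definition~\ref{definition-table-eq}); and $r_s = \row_x$ for some dead prefix $x \in S_{\stored}$, which is likewise constantly $0$ because $xe \notin X$ for every $e \in S_{\suff}$. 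As a by-product this shows that \emph{every} dead prefix, stored or not, has the constant-$0$ row in the equivalent L* table, and that $r_s$ is well-defined (independent of the chosen $x$). These facts will be used throughout.

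With the identity in hand, the direction ``$(S_{\pre}, S_{\suff}, T_{\text{L*}})$ consistent implies the PL* table consistent'' is immediate: given $s_1, s_2 \in S_{\pre} \cap S_{\stored}$ with $\row_{s_1} = \row_{s_2}$, these also lie in $S_{\pre}$, so L*-consistency yields $\row_{s_1 a} = \row_{s_2 a}$ for each $a \in A$, and the identity turns this into $r_{s_1 a} = r_{s_2 a}$. For the converse, assume the PL* table consistent and fix $s_1, s_2 \in S_{\pre}$ with $\row_{s_1} = \row_{s_2}$; we must show $\row_{s_1 a} = \row_{s_2 a}$ for all $a \in A$, and I would case on membership in $S_{\stored}$. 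If $s_1, s_2 \in S_{\stored}$, PL*-consistency gives $r_{s_1 a} = r_{s_2 a}$, which the identity turns into the desired equality. If neither is in $S_{\stored}$, both are dead prefixes, hence so are $s_1 a$ and $s_2 a$, so both rows are constantly $0$. The remaining case, exactly one of them in $S_{\stored}$, say $s_1$, is the crux and is handled just as in the second half of the proof of Lemma~\ref{lemma-closedness}: $s_2 \notin S_{\stored}$ is a dead prefix, so $\row_{s_2}$ is constantly $0$, hence so is $\row_{s_1}$; moreover $S_{\pre} \setminus S_{\stored} \neq \varnothing$, which (as argued there, from the behaviour of Algorithm~\ref{update-s-pre-algo}) forces some dead prefix $d$ into $S_{\pre} \cap S_{\stored}$, with $\row_d$ also constantly $0$; applying PL*-consistency to the pair $(s_1, d) \in (S_{\pre} \cap S_{\stored})^2$ gives $r_{s_1 a} = r_{d a}$, i.e., $\row_{s_1 a} = \row_{d a}$, which is constantly $0$ because $d$ (hence $d a$) is a dead prefix; and $\row_{s_2 a}$ is constantly $0$ because $s_2$ (hence $s_2 a$) is a dead prefix. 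Thus $\row_{s_1 a} = \row_{s_2 a}$ in every case.

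The one genuinely subtle point is this last case: a priori $s_1 \in S_{\stored}$ could be a \emph{live} prefix that merely happens to have an all-zero row over the current $S_{\suff}$, so that $s_1 a$ need not itself be dead; what rescues the argument is that consistency of the PL* table, applied to the pair $(s_1, d)$ with $d$ a stored dead prefix sharing $s_1$'s row, pins $\row_{s_1 a}$ to the all-zero row of $d a$ regardless. Beyond that the proof is bookkeeping, and I would reuse the fact already established inside the proof of Lemma~\ref{lemma-closedness} — that a non-empty $S_{\pre} \setminus S_{\stored}$ entails a dead prefix in $S_{\pre} \cap S_{\stored}$ — rather than re-deriving it from Algorithm~\ref{update-s-pre-algo}.
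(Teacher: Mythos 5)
Your proof is correct, and it reaches the same conclusion as the paper's by the same essential ingredients, but with a cleaner decomposition. The paper proves each direction by a case analysis that splits simultaneously on whether $s_1,s_2$ lie in $S_{\stored}$ \emph{and} on whether $s_1a,s_2a$ lie in $S_{\stored}$ (five cases one way, three the other), repeatedly re-deriving that unstored strings are dead prefixes with all-zero rows. You instead extract up front the identity $r_s = \row_s$ (with $\row_s$ read off the equivalent L* table), which makes PL*-consistency literally the restriction of L*-consistency to pairs in $S_{\pre}\cap S_{\stored}$; this renders one direction immediate and collapses the other to a three-way case split on $s_1,s_2$ alone, absorbing all of the paper's sub-casing on $s_1a,s_2a$. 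Your handling of the crux case (exactly one of $s_1,s_2$ stored) is the same move as the paper's Case~4: pull a stored dead prefix $d\in S_{\pre}\cap S_{\stored}$ out of the non-emptiness of $S_{\pre}\setminus S_{\stored}$, note $\row_{s_1}=\row_d$, and apply PL*-consistency to the pair $(s_1,d)$ to pin $\row_{s_1a}$ to the all-zero row; you correctly flag why this step is needed (a stored $s_1$ with an all-zero row need not be a dead prefix). The only thing to watch is that your identity $r_s=\row_s$ silently relies on $r_s$ being well-defined, i.e., on the existence of a stored dead prefix whenever some $s\notin S_{\stored}$; the paper discharges this in the paragraph following Definition~\ref{definition-consistency}, and your appeal to the corresponding argument inside Lemma~\ref{lemma-closedness} covers the same ground, so nothing is missing.
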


\begin{proof}
    Assume that the PL* observation table \\
    $(S_{\pre}, S_{\suff}, S_{\stored}, S_{\deadpref}, T_{\text{PL*}})$ is consistent. We will now show that the equivalent L* observation table $(S_{\pre}, S_{\suff}, T_{\text{L*}})$ also is consistent. We take cases, and in each case prove that $\row_{s_1a} = \row_{s_2a}$:

    \begin{description}
        \item[Case 1] $s_1, s_2 \in S_{\pre} \cap S_{\stored}$, $a \in A$, $\row_{s_1} = \row_{s_2}$, and both $s_1a$ and $s_2a$ are in $S_{\stored}$
    
             We have that $\row_{s_1a} = \row_{s_2a}$ from the consistency of the PL* table.
    
        \item[Case 2] $s_1, s_2 \in S_{\pre}$ but they aren't both in $S_{\stored}$, $a \in A$, $\row_{s_1} = \row_{s_2}$, and both $s_1a$ and $s_2a$ are in $S_{\stored}$
    
            This case cannot actually occur in practice. Assume, without loss of generality, that $s_2 \not\in S_{\stored}$. Because of the dead prefix property (Property~\ref{property-dead-pref}), $s_2$ is a dead prefix -- and so $s_2a$ must be too. Based on how we have defined our observation table and learner (see Section~\ref{learner}), PL* considers adding $s_2$ to $S_{\stored}$ before it considers adding $s_2a$. $s_2$ not being added to $S_{\stored}$ means that there is some $s \in S_{\stored} \cap S_{\pre}$ which is a dead prefix (line 5 in Algorithm~\ref{update-s-pre-algo}). So, according to Algorithm~\ref{update-s-pre-algo}, $s_2a$ wouldn't have been added to $S_{\stored}$ either! We have a contradiction -- and so this case cannot occur in practice.  

        \item[Case 3] $s_1, s_2 \in S_{\pre} \cap S_{\stored}, a \in A$, $\row_{s_1} = \row_{s_2}$, and exactly one of $s_1a$ and $s_2a$ is in $S_{\stored}$
    
            Assume, without loss of generality, that $s_1a \in S_{\stored}, s_2a \not\in S_{\stored}$. Since the PL* observation table is consistent, we know that $r_{s_1a} = r_{s_2a}$ i.e., $\row_{s_1a} = \row_x$ for $x$ a dead prefix which is in $S_{\stored}$. Since $s_2a \not\in S_{\stored}$, from the dead prefix property we know that it's a dead prefix -- and so $\row_{s_2a}$ would be a function mapping each $e \in S_{\suff}$ to 0. This is equivalent to $\row_x$, and thus $\row_{s_1a}$, as required.

        \item[Case 4] $s_1, s_2 \in S_{\pre}$ but they aren't both in $S_{\stored}$, $a \in A$, $\row_{s_1} = \row_{s_2}$, and exactly one of $s_1a$ and $s_2a$ is in $S_{\stored}$

            Firstly, assume that $s_1 \not\in S_{\stored}$ and $s_2 \not\in S_{\stored}$. As we argued above, this would also mean that neither $s_1a$ nor $s_2a$ are in $S_{\stored}$ -- which contradicts our case definition. So, it must be the case that exactly one of $s_1$ and $s_2$ is in $S_{\stored}$.
    
            Assume, without loss of generality, that $s_2 \not\in S_{\stored}$ (and so $s_1 \in S_{\stored}$). As we argued above, it must also be the case that $s_2a \not\in S_{\stored}$ (and so $s_1a \in S_{\stored}$). $s_2$ not being added to $S_{\stored}$ means that $s_2$ is a dead prefix (Property~\ref{property-dead-pref}), and also that there is some $x \in S_{\stored} \cap S_{\pre}$ which is a dead prefix (line 5 in Algorithm~\ref{update-s-pre-algo}). Thus, $\row_x = \row_{s_2}$, and so $\row_x = \row_{s_1}$. By the consistency of the PL* table, we have that $r_{xa} = \row_{s_1a}$. Since $x$ is a dead prefix, $xa$ would also be a dead prefix, and so $r_{xa}$ would be a function mapping each $e \in S_{\suff}$ to $0$. Thus, $\row_{s_1a}$ would also be a function mapping each $e \in S_{\suff}$ to $0$. Since $s_2a \not\in S_{\stored}$, $s_2a$ is a dead prefix (Property~\ref{property-dead-pref}), and so $\row_{s_2a}$ would be a function mapping each $e \in S_{\suff}$ to $0$. Thus, $\row_{s_1a} = \row_{s_2a}$, as required.
    
        \item[Case 5] $s_1, s_2 \in S_{\pre}, a \in A$, $\row_{s_1} = \row_{s_2}$, and neither $s_1a$ nor $s_2a$ are in $S_{\stored}$
    
            If neither $s_1a$ nor $s_2a$ are in $S_{\stored}$, we have by the dead prefix property that $s_1a$ and $s_2a$ are both dead prefixes. Thus, both $\row_{s_1a}$ and $\row_{s_2a}$ would be functions mapping each $e \in S_{\suff}$ to 0, and hence would be equivalent as required.
    \end{description}
    
    Now we prove the other direction. Assume that the L* observation table $(S_{\pre}, S_{\suff}, T_{\text{L*}})$ is consistent. We will now show that the equivalent PL* observation table $(S_{\pre}, S_{\suff}, S_{\stored}, S_{\deadpref}, T_{\text{PL*}})$ also is consistent. We take cases, and in each case prove that $r_{s_1a} = r_{s_2a}$:
    
    \begin{description}
        \item[Case 1] $s_1, s_2 \in S_{\stored} \cap S_{\pre}$, $a \in A$, $\row_{s_1} = \row_{s_2}$ and $s_1a, s_2a \in S_{\stored}$.

            From the consistency of the L* table, we know that $\row_{s_1a} = \row_{s_2a}$. Since $s_1a, s_2a \in S_{\stored}$, $r_{s_1a} = \row_{s_1a}$ and $r_{s_2a} = \row_{s_2a}$. Hence, $r_{s_1a} = r_{s_2a}$ as required.
    
        \item[Case 2] $s_1, s_2 \in S_{\stored} \cap S_{\pre}, a \in A, \row_{s_1} = \row_{s_2}$ and exactly one of $s_1a$ and $s_2a$ is in $S_{\stored}$.
    
            Assume, without loss of generality, that $s_1a \in S_{\stored}, s_2a \not\in S_{\stored}$. Since the L* table is consistent, we know that $\row_{s_1a} = \row_{s_2a}$. Since $s_1a \in S_{\stored}$, we know that $r_{s_1a} = \row_{s_1a}$ Since $s_2a \not\in S_{\stored}$, we know due to the dead prefix property that $s_2a$ is a dead prefix -- and so $\row_{s_2a}$ would be a function mapping each $e \in S_{\suff}$ to 0. This is equivalent to $\row_x$, for $x$ a dead prefix which is also in $S_{\stored}$, and thus $r_{s_2a}$. So, we have that $\row_{s_1a} = r_{s_1a}$ and $\row_{s_2a} = \row_x = r_{s_2a}$. Since $\row_{s_1a} = \row_{s_2a}$, we have that $r_{s_1a} = r_{s_2a}$, as required.
    
        \item[Case 3] $s_1, s_2 \in S_{\stored} \cap S_{\pre}, a \in A, \row_{s_1} = \row_{s_2}$ and neither $s_1a$ nor $s_2a$ are in $S_{\stored}$.
    
            Since the L* table is consistent, we know that $\row_{s_1a} = \row_{s_2a}$. Since neither $s_1a$ nor $s_2a$ are in $S_{\stored}$, we know by the dead prefix property that $s_1a$ and $s_2a$ are both dead prefixes. Thus, $\row_{s_1a}$ and $\row_{s_2a}$ would both be equivalent to $\row_x$, for $x$ a dead prefix which is also in $S_{\stored}$, and thus $r_{s_1a}$ and $r_{s_2a}$. Hence, we have that $r_{s_1a} = r_{s_2a}$, as required.
    \end{description}
\end{proof}

\noindent\textbf{Constructing an Automaton from a Closed and Consistent Observation Table}\label{obs-table-automata-construction}

If $(S_{\pre}, S_{\suff}, S_{\stored}, S_{\deadpref}, T)$ is a closed and consistent observation table, we define a corresponding DFA \\
$M(S_{\pre}, S_{\suff}, S_{\stored}, S_{\deadpref}, T)$ over the alphabet $A$, with state set $Q$, initial state $q_0$, final states $F$, and transition function $\delta$:

\begin{align*}
    Q &= \{\row_s \mid s \in S_{\pre} \cap S_{\stored}\} \\
    q_0 &= \row_{\varepsilon} \\
    F &= \{\row_s \mid s \in S_{\pre} \cap S_{\stored} \text{ and } T(s) = 1\} \\
    \delta(\row_s, a) &= 
        \begin{cases}
            \row_{sa}, &\text{if $sa \in S_{\stored}$}\\
            \row_x, &\text{otherwise,}\\
            &\text{where $x$ is a dead prefix which is also an element}\\
            &\text{of $S_{\stored}$}\\
        \end{cases}
\end{align*}

We now show that this DFA is well-defined, using an approach similar to what is used by \citet{RN6}. 

We first note that the initial value of $S_{\pre}$ is $\{\varepsilon\}$ -- and so $S_{\stored}$ must contain $\varepsilon$ (in the trivial case where the target language is $\varnothing$, and so $\varepsilon$ is a dead prefix, the first, and only, string added to $S_{\stored}$ would be $\varepsilon$). Thus, $q_0$ is well-defined. 

To see that $F$ is well-defined, we note that, if $s_1, s_2 \in S_{\pre} \cap S_{\stored}$ such that $\row_{s_1} = \row_{s_2}$, this means that, for every $e \in S_{\suff}$, $T(s_1e) = T(s_2e)$. Since $\varepsilon \in S_{\suff}$ (it is the first element added to $S_{\suff}$, and never removed), we have that $T(s_1) = T(s_2)$, and so $F$ is well-defined. 

Finally, we show that $\delta$ is well-defined. In other words, we must show that, for every $s_1, s_2 \in S_{\pre} \cap S_{\stored}$ such that $\row_{s_1} = \row_{s_2}$, and every $a \in A$, $\delta(\row_{s_1}, a) = \delta(\row_{s_2}, a) = \row_s$ for some $s \in S_{\pre} \cap S_{\stored}$. 

Let $a \in A$ and $s_1, s_2 \in S_{\pre} \cap S_{\stored}$ such that $\row_{s_1} = \row_{s_2}$ be arbitrary.

We now take cases:

\begin{description}
    \item[Case 1] Both $s_1a$ and $s_2a$ are in $S_{\stored}$.

        Since $s_1a$ and $s_2a$ are both in $S_{\stored}$, we have that $\delta(\row_{s_1}, a) = \row_{s_1a}$ and $\delta(\row_{s_2}, a) = \row_{s_2a}$. Because the observation table is consistent, we have that $\row_{s_1a} = \row_{s_2a}$. Furthermore, since the table is closed, we have that $\row_{s_1a} = \row_{s_2a} = \row_s$ for some $s \in S_{\pre} \cap S_{\stored}$.

    \item[Case 2] Exactly one of $s_1a$ and $s_2a$ is in $S_{\stored}$.

        Assume, without loss of generality, that $s_1a$ is in $S_{\stored}$, but $s_2a$ is not. Thus, according to our definition of $\delta$, $\delta(\row_{s_1}, a) = \row_{s_1a}$, whilst $\delta(\row_{s_2}, a) = \row_x$, where $x \in S_{\stored}$ is a dead prefix. How do we know that such a $\row_x$ exists? $s_2a$ not being in $S_{\stored}$ means that either it was never added in the first place, or it was added but then later removed. Either way, this indicates that some $x \in S_{\stored}$ is a dead prefix.
        
        Since the observation table is consistent, $r_{s_1a} = r_{s_2a}$ i.e., $\row_{s_1a} = \row_x$ for $x \in S_{\stored}$ a dead prefix. Thus, $\delta(\row_{s_1}, a) = \delta(\row_{s_2}, a) = \row_x$. If $x \in S_{\pre}$ we are done; otherwise, if $x \in ((S_{\pre} \cdot A) \setminus S_{\pre})$, then, because the table is closed, $\row_x = \row_s$ for some $s \in S_{\pre} \cap S_{\stored}$.

    \item[Case 3] Neither $s_1a$ nor $s_2a$ are in $S_{\stored}$.

        Since neither $s_1a$ nor $s_2a$ are in $S_{\stored}$, we have that $\delta(\row_{s_1}, a) = \delta(\row_{s_2}, a) = \row_x$, where $x \in S_{\stored}$ is a dead prefix. Like we did in the previous case, we can deduce that such a $\row_x$ indeed exists. If $x \in S_{\pre}$ we are done; otherwise, if $x \in ((S_{\pre} \cdot A) \setminus S_{\pre})$, then, because the table is closed, $\row_x = \row_s$ for some $s \in S_{\pre} \cap S_{\stored}$.
\end{description}

Since $a, s_1, s_2$ were all arbitrary, we have that, for every $s_1, s_2 \in S_{\pre}$ such that $\row_{s_1} = \row_{s_2}$, and every $a \in A$, $\delta(\row_{s_1}, a) = \delta(\row_{s_2}, a) = \row_s$ for some $s \in S_{\pre} \cap S_{\stored}$, as required.

We now explain how $M\obstable$ is equivalent to the automaton that would be constructed from the equivalent L* observation table. In order to do so, we need to address the following:

\begin{enumerate}
    \item Is the set of states in $M\obstable$ equivalent to the set of states in the automaton constructed from the equivalent L* table?

    \item Why is it correct to let $\delta(\row_s, a) = \row_x$ (for $x$ a dead prefix which is in $S_{\stored}$) if $sa \not\in S_{\stored}$?
\end{enumerate}

We begin by addressing the first point. The set of states in the automaton constructed from the PL* table is $\{\row_s \mid s \in S_{\pre} \cap S_{\stored}\}$, whereas the set of states in the automaton constructed from the equivalent L* table would be $\{\row_s \mid s \in S_{\pre}\}$. If $S_{\pre} \setminus S_{\stored}$ is empty, $S_{\pre} = S_{\pre} \cap S_{\stored}$, and so we are done. Thus, assume $S_{\pre} \setminus S_{\stored}$ is non-empty. Consider an arbitrary $s \in S_{\pre} \setminus S_{\stored}$. Due to the dead prefix property (Property~\ref{property-dead-pref}), $s$ would be a dead prefix -- and the reason why it wouldn't have been added to $S_{\stored}$ is that there had already been some $x \in S_{\pre} \cap S_{\stored}$ which was a dead prefix (line 5 of Algorithm~\ref{update-s-pre-algo}). Thus, $\row_x = \row_s$. Thus, for every $s \in S_{\pre} \setminus S_{\stored}$, $\row_s = \row_x$ for $x \in S_{\pre} \cap S_{\stored}$ a dead prefix -- and so $\{\row_s \mid s \in S_{\pre} \cap S_{\stored}\} = \{\row_s \mid s \in S_{\pre}\}$. Thus, the set of states in $M\obstable$ is indeed equivalent to the set of states in the automaton constructed from the equivalent L* table, as required.

Now, we discuss the second point. We know that, when L* constructs its hypothesis automaton, it would set $\delta(\row_s, a)$ to $\row_{sa}$. If $sa \not\in S_{\stored}$, we know from the dead prefix property that $sa$ is a dead prefix. Thus, $\row_{sa} = \row_x$ for $x$ a dead prefix which is in $S_{\stored}$. Thus, in setting $\delta(\row_s, a)$ to $\row_x$ (for $x$ a dead prefix which is in $S_{\stored}$), PL* indeed correctly mimics the behaviour of L*.

\noindent\textbf{Size of an Observation Table}\label{obs-table-size}

One of the metrics we use to compare L* and PL* in Section~\ref{chap:empirical-evaluation} (Empirical Evaluation) is the \emph{size} of each observation table. We now define the notion of size for both L* and PL* observation tables.

\begin{definition}[Size of an L* Observation Table]\label{definition-l*-table-size}
    Let $(S_{\pre}, S_{\suff}, T_{\text{L*}})$ be an L* observation table. The size of this table is equal to
    
    $$|S_{\pre}| + |S_{\suff}| + (|S_{\pre} \cup (S_{\pre} \cdot A)| \times |S_{\suff}|)$$ 
\end{definition}

\begin{definition}[Size of a PL* Observation Table]\label{definition-pl*-table-size}
    Let \\
    $(S_{\pre}, S_{\suff}, S_{\stored}, S_{\deadpref}, T_{\text{PL*}})$ be a PL* observation table. The size of this table is equal to
    
    \begin{align*}
        |S_{\pre}| + |S_{\suff}| + |S_{\stored}| + |S_{\deadpref}| + (|S_{\stored}| \times |S_{\suff}|)
    \end{align*}
\end{definition}

\subsection{Correctness and Termination of PL*}\label{correctness-termination}

\subsubsection{Correctness}\label{correctness}

PL* only terminates if the teacher responds \y to a conjecture (line 22 of Algorithm~\ref{pl*}). Hence, if the teacher can accurately answer prefix and equivalence queries, then, if PL* ever terminates, its output is a DFA which recognises exactly the target language.

\subsubsection{Termination}\label{termination}

In describing the notion of a MAT, \citet{RN6} does not state that it is deterministic. Let's consider what a non-deterministic MAT would look like. Let's say that we present the incorrect conjecture $\mathcal{A}$ to a non-deterministic MAT for unknown regular target language $X \subseteq A^*$. As the conjecture was incorrect, the MAT would return some $x \in X \oplus L(\mathcal{A})$. After receiving $x$, suppose that we present $\mathcal{A}$ to the MAT a second time. Then, since the MAT is non-deterministic, it may on this occasion return a counterexample $y \in X \oplus L(\mathcal{A})$, where $y \neq x$. 

Though the teacher which PL* interacts with is not a MAT, it answers equivalence queries in exactly the same way that a MAT does. Thus, if the teacher used by PL* was non-deterministic, then, like the non-deterministic MAT, it may on different occasions present \emph{different} counterexamples in response to the \emph{same} incorrect conjecture. However, our proof of the termination of PL* requires us to assume that the teacher is deterministic:

\begin{assumption}\label{assumption-teacher}
    Consider the setting where the unknown language is $X \subseteq A^*$. Let $\mathcal{A}$ be an arbitrary incorrect conjecture (i.e., a conjecture satisfying $L(\mathcal{A}) \neq X$). Then, each time the teacher is presented with $\mathcal{A}$, it responds with the \textup{same} counterexample $x \in X \oplus L(\mathcal{A})$.
\end{assumption} 

Making this assumption allows us to eliminate the non-determinism due to choice of counterexample -- thus allowing us to prove concrete theoretical results.

Each execution of L* or PL* consists of one or more \emph{rounds}, where each round involves the learner making the table closed and consistent, before constructing a hypothesis and presenting it to the teacher. If the teacher provides a counterexample, this is added to the observation table, and then the next round begins.

We prove the termination of PL* by arguing that it ``simulates'' L*. This is stated more formally in the following theorem:

\begin{theorem}\label{theorem-termination}
    Let $X \subseteq A^*$ be the unknown target language. Let $(S_{\pre}, S_{\suff}, S_{\stored}, S_{\deadpref}, T_{\text{PL*}})$ denote the observation table maintained by PL*, and $(S_{\pre}, S_{\suff}, T_{\text{L*}})$ the table maintained by L*, when each is tasked with learning $X$. Then, after the $n^{\text{th}}$ round (for $n \geq 0$), $(S_{\pre}, S_{\suff}, T_{\text{L*}})$ is the L* table equivalent to $(S_{\pre}, S_{\suff}, S_{\stored}, S_{\deadpref}, T_{\text{PL*}})$ (where Definition~\ref{definition-table-eq} defines the L* table equivalent to a given PL* table).
\end{theorem}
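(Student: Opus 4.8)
The plan is to prove the statement by induction on the round number $n$, showing that PL* ``simulates'' L* step for step. The workhorse will be a single \emph{update lemma}: if the PL* observation table and its equivalent L* table (Definition~\ref{definition-table-eq}) currently agree (i.e.\ the equivalent table is exactly $(S_{\pre}, S_{\suff}, T_{\text{L*}})$), and both algorithms then add the \emph{same} string to $S_{\pre}$ (with PL* running Algorithms~\ref{member-live-dead-pref-algo}--\ref{update-s-pre-algo} and L* filling the new cells by membership queries), or the same string to $S_{\suff}$ (with PL* filling the new column via Algorithm~\ref{member-algo}), then the resulting PL* table is again equivalent to the resulting L* table. This lemma, together with correctness of the auxiliary routines (Algorithms~\ref{ask-pref-query-algo}--\ref{member-live-dead-pref-algo} return the true member / live-prefix / dead-prefix status, since they either read a table entry that is correct by the inductive hypothesis or ask a prefix query), is what makes all three ``$T$-updating'' points of Algorithm~\ref{pl*} behave compatibly.

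For the base case $n=0$ I would unfold line~2 of Algorithm~\ref{pl*}. Both tables start with $S_{\pre}=S_{\suff}=\{\varepsilon\}$, so it suffices to check, by cases on whether each $x\in\{\varepsilon\}\cup A$ is a member, live prefix, or dead prefix of $X$, that the equivalent L* table assigns $T_{\text{L*}}(x)=1$ iff $x\in X$, which is precisely what L* records. The only two points needing care are: every member or live prefix lies in $\prefix(X)$ and is therefore placed in $S_{\stored}$ (the final branch of Algorithm~\ref{update-s-pre-algo}), so its $T_{\text{PL*}}$ value is copied faithfully; and any dead prefix not placed in $S_{\stored}$ gets $T_{\text{L*}}=0$ by Definition~\ref{definition-table-eq}, which is correct because the dead prefix property (Property~\ref{property-dead-pref}), maintained by Algorithm~\ref{update-s-pre-algo} via Lemma~\ref{lemma-dead-pref-identified}, forces it to extend a string in $S_{\deadpref}$ and hence to be outside $X$.

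For the inductive step, assume equivalence holds after round $n$ and trace round $n+1$. First, by Lemmas~\ref{lemma-closedness} and~\ref{lemma-consistency}, the PL* table is closed/consistent exactly when the equivalent L* table is, so the inner \textbf{while} loop of Algorithm~\ref{pl*} runs in lockstep with L*'s loop; moreover, since in the PL* definitions $\row_s$ is defined to be the equivalent L* row and $r_s=\row_s$ for every $s$, a PL* witness to non-closedness or non-consistency is also an L* witness, so the two algorithms can act on the same witness and add the same string, after which the update lemma restores equivalence. Next, by the construction of Section~\ref{obs-table-automata-construction}, $M(S_{\pre},S_{\suff},S_{\stored},S_{\deadpref},T)$ is literally the automaton L* builds from the equivalent table, so the two conjectures coincide; by Assumption~\ref{assumption-teacher} the teacher then returns the same answer, and if it is a counterexample $w$, both algorithms add $w$ and all its prefixes to $S_{\pre}$ in the same ascending-lexicographic order, so finitely many further applications of the update lemma re-establish equivalence by the end of round $n+1$. (Termination of PL* then follows as a corollary: L* terminates, and PL* presents exactly L*'s conjectures at the same rounds.)

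I expect the main obstacle to be the update lemma for $S_{\pre}$, specifically the case analysis of Algorithm~\ref{update-s-pre-algo} when the added string $x$ is a dead prefix. The delicate sub-case is the swap (remove some stored dead prefix $s'\in(S_{\stored}\cap(S_{\pre}\cdot A))\setminus S_{\pre}$, then store $x$): I would argue this changes no $T_{\text{L*}}$ value because $s'$, being a dead prefix, already had $T_{\text{PL*}}(s'e)=0=T_{\text{L*}}(s'e)$ for every $e\in S_{\suff}$, so dropping it from $S_{\stored}$ is invisible to the equivalent table; dually, \emph{not} storing $x$ in the ``do nothing'' sub-cases is harmless for the same reason, while the dead prefix property survives by Lemma~\ref{lemma-dead-pref-identified}. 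A secondary nuisance is that ``the'' table in the statement presupposes PL* and L* resolving their internal nondeterministic choices (which witness to act on, which equivalent row to name in $\delta$) compatibly; I would handle this by fixing a common deterministic tie-breaking rule, or equivalently by phrasing the induction for \emph{any} matching sequence of legal choices, since the update lemma is indifferent to which legal choice is taken.
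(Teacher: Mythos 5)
Your proposal is correct and follows essentially the same route as the paper's own proof: induction on the round number, the base case unfolding the initialisation, and the inductive step combining the closedness/consistency equivalence lemmas, the identity of the constructed hypothesis automata, and the deterministic-teacher assumption, with non-determinism in witness selection handled by assuming matching choices. Your explicit ``update lemma'' and the analysis of the swap sub-case of Algorithm~\ref{update-s-pre-algo} merely make precise what the paper states more informally (``If L* also makes these same additions, its table will then become the L* table equivalent to PL*'s augmented table''), so this is a refinement in presentation rather than a different argument.
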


\begin{proof}
    We prove this theorem via induction on $n$.

    \begin{description}
        \item[Base case] $n = 0$
        
            We need to show that, before the first learning round begins, $(S_{\pre}, S_{\suff}, T_{\text{L*}})$ is the L* table equivalent to\\
            $(S_{\pre}, S_{\suff}, S_{\stored}, S_{\deadpref}, T_{\text{PL*}})$.
            
            When constructing their initial observation tables, both L* and PL* set $S_{\pre}$ and $S_{\suff}$ to $\{\varepsilon\}$. PL* adds a subset of $(S_{\pre} \cup S_{\pre} \cdot A)$ to $S_{\stored}$ -- where the strings \emph{not} added to $S_{\stored}$ are all dead prefixes (due to the dead prefix property i.e., Property~\ref{property-dead-pref}). All minimal dead prefixes are added to $S_{\deadpref}$. Let $(S_{\pre}, S_{\suff}, T'_{\text{L*}})$ denote the L* table equivalent to the initial PL* table. According to Definition~\ref{definition-table-eq}, $T'_{\text{L*}}(s \cdot \varepsilon) = T_{\text{PL*}}(s \cdot \varepsilon)$ for all $s \in S_{\stored} \cap (S_{\pre} \cup S_{\pre} \cdot A)$ (where $T_{\text{PL*}}(s \cdot \varepsilon)$ is 1 if $s \cdot \varepsilon \in X$, and 0 otherwise). For all $s \in (S_{\pre} \cup S_{\pre} \cdot A) \setminus S_{\stored}$ (which are dead prefixes), $T'_{\text{L*}}(s \cdot \varepsilon) = 0$, and indeed $s \cdot \varepsilon \not\in X$ by definition of ``dead prefix''. Thus, for all $s \in (S_{\pre} \cup S_{\pre} \cdot A)$, it is clearly the case that $T'_{\text{L*}}(s \cdot \varepsilon)$ is set to $1$ if $s \in X$, and $0$ otherwise -- which is exactly how L* would set $T_{\text{L*}}(s \cdot \varepsilon)$. Thus, $T_{\text{L*}} = T'_{\text{L*}}$, and so $(S_{\pre}, S_{\suff}, T_{\text{L*}}) = (S_{\pre}, S_{\suff}, T'_{\text{L*}})$, as required.
            
        \item[Inductive hypothesis] $n = k$

            We assume that, after the $k^{\text{th}}$ learning round, $(S_{\pre}, S_{\suff}, T_{\text{L*}})$ is the L* table equivalent to $(S_{\pre}, S_{\suff}, S_{\stored}, S_{\deadpref}, T_{\text{PL*}})$.

        \item[Inductive step] $n = k + 1$ 

            We know that, at the \emph{start} of the $(k+1)^{\text{st}}$ round, $(S_{\pre}, S_{\suff}, T_{\text{L*}})$ is the L* table equivalent to $(S_{\pre}, S_{\suff}, S_{\stored}, S_{\deadpref}, T_{\text{PL*}})$ (from the inductive hypothesis). We now need to show that, following all the modifications made by L* and PL* to their respective tables during round $(k+1)$ (including making them closed and consistent, and potentially adding a counterexample), it is still the case that $(S_{\pre}, S_{\suff}, T_{\text{L*}})$ is the L* table equivalent to $(S_{\pre}, S_{\suff}, S_{\stored}, S_{\deadpref}, T_{\text{PL*}})$.

            We first show that, after L* and PL* make their respective tables closed and consistent, $(S_{\pre}, S_{\suff}, T_{\text{L*}})$ remains the L* table equivalent to $(S_{\pre}, S_{\suff}, S_{\stored}, S_{\deadpref}, T_{\text{PL*}})$. We showed in Section~\ref{obs-table-closed-consistent} that a PL* table is closed (consistent) if and only if the equivalent L* table is closed (consistent). If at the beginning of the round, either table is both closed and consistent, then \emph{both} tables must be both closed and consistent -- and so we are trivially done. So, assume that neither table is both closed and consistent. In order to make its table closed and consistent, PL* would augment it by making a series of additions to $S_{\pre}$ and/or $S_{\suff}$. If L* also makes these same additions, its table will then become the L* table equivalent to PL*'s augmented table. Furthermore, because PL*'s augmented table is closed and consistent, L*'s augmented table would also be both closed and consistent. Thus, after L* and PL* make their own tables closed and consistent, $(S_{\pre}, S_{\suff}, T_{\text{L*}})$ remains the L* table equivalent to $(S_{\pre}, S_{\suff}, S_{\stored}, S_{\deadpref}, T_{\text{PL*}})$, as required. (Note that there is some non-determinism possible in both L* and PL* in terms of which strings they choose to add to $S_{\pre}$ and/or $S_{\suff}$ -- for the sake of this proof we assume that, at every stage, they make the same choices.)

            After making their observation tables closed and consistent, L* and PL* then construct their hypotheses. In Section~\ref{obs-table-automata-construction}, we argued that, given equivalent observation tables, L* and PL* construct identical hypothesis automata. In response, the teacher would either confirm that both hypotheses are correct -- in which case both learners would terminate (with equivalent observation tables) -- or otherwise would provide both learners with the same counterexample (given Assumption~\ref{assumption-teacher}). Both L* and PL* would add this counterexample and all its prefixes to $S_{\pre}$ -- and so $(S_{\pre}, S_{\suff}, T_{\text{L*}})$ would remain the L* table equivalent to $(S_{\pre}, S_{\suff}, S_{\stored}, S_{\deadpref}, T_{\text{PL*}})$.

            Thus, at the end of the $(k+1)^{\text{st}}$ round, $(S_{\pre}, S_{\suff}, T_{\text{L*}})$ is still the L* table equivalent to $(S_{\pre}, S_{\suff}, S_{\stored}, S_{\deadpref}, T_{\text{PL*}})$, as required.
    \end{description}
\end{proof}

Given the termination of L* \citep[p.~94-96]{RN6}, we have that, during some round $n$ (for $n \geq 1$), L* makes its table closed and consistent, constructs a hypothesis which the teacher deems correct, and then terminates returning this hypothesis. We have from Theorem~\ref{theorem-termination} that L*'s table at the end of round $n$ is the L* table equivalent to PL*'s table at the end of its round $n$. Thus, at the end of round $n$, PL* constructs the same correct hypothesis and returns it; i.e., PL* also terminates at the end of round $n$.

\subsection{Comparing PL* with L*}\label{comparison-with-lstar}

PL* makes several changes to the standard L* algorithm in order to exploit the additional information given by prefix queries over membership queries. However, these changes only apply when there are in fact strings $x \in A^*$ such that $x \not\in L, x \not\in \PP(L)$ i.e., strings which cause a \deadpref response to be returned from a prefix query. If $L$ is a dense language (Definition~\ref{definition-dense}), then there are no strings which give a \deadpref response, and so PL* would run exactly like standard L* i.e., it would have the same runtime.

For non-dense languages, however, PL* will always give some improvement over L*:

\begin{theorem}\label{theorem-non-dense}
    Let $L \subseteq A^*$ be an arbitrary non-dense regular language.  

    Let $(S_{\pre}, S_{\suff}, T_{\text{L*}})$ be L*'s final observation table, and\\
    $\obstable$ PL*'s final observation table, when both learners are tasked with learning $L$. Let $q$ be the number of membership queries performed by L* during its execution, and $q'$ the number of prefix queries performed by PL* during its execution.
    
    Then, 
    
    $$q - q' \geq |S_{\suff}| \times (1 + |A|) - 1$$
\end{theorem}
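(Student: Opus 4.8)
The plan is to run L* and PL* side by side via Theorem~\ref{theorem-termination}, and then to isolate a block of table cells that L* must query but PL* almost entirely skips because it recognises a single dead prefix that dominates the block. By Theorem~\ref{theorem-termination}, L* and PL* terminate with the same final $S_{\pre}$ and $S_{\suff}$, and throughout the run the PL* table is equivalent (Definition~\ref{definition-table-eq}) to the L* table. L* fills, and hence performs one membership query for, every cell $(s,e)$ with $s\in S_{\pre}\cup S_{\pre}\cdot A$ and $e\in S_{\suff}$ (each such cell is eventually realised because both sets only grow). Every prefix query PL* asks can be charged to such a cell: a query issued while populating a stored row (via Algorithm~\ref{member-algo} inside Algorithm~\ref{update-s-pre-algo}) is charged to the cell $(s,e)$ it fills, with $s\in S_{\stored}\subseteq S_{\pre}\cup S_{\pre}\cdot A$; and a query issued by the member/live/dead test of Algorithm~\ref{member-live-dead-pref-algo} on a string $x$ just added to $S_{\pre}$ is charged to $(x,\varepsilon)$. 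Reusing that test's outcome when the row is later populated, PL* charges at most one prefix query to any cell; hence the cells PL* queries form a sub-collection of those L* queries, and it suffices to exhibit a set $C$ of $|S_{\suff}|\,(1+|A|)$ cells of L*'s table to which PL* charges at most one query in total.

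Next I would locate the relevant dead prefix. Since $L$ is non-dense and regular, its minimal DFA has a reachable state whose state language is $\varnothing$; PL* returns exactly this DFA (Section~\ref{correctness}), so the final hypothesis has a $\varnothing$-language state, and Lemma~\ref{lemma-some-bad-prefix-row} (a hypothesis DFA has a $\varnothing$-language state only if some dead prefix labels a row of $S_{\pre}$) then yields a dead prefix in $S_{\pre}$. A short invariant argument on Algorithm~\ref{update-s-pre-algo} shows (i) $S_{\stored}$ never contains two dead prefixes at once — a dead prefix is added only when none is present, and the line~7 swap replaces one by another — and (ii) once a dead prefix lies in $S_{\pre}\cap S_{\stored}$ it is never removed, since line~7 only ever removes an element of $(S_{\pre}\cdot A)\setminus S_{\pre}$. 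Combined with the fact that processing a dead prefix in $S_{\pre}$ always leaves a dead prefix in $S_{\pre}\cap S_{\stored}$, this gives a dead prefix $d^{*}\in S_{\pre}\cap S_{\stored}$ of the final table. Take $C=\{(d^{*},e):e\in S_{\suff}\}\cup\{(d^{*}a,e):a\in A,\ e\in S_{\suff}\}$; since $d^{*}\in S_{\pre}$ and each $d^{*}a\in S_{\pre}\cdot A$, and the rows $d^{*}$ and $d^{*}a$ ($a\in A$) are pairwise distinct strings, $C$ consists of $|S_{\suff}|\,(1+|A|)$ distinct cells of L*'s table, all of which L* queries.

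Then I would show PL* charges at most one query to $C$. By invariant (i), $d^{*}a\notin S_{\stored}$ for every $a$ (it would be a second dead prefix of $S_{\stored}$), so PL* never populates a row $d^{*}a$; and the member/live/dead test on each $d^{*}a$ returns \deadpref without a query, because by the time $d^{*}a$ is processed — which, by the ``first $x=s$, then $x\in\{sa\}$'' order, is no earlier than the processing of $d^{*}$ that identified it as dead — some prefix of $d^{*}$ already lies in $S_{\deadpref}$ (put there by Algorithm~\ref{ask-pref-query-algo} or by line~1 of Algorithm~\ref{member-live-dead-pref-algo}), and $d^{*}a$ extends it. When PL* populates the stored row $d^{*}$, each cell $(d^{*},e)$ is resolved from $S_{\deadpref}$ by Algorithm~\ref{member-algo} without a query, the only possible exception being the single prefix query that first discovered $d^{*}$ to be dead. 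Hence PL* charges at most one query to $C$ while L* performs $|C|=|S_{\suff}|\,(1+|A|)$ membership queries on $C$; as PL*'s queries charged outside $C$ are no more numerous than L*'s, we get $q-q'\ge|S_{\suff}|\,(1+|A|)-1$. The main obstacle is exactly the bookkeeping in the two middle steps: proving the two $S_{\stored}$ invariants cleanly from the six cases of Algorithm~\ref{update-s-pre-algo} (in particular that the line~7 swap neither manufactures a second dead prefix nor removes the one in $S_{\pre}\cap S_{\stored}$), and making the charging scheme watertight — that no prefix query escapes a cell, that the member/live/dead test does not incur a second charge to a cell later populated, and that the ``cells'' counted match the statement's counting convention (per cell versus per distinct string). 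Non-denseness is used only to guarantee $d^{*}$ exists; everything past that is about how Algorithms~\ref{member-algo}--\ref{update-s-pre-algo} collapse the whole ``dead'' sub-block of the table into $S_{\deadpref}$.
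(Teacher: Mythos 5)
Your proposal is correct and follows essentially the same route as the paper: use non-denseness (via Lemma~\ref{lemma-some-bad-prefix-row}, which rests on Lemma~\ref{lemma-angluin-lemma-2-pl*}) to obtain a dead prefix $d^{*}\in S_{\pre}\cap S_{\stored}$, then observe that L* queries all $|S_{\suff}|\times(1+|A|)$ cells in the rows $d^{*}$ and $d^{*}a$ while PL* spends at most the single prefix query that discovered $d^{*}$ to be dead. Your explicit charging scheme and $S_{\stored}$ invariants merely make rigorous the global accounting that the paper leaves implicit, so the two arguments are substantively the same.
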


\begin{proof}    
    We begin by proving two lemmas.

    \citet[p.~91]{RN6} proves that, for the DFA $M(S_{\pre}, S_{\suff}, T_{\text{L*}})$ constructed from the L* observation table $(S_{\pre}, S_{\suff}, T_{\text{L*}})$, whose state set is $Q'$ and initial state is $q_0'$, and for every $s \in S_{\pre} \cup S_{\pre} \cdot A$, $\delta^*(q_0', s) = \row_s$. We now prove an analogous result for PL*.

    \begin{lemma}\label{lemma-angluin-lemma-2-pl*}
        Assume that $\obstable$ is a closed, consistent PL* observation table. Let $\row_s$ be a function mapping from $S_{\suff}$ to $\{0, 1\}$, where $\row_s(e) = T(se)$. For the DFA $M\obstable$ (with state set $Q$ and initial state $q_0$) and for every $s \in S_{\stored}$, $\delta^*(q_0, s) = \row_s$.
    \end{lemma}

    \begin{proof}
        Like Angluin, we prove this lemma by induction on the length of $s$ (denoted $|s|$).

        \begin{description}
            \item[Base case] $|s| = 0$

                By definition, $q_0 = \row_{\varepsilon}$, and clearly $\delta^*(q_0, \varepsilon) = q_0$. Thus, $\delta^*(q_0, \varepsilon) = \row_{\varepsilon}$, as required.

            \item[Inductive hypothesis] $|s| \leq k$

                We assume that, for all $s \in S_{\stored}$ such that $|s| \leq k$, $\delta^*(q_0, s) = \row_s$.

            \item[Inductive step] $|s| = k+1$

                Let $s \in S_{\stored}$, where $|s| = k+1$, be arbitrary. Hence, $s = s'a$, for $a \in A$ and some string $s'$ such that $|s'| = k$. We know that $s' \in S_{\pre}$, for either $s \in S_{\pre} \cdot A$, in which case $s'$ is in $S_{\pre}$, or $s \in S_{\pre}$, in which case $s' \in S_{\pre}$ as well since $S_{\pre}$ is prefix-closed. We also can deduce that $s' \in S_{\stored}$. Why is this? Assume the opposite i.e., that $s' \not\in S_{\stored}$. Based on how strings are added to $S_{\stored}$ (see Section~\ref{obs-table-update-s-pre-standard} for details), PL* would have considered adding $s'$ to $S_{\stored}$ before it considered adding $s$. PL* would not have added $s'$ to $S_{\stored}$ only if there was some $x \in S_{\stored} \cap S_{\pre}$ which was a dead prefix. Thus, $s$ would certainly have not been added to $S_{\stored}$ either -- and even if $s$ was itself later added to $S_{\pre}$, still it would not be added to $S_{\stored}$ due to the existence of this $x$. But, we originally defined $s$ as being an element of $S_{\stored}$! Contradiction -- and so we conclude that $s' \in S_{\stored}$. 

                Now, knowing that $s' \in S_{\stored} \cap S_{\pre}$, we have the following:

                \begin{align*}
                    \delta^*(q_0, s) &= \delta(\delta^*(q_0, s'), a)\\
                    &= \delta(\row_s', a), &\text{by the inductive hypothesis}\\
                    &= \row_{s'a}, &\text{by the definition of $\delta$}\\
                    &= \row_{s}
                \end{align*}
        \end{description}
    \end{proof}

    We now prove a second lemma.
    
    \begin{lemma}\label{lemma-some-bad-prefix-row}
        Let $\obstable$ be PL*'s final observation table when learning the non-dense language $L$. Let $M\obstable$ have state set $Q$, initial state $q_0$, final states $F$ and transition function $\delta$. Then, there must be some $s \in S_{\pre} \cap S_{\stored}$ which is a dead prefix.
    \end{lemma}

    \begin{proof}
        Assume the opposite i.e., there is \emph{no} $s \in S_{\pre} \cap S_{\stored}$ which is a dead prefix. Thus, for any dead prefix $s$ of $L$, $\row_s \not\in Q$. Let $x$ be some dead prefix of $L$ (since $L$ is non-dense, there must be at least one dead prefix of $L$). After $M\obstable$ consumes $x$, it must end up in some state $\row_{x'} \in Q$. Since $x'$ is not a dead prefix of $L$, and $\delta^*(q_0, x') = \row_{x'}$ (from Lemma~\ref{lemma-angluin-lemma-2-pl*}), the state language of $x'$ must be non-empty. Thus, \\
        $M\obstable$ will accept some extension of $x$ -- but this contradicts that $x$ is a dead prefix! So we conclude that there is some $s \in S_{\pre} \cap S_{\stored}$ which is a dead prefix, as required.
    \end{proof}
    
    Now, we turn our attention to proving Theorem~\ref{theorem-non-dense}. Let $s$ be an element of $S_{\pre} \cap S_{\stored}$ which is a dead prefix (we know from Lemma~\ref{lemma-some-bad-prefix-row} that such an $s$ must exist). Though PL* might've needed to perform a prefix query to identify that $s$ was a dead prefix, upon finding that it was, it wouldn't have needed to perform prefix queries for any $se$ such that $e \in S_{\suff} \setminus \{\varepsilon\}$ (in order to determine $T(se)$). However, L*, upon adding $s$ to $S_{\pre}$, would've needed to perform a membership query for \emph{every} $se$ such that $e \in S_{\suff}$. Furthermore, for each $a \in A$ and $e \in S_{\suff}$, PL* would not have needed to perform a prefix query for $sae$ (in order to determine $T(sae)$); L*  however would've needed to perform a membership query for every such $sae$. Hence, for $q$ the number of membership queries asked by L*, and $q'$ the number of prefix queries asked by PL*, we have that

    $$q - q' \geq |S_{\suff}| \times (1 + |A|) - 1$$
    
    as required.
\end{proof}

Generally, in practice, the improvement given by PL* in terms of the number of queries made will be greater than what is stated in Theorem~\ref{theorem-non-dense} (as we shall see in Section~\ref{empirical-evaluation-exact}). However, for any dense language, the improvement will be \emph{at least as much} as what is stated in Theorem~\ref{theorem-non-dense}.

Note that the improvement stated in Theorem~\ref{theorem-non-dense} is dependent upon both the alphabet size $|A|$ and $|S_{\suff}|$ i.e., it becomes more significant as $|A|$ and $|S_{\suff}|$ increase.

\subsection{PL* without an Equivalence Oracle}\label{random-sampling-oracle}

The PL* algorithm we have described so far relies on two oracles -- a prefix oracle and an equivalence oracle. Though in Section~\ref{pref-query} we argued that many parsers common in the field of software engineering are naturally capable of behaving as prefix oracles, we do not know of any that are capable of behaving as (exact) equivalence oracles. Because our aim was to design an algorithm that could be applied to the problem of modelling a parser, we now turn our attention to a more practical setting, where an equivalence oracle is unavailable. This change of setting does not prevent us from using PL*; rather, instead of using an equivalence oracle, PL* can now be run with a \emph{random sampling oracle} within the probably-approximately correct (PAC) framework. When a random sampling oracle is called, it returns some $x \in A^*$ drawn according to a fixed probability distribution $D$ over $A^*$. When $q_i$ calls are made to the sampling oracle in place of the $i^{\text{th}}$ equivalence query, where

$$q_i = \left\lceil{\frac{1}{\varepsilon} \left(\ln{\frac{1}{\delta}} + i \ln{2}\right)}\right\rceil$$

there is a theoretical guarantee on the classification accuracy of the hypothesis learned, relative to $D$ \citep{RN1}. Specifically, if $X' \subseteq A^*$ denotes the language of the inferred hypothesis and $X \subseteq A^*$ the target language, then $X'$ is guaranteed to satisfy the following:

$$\Pr(d(X', X) \leq \varepsilon) \geq 1 - \delta$$

where

$$d(L, L') = \sum_{x \in L \oplus L'} \Pr(x)$$

In words: the probability that the hypothesis returned by the learner is a ``good'' (quantified by $\varepsilon$) approximation of $X$ should be ``high'' (quantified by $\delta$).

(See Section~\ref{learning-models-active-pac} for a more detailed description of the PAC learning framework.)

From a theoretical perspective, the PAC guarantee is strong, powerful and tunable (due to being parameterised by $\varepsilon$ and $\delta$). However, a key limitation of this guarantee is that it is dependent upon not just $X'$ and $X$, but also the sampling distribution $D$. In practice, we generally do not have a particular distribution in mind that we would like our hypothesis to perform well against, and so we want an evaluation metric that is dependent only on $X'$ and $X$ (and thus is independent of any particular sampling distribution). One such metric is \emph{F1-score}, which has been used in several works to evaluate language learning algorithms \citep{RN51, RN52, RN76}. We now provide a definition of this metric. 

\subsubsection{F1-score}\label{random-sampling-oracle-f1-defn}

The metric of \emph{F1-score} is attributed to \citet{RN90}, who is viewed as a founder of the field of information retrieval, but nowadays it is widely-used in the field of machine learning. Specifically, it is used to evaluate the performance of a \emph{binary classifier}; that is, a machine learning algorithm which classifies input data into \emph{two} possible categories. In the context of language learning, the binary classifier we are evaluating is the hypothesis inferred by the language learning algorithm (which is, in the case of PL*, a DFA), which classifies strings in $A^*$ into two possible categories -- ``in the language'' and ``not in the language''. The hypothesis should classify strings according to their membership in the target language $X$; specifically, it should classify as ``in the language'' strings which are in $X$, and classify as ``not in the language'' strings which aren't in $X$. F1-score is a metric which evaluates the ability of the hypothesis to do exactly that.

F1-score combines two other metrics, \emph{precision} and \emph{recall}, defined as follows:

\begin{definition}[Precision and Recall]\label{definition-precision-recall}
    In the setting of language learning, given that the language accepted by the inferred hypothesis is $X' \subseteq A^*$ and the unknown target language is $X \subseteq A^*$, \textup{precision} and \textup{recall} are given by

    \begin{align*}
        \text{precision} &= \frac{|X' \cap X|}{|X'|} \\
        \text{recall} &= \frac{|X' \cap X|}{|X|}
    \end{align*}
\end{definition}

In words: \emph{precision} is the proportion of strings in $X'$ which are also in $X$, whilst \emph{recall} is the proportion of strings in $X$ which are also in $X'$ (note: if the inferred hypothesis accepts $\varnothing$, we set its precision to 1). High precision indicates that the inferred hypothesis is capable of rejecting strings not in $X$, whilst high recall indicates that the inferred hypothesis is capable of accepting strings which are in $X$. Figure~\ref{fig:prec-rec} is a visualisation of these two metrics.

\begin{figure*}[h]
\centering
    \includegraphics[width=\textwidth*3/4]{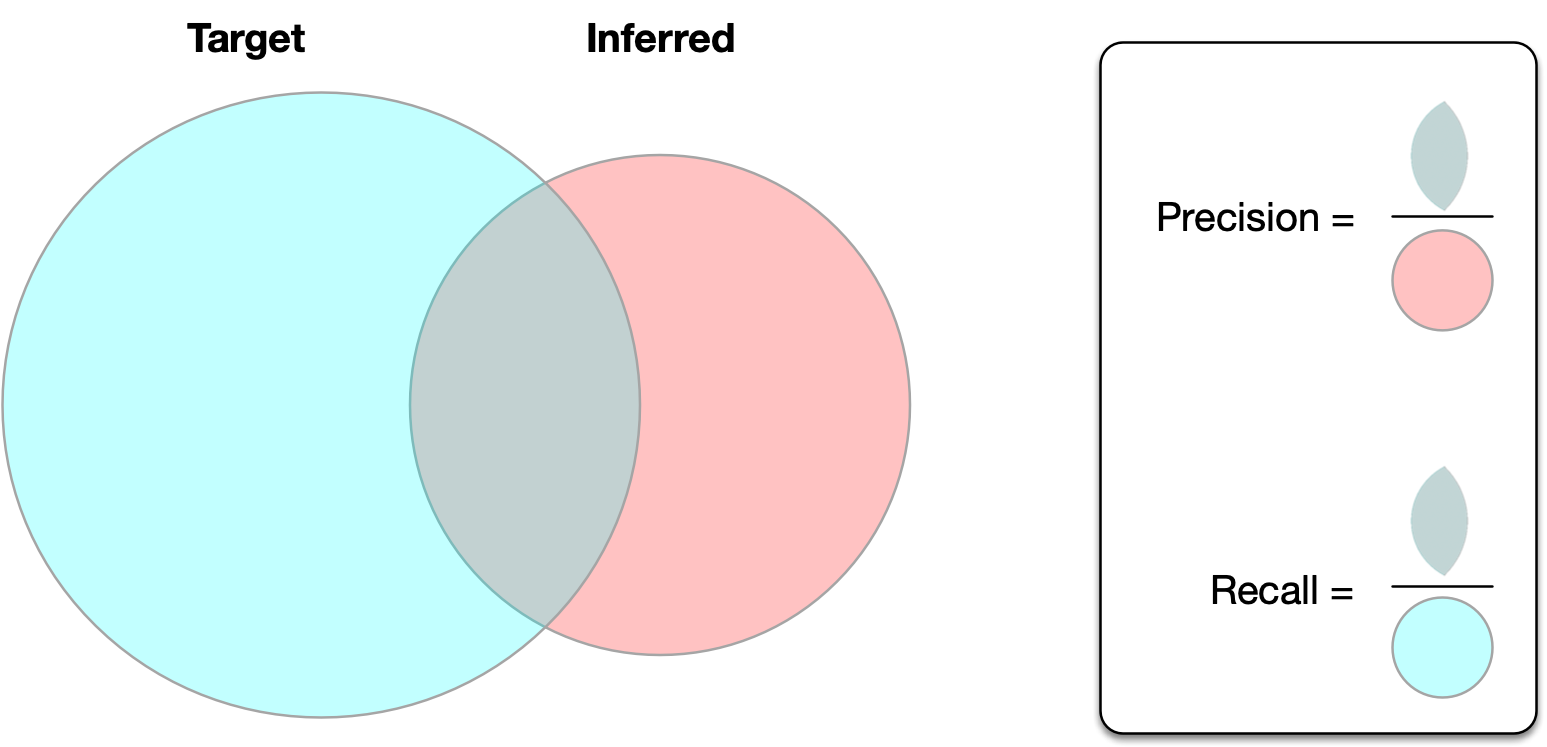}
\caption{Visualisation of precision and recall}\label{fig:prec-rec}
\end{figure*}

It is trivial to achieve high precision and low recall -- just return a hypothesis whose language is $\varnothing$. It is likewise trivial to achieve low precision and high recall -- just return a hypothesis whose language is $A^*$. Ideally, our inferred hypothesis achieves both high precision \emph{and} high recall -- and its ability to do so is captured by F1-score.

\begin{definition}[F1-score]\label{definition-f1}
    \textup{F1-score} is defined as the harmonic mean of precision and recall:

    \begin{align*}
        \text{F1-score} &= \frac{2 \times \text{precision} \times \text{recall}}{\text{precision} + \text{recall}}
    \end{align*}
\end{definition}

Compared to the other types of mean (namely arithmetic and geometric), harmonic mean gives greater weight to lower input values. Specifically, it tends to be closer in value to the \emph{minimum} of the input values than the arithmetic or geometric mean would be. For this reason, F1-score is high only if \emph{both} precision and recall are high; if either are low, then it also is low.

\subsubsection{Finding a Sampling Distribution that Optimises F1-score}\label{sampling-distribution-optimise-f1}

To summarise: we aim to run PL* with a random sampling oracle within the PAC framework; however, our end goal will not be to minimise $d(X', X)$, as is promised by the PAC guarantee, but will instead be to maximise F1-score. Though F1-score itself is distribution-independent, we still need to find a sampling distribution $D$ over $A^*$ if we are to run PL* within the PAC framework. In particular, we will need to find a sampling distribution which allows PL* to achieve ``high'' F1-scores, where, based on \citet{RN51} and \citet{RN52}, we consider a ``high'' F1-score as being greater than or equal to 0.9. What are some potential sampling distributions we could use? In the following sections, we will consider some possibilities.

\noindent\textbf{Pseudo-Uniform Distribution over $A^*$}\label{sampling-distribution-uniform}

As a baseline, we will consider the pseudo-uniform distribution given by the following procedure, based on the PAC equivalence oracle implemented by \citet{RN54}:

\begin{enumerate}
    \item Sample string length $l$ from a certain range $[l_{\min}, l_{\max}]$.
    \item Randomly-generate a string of length $l$ by repeatedly sampling characters from $A$ (assuming a uniform distribution over $A$).
\end{enumerate}

We refer to the distribution given by the above algorithm as a \emph{pseudo}-uniform distribution because a ``uniform distribution'' would imply that every string in $A^*$ has equal probability of being chosen. This is however not the distribution arising from this sampling algorithm. Firstly, the algorithm only generates strings whose length is in the range $[l_{\min}, l_{\max}]$, and so, for any string $x$ whose length is either less than $l_{\min}$ or greater than $l_{\max}$, the probability of it being generated is 0. Furthermore, this distribution is also not a true uniform distribution over all strings $x$ such that $l_{\min} \leq |x| \leq l_{\max}$. For $l_{\min} \leq n \leq l_{\max}$, the probability of generating a \emph{particular} $x \in A^n$ (where $A^n$ is the set of all strings over $A$ of length $n$) would be 

$$\frac{1}{(l_{\max} - l_{\min} + 1) \cdot |A^n|}$$

Let $l_{\min} \leq i < j \leq l_{\max}$, $x \in A^i$ and $y \in A^j$ be arbitrary. Since $i < j$, $|A^i| < |A^j|$, and so the probability of generating $x$ would be \emph{higher} than the probability of generating $y$.

Despite not being a true uniform distribution, we still may consider the distribution given by the above algorithm as \emph{pseudo}-uniform since, in Step (1), it chooses a string \emph{length} $l$ uniformly at random, and in Step (2), it chooses a string in $A^l$ uniformly at random.

A key drawback of this pseudo-uniform distribution is that, for many practical languages, the majority of strings drawn from it will be negative examples (i.e., strings \emph{not} in the target language). This is because most practical languages are \emph{sparse}, where, informally, a sparse language is a very small subset of $A^*$. Thus, if PL* is run with a random sampling oracle that draws from this distribution, it will most likely not learn what kinds of strings should be \emph{accepted}, and so, as we shall see in Section~\ref{empirical-evaluation-pac-f1}, will often return a hypothesis which has a very poor F1-score. 

\noindent\textbf{Distribution Given by a Prefix Query-Based Sampler}\label{sampling-distribution-pref-sampler}

An alternative to the approach described in Section~\ref{sampling-distribution-uniform} is Algorithm~\ref{pref-sampler}, adapted from the bFuzzer algorithm \citep{RN89}\footnote{The bFuzzer algorithm was originally designed to fuzz-test programs (i.e., to automatically generate test cases for a blackbox program) using prefix queries.}, which uses prefix queries to sample strings from $A^*$. Intuitively, the probability of generating a particular $x \in A^*$ is the same every time Algorithm~\ref{pref-sampler} is run. Thus, Algorithm~\ref{pref-sampler} does indeed give us a fixed sampling distribution over $A^*$, and so it can be used within the PAC framework.

\begin{algorithm}
\caption{Prefix query-based sampler}\label{pref-sampler}
\begin{flushleft}
    \hspace*{\algorithmicindent} \textbf{Input:} A finite alphabet $A$, probability of generating a positive example $p$, maximum number of attempts $m$, minimum length (for positive example) $l_{\min}$, maximum length $l_{\max}$, prefix oracle \\
    \hspace*{\algorithmicindent} \textbf{Output:} A string $x \in A^*$
\end{flushleft}
\begin{algorithmic}[1]
    \STATE $r \gets \varepsilon$ \algorithmiccomment{Current prefix}
    \STATE $S \gets \varnothing$ \algorithmiccomment{Set of symbols seen so far}
    \STATE $C \gets A$ \algorithmiccomment{Set of choices for the next symbol}
    \STATE $t \gets$ random([\xspace\positive, \negative], $w=[p, 1-p]$) \algorithmiccomment{Type of example to be generated}
    \STATE $a \gets 0$ \algorithmiccomment{Number of attempts made so far}
    \WHILE{$a < m$}
        \STATE $s \gets$ random($C$)
        \STATE $n \gets rs$
        \STATE $r \gets$ response returned by prefix oracle when given $n$ as input

        \IF{$r = \member$}
            \IF{$t = \positive$ and $|n| \geq l_{\min}$}
                \RETURN $n$
            \ELSE
                \STATE $c \gets$ random([ \extend, \restart]) \algorithmiccomment{Determine whether sampler will extend $n$}
                \IF{$c = \extend$}
                    \STATE $S \gets \varnothing$
                    \STATE $r \gets n$
                \ELSE
                    \STATE add $s$ to $S$
                \ENDIF
            \ENDIF
        \ELSIF{$r = \livepref$}
            \IF{$t = \positive$}
                \STATE $S \gets \varnothing$
                \STATE $r \gets n$
            \ELSE
                \RETURN $n$
            \ENDIF
        \ELSE
            \IF{$t = \positive$}
                \STATE add $s$ to $S$
            \ELSE
                \RETURN $n$
            \ENDIF
        \ENDIF

        \STATE $C \gets A \setminus S$

        \IF{$|n| > l_{\max}$}
            \STATE $a \gets a + 1$
            \STATE $r \gets \varepsilon$ \algorithmiccomment{Reset all variables}
            \STATE $S \gets \varnothing$
            \STATE $C \gets A$
            \STATE \algorithmicbreak
        \ENDIF
    \ENDWHILE
    \RETURN $\varnothing$ \algorithmiccomment{Failed to generate an example within $m$ attempts}
\end{algorithmic}
\end{algorithm}

We now describe Algorithm~\ref{pref-sampler}. It is parameterised by $l_{\max}$ (as well as other parameters), and produces a distribution over all strings $x$ such that $|x| \leq l_{\max}$ (i.e., a word whose length is greater than $l_{\max}$ has probability zero). We first assign $A$ to the set of choices $C$ (line 3). We then set $t$ to either \positive (with probability $p$) or \negative (with probability $1 - p$) (line 4). If $t$ is set to \positive, then Algorithm~\ref{pref-sampler} will attempt to generate a positive example (i.e., a string in the target language $X \subseteq A^*$); otherwise, it will attempt to generate a negative example (i.e., a string not in $X$). We then enter the main loop (lines 7-43), which is executed until we have exhausted our permitted number of attempts ($m$). During each iteration of the main loop, we do the following: we randomly-select a symbol $s$ from our set of choices $C$ (line 7), and then produce a new prefix $n$ (line 8) which is passed to the prefix oracle (line 9). Based on the value returned by the prefix oracle, we proceed as follows:

\begin{description}
    \item[\member] 
        If we'd chosen to generate a positive example, and furthermore $n$ meets the minimum length requirement, we return $n$ (line 12). If either condition is not met (i.e., we'd chosen to generate a negative example, or $n$ does not meet the minimum length requirement, or both), we then randomly-choose whether or not to extend $n$ (line 14). If we had chosen to extend $n$, we clear all the symbols that have been seen so far (line 16), and update the prefix $r$ to the value of the new prefix $n$ (line 17). Otherwise, if we had chosen \emph{not} to extend $n$, we note the symbol $s$ as having been seen, and continue (line 19).
    \item[\livepref]
        If we'd chosen to generate a positive example, then we should continue extending $n$. We clear all the symbols that have been seen so far (line 24), and and update the prefix $r$ by setting it to $n$ (line 25). Otherwise, if we'd chosen to generate a negative example, then we simply return $n$ (line 27).
    \item[\deadpref]
        If we'd chosen to generate a positive example, then, by definition of a dead prefix, we should not extend $n$. Hence, we record that the symbol $s$ had been seen, and continue (line 31). Otherwise, if we'd chosen to generate a negative example, then we simply return $n$ (line 33).
\end{description}

After performing these steps, we update our set of choices $C$ to exclude the last seen symbol (line 36). The next symbol $s$ will then be chosen from this updated set of choices (line 7). 

Unlike the pseudo-uniform distribution described in Section~\ref{sampling-distribution-uniform}, we hypothesise that the sampling distribution given by Algorithm~\ref{pref-sampler} will allow PL* to achieve high F1-scores, due to the algorithm's ability to sample both positive and negative examples (and furthermore control the probability of generating a positive example). In Section~\ref{empirical-evaluation-pac-f1}, we shall see whether this is indeed the case.

\section{Empirical Evaluation} \label{chap:empirical-evaluation}

In this section, we empirically evaluate our novel learning algorithm, PL*. We begin by introducing the regular target languages on which we will evaluate PL* as well as L*, the classical learning algorithm on which PL* is based (Section~\ref{empirical-evaluation-langs}). We then compare PL* and L* on a range of metrics when both are run with exact equivalence oracles (Section~\ref{empirical-evaluation-exact}). Finally, we evaluate PL* when run with a random sampling oracle, investigating how its parameters can be set to enable it to achieve high F1-scores (Section~\ref{empirical-evaluation-pac}).

\subsection{Regular Languages used for Evaluation}\label{empirical-evaluation-langs}

For all the experiments performed in this section, $A$ (the alphabet) is a (fixed) 256-symbol subset of the Unicode character set.

We have performed our empirical evaluation on four regular languages (over $A$) of practical interest, specifically the languages of

\begin{enumerate}
    \item restricted-nesting arithmetic expressions (\arith),
    \item restricted-nesting JSON strings (\json),
    \item restricted-nesting Dyck(2) words (\dyck), and
    \item dates in DD/MM/YY format (\dat).
\end{enumerate}

Below, we provide definitions of these four languages. Note that, for \arith, \json and \dyck, we present context-free grammars (CFGs); however, evaluation was performed on \emph{regular} equivalents of these context-free languages i.e., with a maximum nesting depth imposed. (The regular expressions representing the regular equivalents would be too long to include here.) \dat however is naturally a regular language, and so is represented by a regular expression below.

\begin{description}
    \item[\arith]

        \begin{align*}
            S &\rightarrow E \\
            E &\rightarrow F * E \mid F / E \mid F \\
            F &\rightarrow T + F \mid T - F \mid T \\
            T &\rightarrow (E) \mid D \\
            D &\rightarrow 0 \mid 1 \mid 2 \mid 3 \mid 4 \mid 5 \mid 6 \mid 7 \mid 8 \mid 9
        \end{align*}

    \item[\json]

        \begin{align*}
            S &\rightarrow E \\
            E &\rightarrow \{ C \} \\
            C &\rightarrow P, C \mid P \\
            P &\rightarrow T : V \\
            V &\rightarrow E \mid T \mid N \\
            T &\rightarrow \text{``$H$''} \mid \text{``''} \\
            H &\rightarrow \text{any character in the alphabet} \\
            N &\rightarrow 0 \mid 1 \mid 2 \mid 3 \mid 4 \mid 5 \mid 6 \mid 7 \mid 8 \mid 9
        \end{align*}

    \item[\dyck]

        \begin{align*}
            S &\rightarrow (S) S \mid [S] S \mid \varepsilon
        \end{align*}

    \item[\dat]
        \begin{align*}
        \end{align*}
        \vspace{-1.5cm}
        \begin{lstlisting}[xleftmargin=0.05\textwidth, xrightmargin=0.05\textwidth]
^(0[1-9]|[1,2][0-9]|3[0,1])\/(01|03|05|07|08|10|12)\/([0-9][0-9])$|^((0[1-9]|[1,2][0-9]|30)\/(04|06|09|11)\/([0-9][0-9]))$|^((0[1-9]|[1,2][0-9])\/(02)\/([0-9][0-9]))$
        \end{lstlisting}
    
\end{description}

\subsection{Comparing PL* and L* with an Exact Equivalence Oracle}\label{empirical-evaluation-exact}

We first compare the performance of L*, standard PL* and optimised PL* (as described in ~\Cref{chap:method}) when each is run with an \emph{exact} equivalence oracle; i.e., an oracle which, given a hypothesis $\mathcal{A'}$, responds with \y if $L(\mathcal{A'}) = X$ (for $X \subseteq A^*$ the target language), and \n otherwise. Specifically, L* will be run with an exact equivalence oracle and a membership oracle, and standard PL* and optimised PL* will each be run with an exact equivalence oracle and a \emph{prefix} oracle.

We will compare the performance of L*, standard PL* and optimised PL*, when each is run with an \emph{exact} equivalence oracle, in terms of the following four metrics:

\begin{enumerate}
    \item Number of cell comparisons made during closedness and consistency checks (where ``cell'' refers to a cell in an observation table, and a ``cell comparison'' is a comparison between any two cells in the table).
    \item Number of membership queries (MQs)/prefix queries (PQs) made (MQs in the case of L*, PQs in the case of standard/optimised PL*).
    \item Number of equivalence queries (EQs) made.
    \item Size of the learner's final observation table.
\end{enumerate}

For all these metrics, smaller values are more desirable than larger values. Smaller values of metrics (1), (2) and (3) are typically associated with lower runtime. Smaller values of metric (4) typically correspond to lower runtime \emph{and} space usage.

Figure~\ref{fig:exact-plots} compares the performance of L*, standard PL* and optimised PL* when each is run with an exact equivalence oracle and tasked with learning each of the following four languages (as described in Section~\ref{empirical-evaluation-langs}):

\begin{enumerate}
    \item \arith with maximum nesting depth 3,
    \item \json with maximum nesting depth 2,
    \item \dyck with maximum nesting depth 4, and
    \item \dat.
\end{enumerate}

\begin{figure*}[ht]
    \centering
    \begin{subfigure}[t]{0.45\textwidth}
        \centering
        \includegraphics[width=\textwidth]{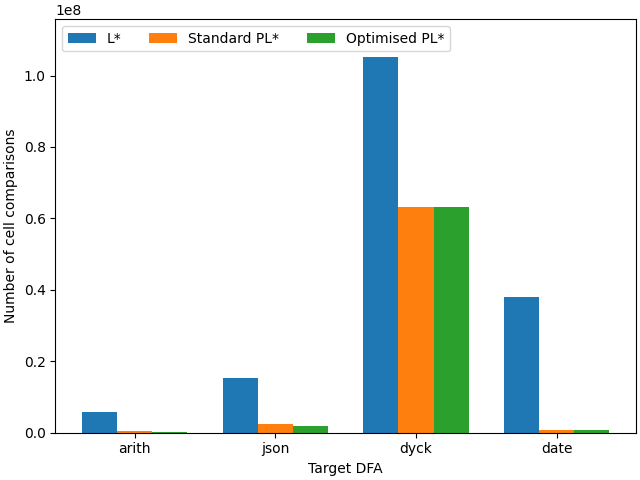}
        \caption{Number of cell comparisons}
        \label{fig:exact-cell-comparisons}
    \end{subfigure}
    \hspace{0.05\textwidth}
    \begin{subfigure}[t]{0.45\textwidth}
        \centering
        \includegraphics[width=\textwidth]{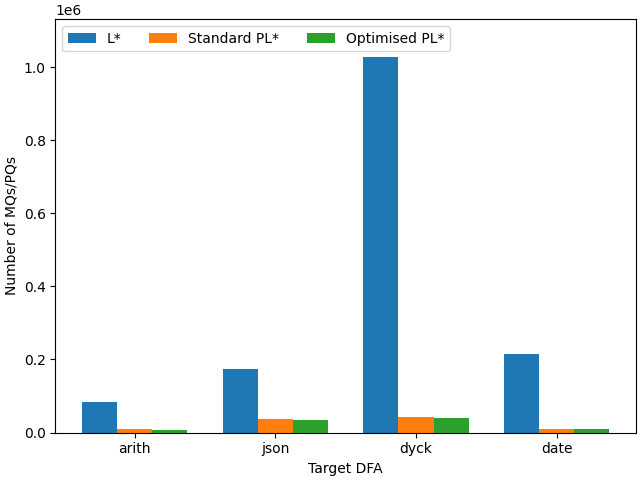}
        \caption{Number of MQs/PQs}
        \label{fig:exact-mqs-pqs}
    \end{subfigure}
    
    \vspace{0.5em}
    
    \begin{subfigure}[t]{0.45\textwidth}
        \centering
        \includegraphics[width=\textwidth]{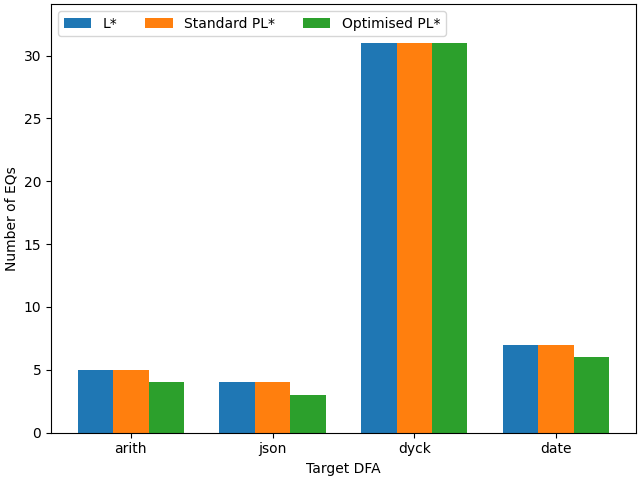}
        \caption{Number of EQs}
        \label{fig:exact-eqs}
    \end{subfigure}
    \hspace{0.05\textwidth}
    \begin{subfigure}[t]{0.45\textwidth}
        \centering
        \includegraphics[width=\textwidth]{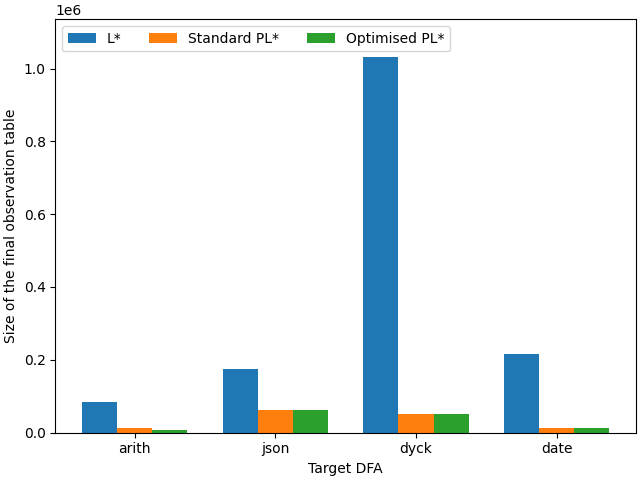}
        \caption{Size of the final observation table}
        \label{fig:exact-table-cells}
    \end{subfigure}
    
    \caption{Comparing L*, standard PL*, and optimised PL* when each is run with an exact equivalence oracle}
    \label{fig:exact-plots}
\end{figure*}

Figure~\ref{fig:exact-cell-comparisons} compares the three learning algorithms in terms of the number of cell comparisons performed. Across all four target languages, standard PL* and optimised PL* have similar performance, though in some cases optimised PL* performs slightly better. Both versions of PL* significantly outperform L* on all four languages. 

Figure~\ref{fig:exact-mqs-pqs} compares the three learners in terms of the number of membership queries (MQs) or prefix queries (PQs) performed (MQs in the case of L*, PQs in the case of standard/optimised PL*). Once again, standard PL* and optimised PL* have similar performance, with optimised PL* occasionally having slightly better performance. Again, both versions of PL* significantly outperform L*. Notably, for the \dyck language, the improvement given by PL* in terms of number of MQs/PQs performed is more significant than the improvement in number of cell comparisons (Figure~\ref{fig:exact-cell-comparisons}). This implies that, whilst PL* is able to significantly reduce the number of queries performed, it is perhaps less effective in speeding up closedness and consistency checks.

In Figure~\ref{fig:exact-eqs}, L*, standard PL* and optimised PL* are compared in terms of the number of equivalence queries (EQs) performed. As expected, L* and standard PL* have identical performance in terms of this metric -- from Theorem~\ref{theorem-termination}, we know that L* and standard PL* have equivalent observation tables, and thus construct identical hypotheses, at the end of each learning round. Hence, they both terminate after the same number of rounds (after outputting the correct hypothesis). As every round ends with a single equivalence query, L* and standard PL* thus perform exactly the same number of equivalence queries during their execution (when tasked with learning the same target language). Optimised PL* however occasionally performs less equivalence queries than the other two learners.

Figure~\ref{fig:exact-table-cells} compares the three learning algorithms in terms of observation table size. It is similar in appearance to Figure~\ref{fig:exact-mqs-pqs} i.e., standard and optimised PL* have similar performance, but both significantly outperform L*.

Overall, these results demonstrate that modifying a language learning algorithm to exploit the additional information given by prefix queries over membership queries can indeed lead to significant improvements both in terms of runtime and space usage.

\subsection{Evaluating PL* without an Equivalence Oracle}\label{empirical-evaluation-pac}

We now turn our attention to the more practical setting described in Section~\ref{random-sampling-oracle}; specifically, one in which only membership and prefix oracles are available (and an exact equivalence oracle is not). In this setting, we evaluate only optimised PL*, since in Section~\ref{empirical-evaluation-exact} it was revealed to be the best-performing learner across a range of metrics. Recall from Section~\ref{random-sampling-oracle} that, in the absence of an exact equivalence oracle, we instead run PL* with a prefix oracle and a \emph{random sampling oracle} within the probably approximately correct (PAC) learning framework. When PL* is run with a random sampling oracle that samples from $A^*$ according to distribution $D$, using the approach proposed by \citet{RN1}, there is a theoretical guarantee on the accuracy of the hypothesis learned. Specifically, if $X' \subseteq A$ denotes the language accepted by the inferred hypothesis and $X \subseteq A^*$ the target language, then $X'$ is guaranteed to satisfy the following:

$$\Pr(d(X', X) \leq \varepsilon) \geq 1 - \delta$$

where

$$d(L, L') = \sum_{x \in L \oplus L'} \Pr(x)$$

quantifies the ``difference'' between two languages $L$ and $L'$ over $A$.

In words: the probability that the hypothesis returned by the learner is a ``good'' (quantified by $\varepsilon$) approximation of $X$ should be ``high'' (quantified by $\delta$).

In Section~\ref{empirical-evaluation-pac-verif-guarantee}, we verify that the PAC guarantee stated above indeed holds in practice. Then, in Section~\ref{empirical-evaluation-pac-f1}, we proceed to investigate how the sampling distribution $D$, as well as the PAC parameters $\varepsilon$ and $\delta$, can be set to allow PL* to achieve high F1-scores (Definition~\ref{definition-f1}), which is the metric we are ultimately seeking to maximise.

\subsubsection{Empirically Verifying the PAC Guarantee}\label{empirical-evaluation-pac-verif-guarantee}

Let us now investigate whether the theoretical PAC guarantee indeed holds in practice. For each of the sampling distributions $D$ described in Section~\ref{random-sampling-oracle} (namely the pseudo-uniform distribution and the distribution given by the prefix query-based sampler), we run PL* with a prefix oracle and a random sampling oracle that draws from $D$, across all four target languages (namely \arith with maximum nesting depth 3, \json with maximum nesting depth 2, \dyck with maximum nesting depth 4, and \dat). For all target languages, we set the parameters $p$ and $m$ of the prefix query-based sampler to 0.5 and 200, respectively. However, we set $l_{\min}$ and $l_{\max}$ differently for each target language:

\begin{itemize}
    \item for \arith, we set $l_{\min} = 100, l_{\max} = 200$,
    \item for \json, we set $l_{\min} = 25, l_{\max} = 40$,
    \item for \dyck, we set $l_{\min} = 10, l_{\max} = 20$, and
    \item for \dat, we set $l_{\min} = 8, l_{\max} = 40$.
\end{itemize}

(For each target, we select values of $l_{\min}$ and $l_{\max}$ that allow reasonably complex strings to be sampled, for example strings that include some parentheses nesting, various operators, and so forth.)

For each target-distribution combination, we run PL* 100 times and then compute $d(X', X)$ each time (by sampling 1000 strings from $D$). We then use these 100 $d(X', X)$ values to compute $\Pr(d(X', X) \leq \varepsilon)$, as shown in Table~\ref{tab:empirical-evaluation-pac-verif-guarantee}. With $\Pr(d(X', X) \leq \varepsilon)$ consistently being 1.0, and $(1 - \delta)$ only being 0.95, clearly $\Pr(d(X', X) \leq \varepsilon) \geq 1 - \delta$ holds true in all cases.

\begin{table*}[ht]
    \centering
    \begin{tabularx}{\textwidth}{|X|X|r|r|r|}
\hline
        \textbf{Target Language $X$} & \textbf{Sampling Distribution} & \textbf{$\varepsilon$} & \textbf{$\delta$} & \textbf{$\Pr(d(X', X) \leq \varepsilon)$} \\
        \hline
        \multirow{2}{*}{\arith} & \multicolumn{1}{l|}{Pseudo-Uniform} & \multicolumn{1}{r|}{0.05} & \multicolumn{1}{r|}{0.05} & \multicolumn{1}{r|}{1.0} \\\cline{2-5}
                                & \multicolumn{1}{l|}{Prefix Query-Based} & \multicolumn{1}{r|}{0.05} & \multicolumn{1}{r|}{0.05} & \multicolumn{1}{r|}{1.0} \\\cline{2-5}
        \hline
        \multirow{2}{*}{\json} & \multicolumn{1}{l|}{Pseudo-Uniform} & \multicolumn{1}{r|}{0.05} & \multicolumn{1}{r|}{0.05} & \multicolumn{1}{r|}{1.0} \\\cline{2-5}
                                & \multicolumn{1}{l|}{Prefix Query-Based} & \multicolumn{1}{r|}{0.05} & \multicolumn{1}{r|}{0.05} & \multicolumn{1}{r|}{1.0} \\\cline{2-5}
        \hline
        \multirow{2}{*}{\dyck} & \multicolumn{1}{l|}{Pseudo-Uniform} & \multicolumn{1}{r|}{0.05} & \multicolumn{1}{r|}{0.05} & \multicolumn{1}{r|}{1.0} \\\cline{2-5}
                                & \multicolumn{1}{l|}{Prefix Query-Based} & \multicolumn{1}{r|}{0.05} & \multicolumn{1}{r|}{0.05} & \multicolumn{1}{r|}{1.0} \\\cline{2-5}
        \hline
        \multirow{2}{*}{\dat} & \multicolumn{1}{l|}{Pseudo-Uniform} & \multicolumn{1}{r|}{0.05} & \multicolumn{1}{r|}{0.05} & \multicolumn{1}{r|}{1.0} \\\cline{2-5}
                                & \multicolumn{1}{l|}{Prefix Query-Based} & \multicolumn{1}{r|}{0.05} & \multicolumn{1}{r|}{0.05} & \multicolumn{1}{r|}{1.0} \\\cline{2-5}
        \hline
    \end{tabularx}
    \caption{Empirically verifying the PAC guarantee}
    \label{tab:empirical-evaluation-pac-verif-guarantee}
\end{table*}

\subsubsection{Optimising F1-score}\label{empirical-evaluation-pac-f1}

Recall the metric of \emph{F1-score} (Definition~\ref{definition-f1}). Intuitively, F1-score quantifies how well the language $X' \subseteq A^*$ of the inferred hypothesis ``matches'' the target language $X \subseteq A^*$. Specifically, F1-score is defined as the harmonic mean of two other metrics, \emph{precision} and \emph{recall}:

\begin{align*}
    \text{precision} &= \frac{|X' \cap X|}{|X'|} \\
    \text{recall} &= \frac{|X' \cap X|}{|X|} \\
    \text{F1-score} &= \frac{2 \times \text{precision} \times \text{recall}}{\text{precision} + \text{recall}}
\end{align*}

High precision indicates that the inferred hypothesis is capable of \emph{rejecting} strings \emph{not} in $X$, whilst high recall indicates that the inferred hypothesis is capable of \emph{accepting} strings which \emph{are} in $X$.

In Section~\ref{random-sampling-oracle}, we explained why, despite the strength of the PAC guarantee, in practice F1-score is a more appropriate metric to optimise. In order to bridge the mismatch between the metric optimised by PAC (namely $d(X', X)$) and F1-score, we now investigate whether there is perhaps a relationship between the parameters of the PAC framework (namely $\varepsilon$ and $\delta$) and the F1-score of the hypothesis returned by PL*. Specifically, for each of our four target languages (\arith, \json, \dyck and \dat), we determine the relationship between $\varepsilon$, $\delta$ and F1-score when PL* is run with four kinds of random sampling oracles:

\begin{enumerate}
    \item the pseudo-uniform random sampler described in Section~\ref{sampling-distribution-uniform} (our baseline approach),
    \item the prefix query-based sampler described in Section~\ref{sampling-distribution-pref-sampler}, with $p$ set to 0.05,
    \item the prefix query-based sampler with $p$ set to 0.5, and
    \item the prefix query-based sampler with $p$ set to the maximum value of 1.
\end{enumerate}

Note that, compared to the previous experiments, here we increase the maximum nesting depth of the target languages \arith, \json and \dyck to mimic learning the corresponding context-free grammar (which would allow \emph{unrestricted} nesting):

\begin{itemize}
    \item for \arith, we learn the regular equivalent enforcing a maximum nesting depth of 100,
    \item for \json, we learn the regular equivalent enforcing a maximum nesting depth of 100, and
    \item for \dyck, we learn the regular equivalent enforcing a maximum nesting depth of 5.
\end{itemize}

Note also that, whilst the parameter $m$ of the prefix query-based sampler is set to 200 for all experiments, the parameters $l_{\min}$ and $l_{\max}$ are set differently based on the language being learned (as in Section~\ref{empirical-evaluation-pac-verif-guarantee}):

\begin{itemize}
    \item for \arith, we set $l_{\min} = 100, l_{\max} = 200$,
    \item for \json, we set $l_{\min} = 25, l_{\max} = 40$,
    \item for \dyck, we set $l_{\min} = 10, l_{\max} = 20$, and
    \item for \dat, we set $l_{\min} = 8, l_{\max} = 40$.
\end{itemize}

Furthermore, note that we time out each execution of PL* after 2 minutes.

Let us now explain how we estimate precision and recall for a given target-sampler-$\varepsilon$-$\delta$ combination. Let $X \subseteq A$ denote the target language, and $X' \subseteq A^*$ the language of the hypothesis inferred by PL*. Let $M_X$ denote our handwritten DFA recognising $X$ (for each target language, we handwrote a DFA recognising it). Let $M_{X'}$ denote the DFA returned by PL* (which recognises $X'$). We first take $M_X$ and $M_{X'}$ and convert them into regular grammars $G_X$ and $G_{X'}$ respectively, using the procedure described in Section~\ref{grammars}. Then, to estimate precision, we sample (uniformly at random, using the procedure described by \citet{RN78}) 1000 strings from $G_{X'}$, within the length range $[l_{\min}, l_{\max}]$, and determine the proportion of strings in this sample that are also in $X$. Similarly, to estimate recall, we sample (again uniformly at random) 1000 strings from $G_{X}$, within the length range $[l_{\min}, l_{\max}]$, and determine the proportion of strings in this sample that are also in $X'$. F1-score is then the harmonic mean of these two values. For every such target-sampler-$\varepsilon$-$\delta$ combination, we run PL* 5 times, and compute an average of the F1-scores produced.

Figures~\ref{fig:empirical-evaluation-pac-f1-arith}, \ref{fig:empirical-evaluation-pac-f1-json}, \ref{fig:empirical-evaluation-pac-f1-dyck} and \ref{fig:empirical-evaluation-pac-f1-date} illustrate the relationship between $\varepsilon$, $\delta$ and F1-score when PL* is tasked with learning each of our four target languages and run with each of our four random sampling oracles. (In each of these graphs, the variable on the horizontal axis is $\varepsilon$, the variable on the vertical axis is $\delta$, and colour indicates F1-score.) For each target-sampler combination, we also identify the $\varepsilon$-$\delta$ combination that produces the highest F1-score.

In Figure~\ref{fig:empirical-evaluation-pac-f1-arith}, we see that running PL* with the pseudo-uniform random sampler always yields F1-scores of 0. As described in Section~\ref{sampling-distribution-uniform}, this is as expected because a pseudo-uniform random sampler yields only very few positive examples, and so the language of the inferred hypothesis is generally a very small subset of the target language. When PL* is run with the prefix query-based sampler with $p = 0.05$, F1-score is high only when $\varepsilon$ is very small. When $p = 0.5$, F1-score is fairly high for most $\varepsilon$-$\delta$ combinations, only dropping when $\varepsilon$ and $\delta$ are both large. Finally, when $p = 1$, F1-score is high for all $\varepsilon$-$\delta$ combinations, though it tends to be at its highest when $\varepsilon$ and $\delta$ are both small. Furthermore, the highest F1-scores achieved when the prefix query-based sampler is used are all very close to or equal to 1.

\begin{figure*}[h]
\includegraphics[width=\columnwidth]{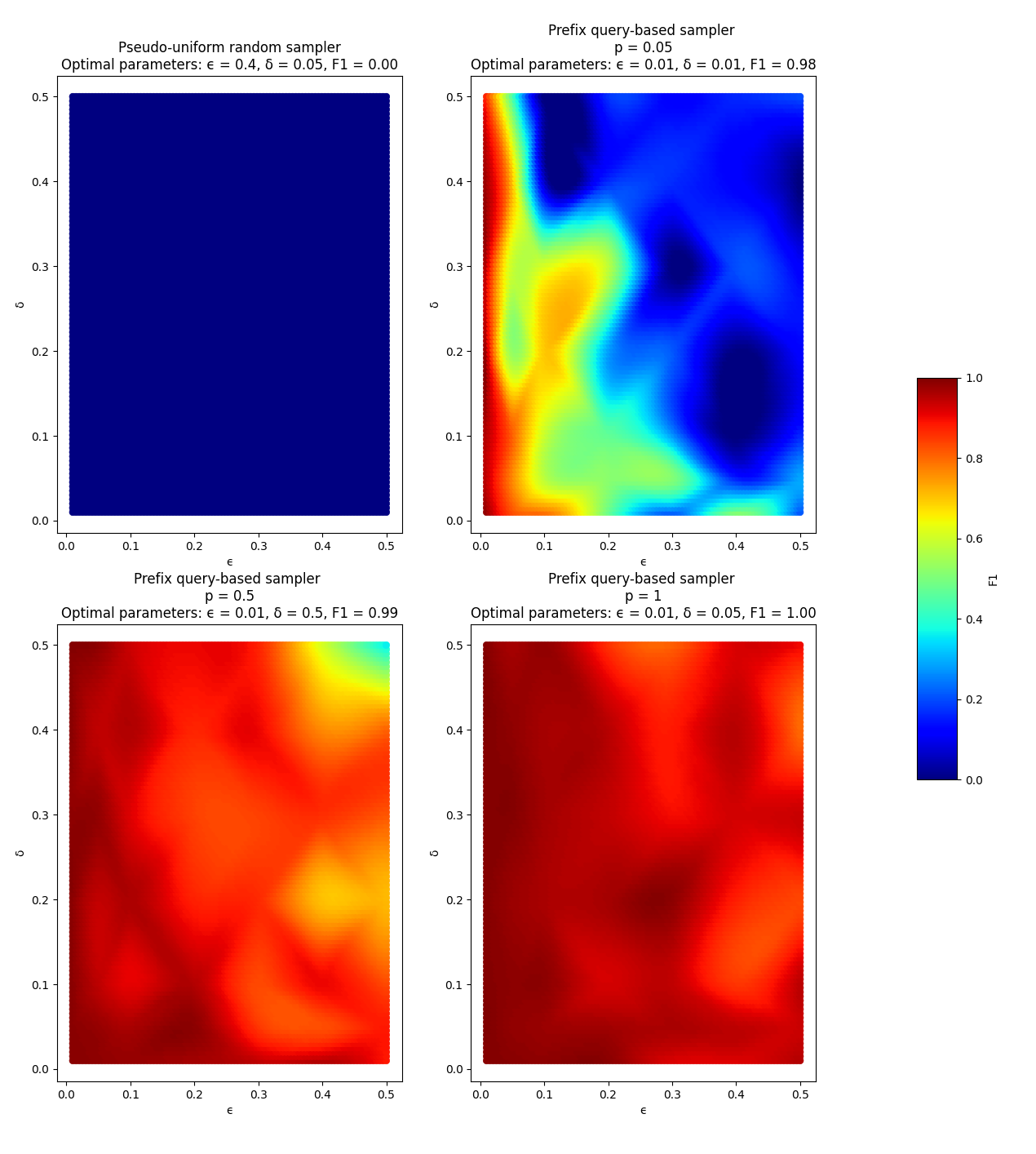}
\caption{Relationship between $p$, $\varepsilon$, $\delta$ and F1 for \arith with maximum nesting depth 100}\label{fig:empirical-evaluation-pac-f1-arith}
\end{figure*}

In Figure~\ref{fig:empirical-evaluation-pac-f1-json}, we see that running PL* with the pseudo-uniform random sampler once again consistently yields F1-scores of 0. When PL* is run with the prefix query-based sampler with $p = 0.05$, F1-score is once again high only when $\varepsilon$ is very small. When $p = 0.5$, F1-score is \emph{consistently} high when both $\varepsilon$ and $\delta$ are small; elsewhere, some variability is evident. Finally, when $p = 1$, F1-score is high for most $\varepsilon$-$\delta$ combinations, with slight drops when either (or both) of $\varepsilon$ or $\delta$ are large. Furthermore, the highest F1-scores achieved when the prefix query-based sampler is used are once again all very close to or equal to 1.

\begin{figure*}[h]
\includegraphics[width=\textwidth]{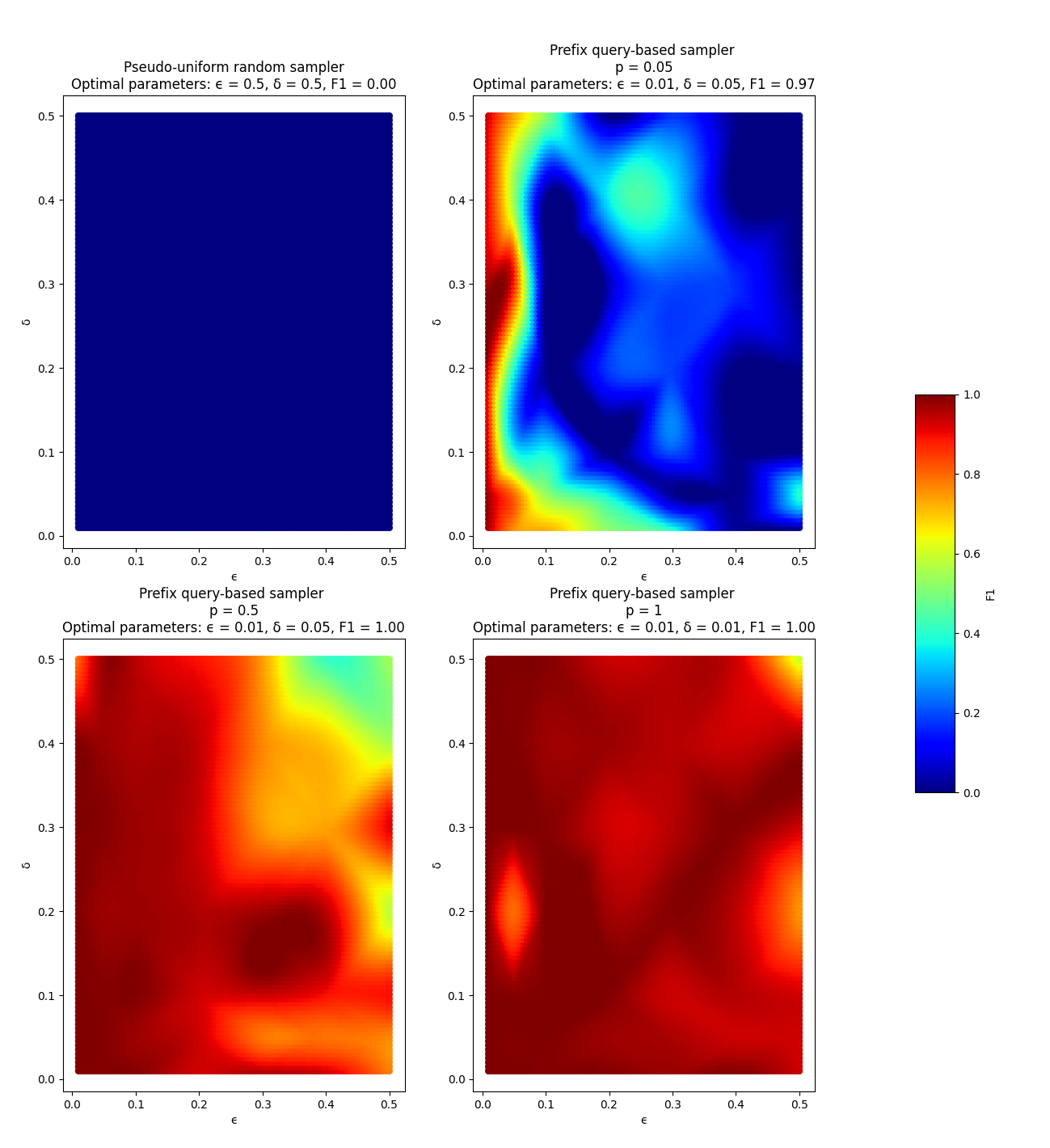}
\caption{Relationship between $p$, $\varepsilon$, $\delta$ and F1 for \json with maximum nesting depth 100}\label{fig:empirical-evaluation-pac-f1-json}
\end{figure*}

In Figure~\ref{fig:empirical-evaluation-pac-f1-dyck}, we see that running PL* with the pseudo-uniform random sampler again yields F1-scores of 0 for all $\varepsilon$-$\delta$ combinations. When PL* is run with the prefix query-based sampler with $p$ set to $0.05$, F1-score is once again high only when $\varepsilon$ is very small. When $p = 0.5$, F1-score is high for most $\varepsilon$-$\delta$ combinations, dropping only when $\varepsilon$ is large. Finally, when $p = 1$, F1-score is consistently 1, the highest-possible value. Furthermore, the highest F1-scores achieved when the prefix query-based sampler is used are yet again all very close to or equal to 1.

\begin{figure*}[h]
\includegraphics[width=\textwidth]{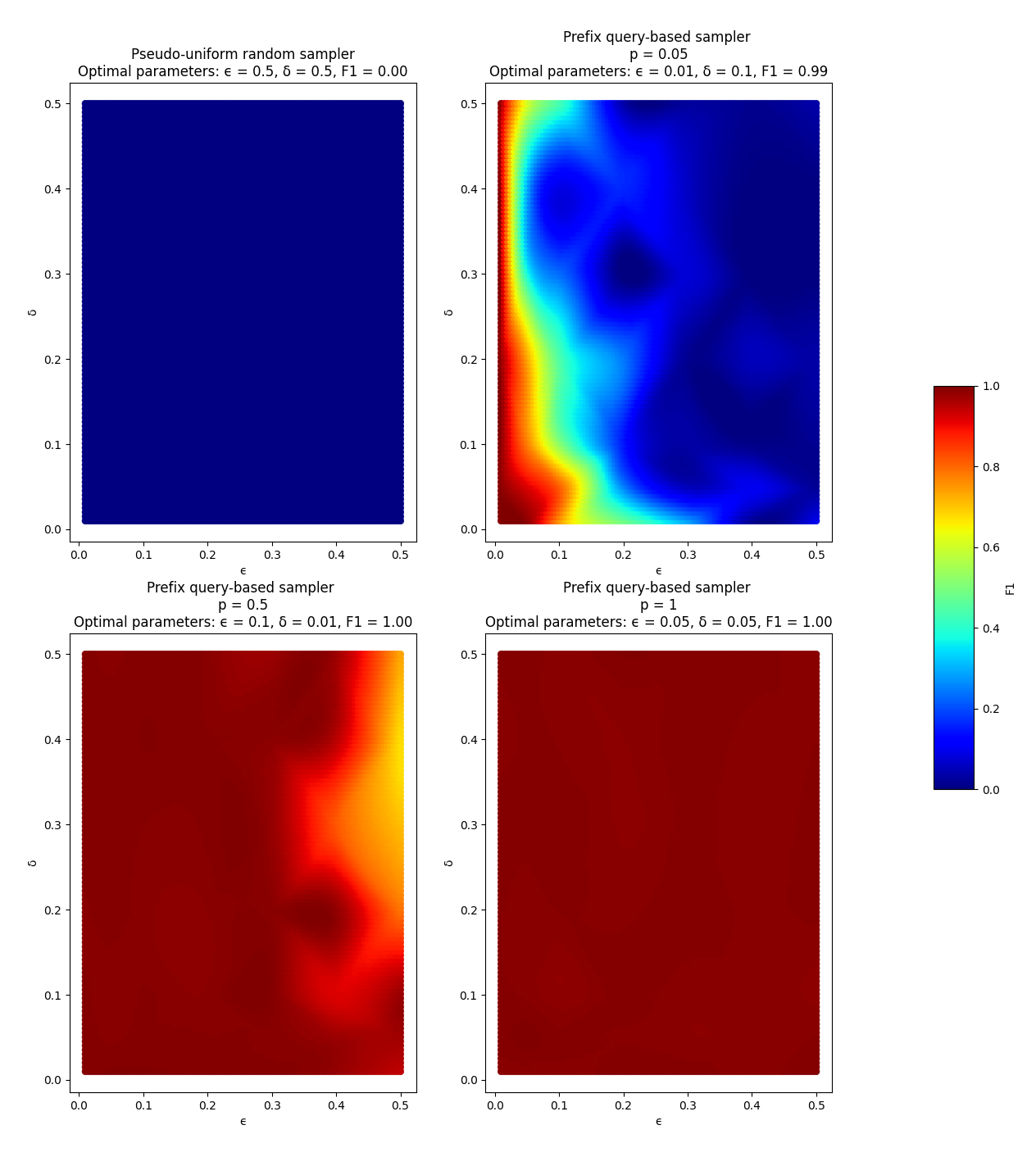}
\caption{Relationship between $p$, $\varepsilon$, $\delta$ and F1 for \dyck with maximum nesting depth 5}\label{fig:empirical-evaluation-pac-f1-dyck}
\end{figure*}

In Figure~\ref{fig:empirical-evaluation-pac-f1-date}, we once again see that running PL* with the pseudo-uniform random sampler consistently yields F1-scores of 0. When PL* is run with the prefix query-based sampler with $p = 0.05$, F1-score is once again high only when $\varepsilon$ is very small. When $p = 0.5$, F1-score is high when both $\varepsilon$ and $\delta$ are small, though surprisingly there is a slight drop as $\varepsilon$ and $\delta$ both approach 0. This is very likely a consequence of the variability inherent in the learning process, which, like on this occasion, may sometimes lead to unexpected results. Finally, when $p = 1$, F1-score is always fairly high, though there is some variability throughout. Furthermore, the highest F1-scores achieved when the prefix query-based sampler is used are all very close to 1.

\begin{figure*}[h]
\includegraphics[width=\textwidth]{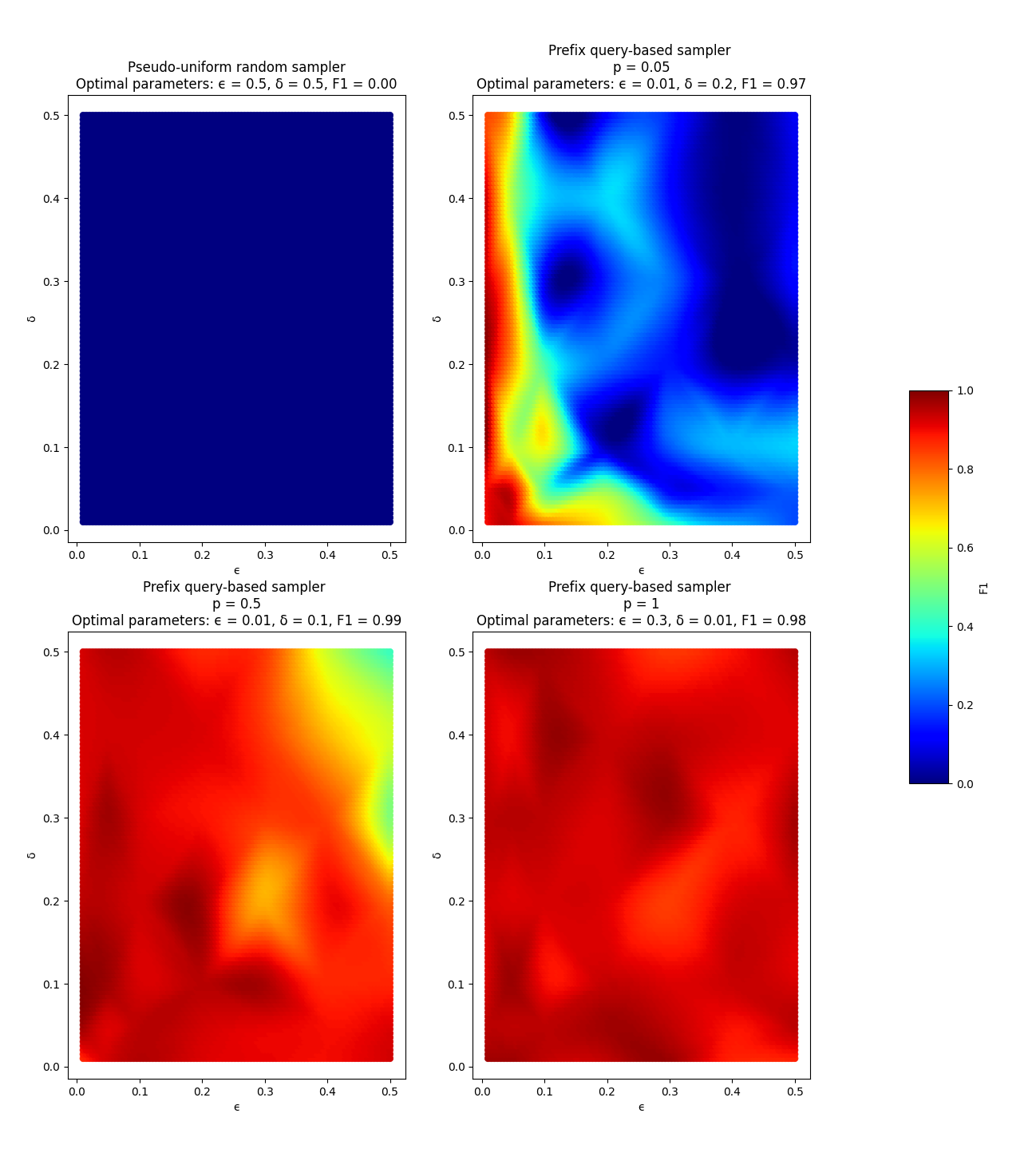}
\caption{Relationship between $p$, $\varepsilon$, $\delta$ and F1 for \dat}\label{fig:empirical-evaluation-pac-f1-date}
\end{figure*}

We now describe some general trends evident in Figures~\ref{fig:empirical-evaluation-pac-f1-arith}, \ref{fig:empirical-evaluation-pac-f1-json}, \ref{fig:empirical-evaluation-pac-f1-dyck} and \ref{fig:empirical-evaluation-pac-f1-date}:

\begin{description}
    \item [The pseudo-uniform random sampler] Using the pseudo-uniform random sampler as a random sampling oracle always leads to F1-scores of 0 -- the lowest possible value. As explained previously, this result is expected, because the pseudo-uniform random sampler produces hardly any positive examples. Thus, PL* does not adequately learn what kinds of strings it should \emph{accept} -- and so the hypothesis produced generally recognises a very small subset of the target language.

    \item[The parameters $\varepsilon$ and $\delta$] When PL* is run with the prefix query-based sampler, \emph{smaller} values of $\varepsilon$ and $\delta$ generally lead to higher F1-scores (for fixed $p$). Within the PAC framework, smaller values of $\varepsilon$ and $\delta$ intuitively lead to a ``better'' hypothesis being produced, in terms of classification accuracy. Our results suggest that, if $\varepsilon$ and $\delta$ are reduced (and thus classification accuracy is optimised), F1-score also increases, thus demonstrating a clear connection between the PAC guarantee and F1-score.
    
    \item[The parameter $p$] When PL* is run with the prefix query-based sampler, \emph{larger} values of $p$ generally lead to higher F1-scores (for fixed $\varepsilon$ and $\delta$). This result is somewhat surprising because, intuitively, one might expect that using a prefix query-based sampler which produces a \emph{balanced} sample of positive and negative examples (as is the case when $p = 0.5$) would lead to higher F1-scores than using one which produces an imbalanced sample; that is, a sample containing mostly negative examples ($p = 0.05$), or a sample containing solely positive examples ($p = 1$). The trend we observe in practice suggests that positive examples are in some way more useful to PL* than negative examples as it refines its hypothesis.
\end{description}

Importantly, when PL* is run with the prefix query-based sampler, with $\varepsilon$ and $\delta$ small and $p$ large, it is able to produce a hypothesis with ``high'' F1-score (i.e., 0.9 or greater), for every target language.

Overall, the results presented in this section reveal how the parameters of the PAC framework ($\varepsilon$ and $\delta$) as well as the sampling distribution used relate to the metric of F1-score. Several clear trends are evident, revealing that the PAC framework is of use even when the metric we are interested in differs from the metric it directly optimises. It is also encouraging to see that PL*, when run within the PAC framework, does have the ability to achieve high F1-scores, across various languages of practical interest. In other words, PL*, when adapted to use only queries that can be answered by a parser, can still efficiently and accurately learn the target language.

\section{Discussion} \label{chap:discussion}

In this section, we provide a final discussion of the results achieved thus far (Section~\ref{discussion-plstar}), followed by some potential directions for future work (Section~\ref{discussion-future-work}).

\subsection{The PL* Algorithm}\label{discussion-plstar}

Recall from ~\Cref{chap:problem-statement} that the aim of this work was to understand how existing language learning algorithms could be modified to exploit prefix queries, to either learn more efficiently or learn a broader class of languages. Furthermore, upon designing such a modification, we aimed to show how it could be used to learn the language of a parser. We were able to successfully address this aim in two phases, as summarised below.

We began by designing the PL* algorithm, the first known language learning algorithm to use prefix queries and a novel modification of the classical L* algorithm proposed by \citet{RN6}. Whilst L* uses equivalence and membership queries, PL* uses equivalence and \emph{prefix} queries. We proved that PL* always performs less queries than L*, when both are tasked with learning the same non-dense language. Furthermore, we showed empirically that PL* significantly outperforms L* in terms of both runtime and space usage. This therefore demonstrated how exploiting the additional information given by prefix queries over membership queries -- which is precisely what PL* does -- can lead to substantial improvements in efficiency.

Subsequently, because prefix queries naturally arose from parsers (that is, many parsers are naturally capable of answering prefix queries), we aimed to show how our proposed algorithm could be applied to the problem of learning the language of a parser. In order to apply PL* to this problem, we needed to adapt it to a more practical but constrained setting in which an equivalence oracle was unavailable, since parsers are generally incapable of answering equivalence queries. Specifically, instead of providing PL* with a prefix oracle and an equivalence oracle, we gave it access to a prefix oracle and a \emph{random sampling oracle} which sampled from a fixed distribution $D$ over $A^*$, within the probably approximately correct (PAC) framework. Though running PL* within the PAC framework (in the manner proposed by \citet{RN1}) gave rise to a strong theoretical guarantee, we observed that the metric it optimised was not so useful in practice since it was dependent upon the sampling distribution $D$. We identified F1-score as a more suitable alternative, and, despite the apparent mismatch between this metric and the metric optimised by PAC, we were able to demonstrate a clear relationship between the distribution $D$, the parameters of the PAC framework (namely $\varepsilon$ and $\delta$) and F1-score. Specifically, we were able to identify how $D$, $\varepsilon$ and $\delta$ could be chosen to allow PL* to achieve high F1-scores across a range of target languages common in software engineering. Thus, we observed that PL*, when applied to the problem of learning the language of a parser, was still able to accurately learn the target language, as quantified by F1-score.

\subsection{Future Work}\label{discussion-future-work}

There are five main directions for future work in this area. Firstly, in Section~\ref{comparison-with-lstar}, we proved that PL* always performs less queries than L*, when both are tasked with learning the same non-dense language (Theorem~\ref{theorem-non-dense}). However, we do not prove a similar result for space usage; doing so would be a possible direction for future work. Furthermore, in Section~\ref{comparison-with-lstar}, we note that, in practice, the improvements given by PL* over L* may be greater than those stated in Theorem~\ref{theorem-non-dense}. Hence, another possible direction for future work would be providing a \emph{tighter} theoretical analysis of the improvements given by PL* over L*.

Whilst we have theoretically and empirically compared PL* with L* when each was run with an \emph{exact} equivalence oracle (Sections~\ref{comparison-with-lstar} and \ref{empirical-evaluation-exact}), we have not yet compared PL* run with a \emph{random sampling} oracle with other learning algorithms designed to model a parser, for example GLADE \citep{RN51} and ARVADA \citep{RN52}. Comparing PL* with such learning algorithms in terms of F1-score would be useful in improving our understanding of where our approach sits in the context of other algorithms designed to model a parser. Note however that, though we haven't yet performed an empirical comparison with GLADE and ARVADA, we can still \emph{qualitatively} compare PL* with these algorithms. One aspect of PL* which differentiates it from both GLADE and ARVADA is that PL* does not require input examples, whilst GLADE and ARVADA both do. Specifically, PL* only requires blackbox access to the parser whose language it is modelling, whilst GLADE and ARVADA both require blackbox access to the parser \emph{and} a set of positive input examples. Thus, PL* can be applied even when input examples are unavailable; for example, if we only have access to the program binary.

With PL* achieving significant performance improvements over the classical algorithm it is based on (namely L*), it is conceivable that modifying a more recent language learning algorithm to use prefix queries in place of membership queries may yield similar improvements. Though L* is an important contribution to the field of language learning, more efficient algorithms for learning regular languages with membership and equivalence queries have been proposed \citep{RN71, RN38}. Hence, it would be worthwhile to consider modifying those more recent algorithms in a similar way to how we have modified L*.

PL*, like many language learning algorithms, learns a deterministic finite automata (DFA) representation of the target language. However, in practice, some languages cannot be adequately modelled by a DFA, and so it would be desirable to learn a more expressive context-free grammar (CFG) representation. Hence, a possible area of future work would be extending this work to the learning of CFGs with prefix queries. Perhaps one could consider an existing learning algorithm which learns CFGs \citep{RN48, RN49, RN63, RN51, RN52}, and modify it to use prefix queries (in a similar way to how we modified L*). Alternatively, it may be possible to somehow generalise the restricted-nesting DFA representation returned by PL* into a CFG which allows \emph{unrestricted} nesting, using various heuristics or otherwise.

\section{Conclusion}
\label{sec:conclusion}

This work sought to investigate how the prefix query could be applied in the field of language learning. We identified that prefix queries arise naturally from the field of software engineering, in that many practical parsers are capable of answering them. Our key contribution was the PL* algorithm, the first known language learning algorithm to make use of the prefix query and a novel modification of the classical L* algorithm proposed by \citet{RN6}. We proved that PL* always makes less queries than L* when both learners are tasked with learning the same non-dense language (many languages of practical interest are in fact non-dense). Furthermore, we demonstrated empirically that PL* significantly outperforms L* both in terms of runtime and space usage. We then described how the PL* algorithm could be used to model the language of a parser by adapting it to a more practical setting, in which an equivalence oracle was unavailable. Notably, in this new setting, PL* only required access to a prefix oracle; it did not require labelled examples or any other information about the target language. We showed that, even in this more restricted setting, PL* could still efficiently and accurately learn a range of languages of practical interest. With various possible directions for future work, PL* lays the foundation for the application of prefix queries to the field of language learning, taking one step closer towards bridging the gap between formal language theory and software engineering.
\bibliographystyle{ACM-Reference-Format}
\bibliography{arxiv2025-lstar}


\begin{thebibliography}{31}


\ifx \showCODEN    \undefined \def \showCODEN     #1{\unskip}     \fi
\ifx \showDOI      \undefined \def \showDOI       #1{#1}\fi
\ifx \showISBNx    \undefined \def \showISBNx     #1{\unskip}     \fi
\ifx \showISBNxiii \undefined \def \showISBNxiii  #1{\unskip}     \fi
\ifx \showISSN     \undefined \def \showISSN      #1{\unskip}     \fi
\ifx \showLCCN     \undefined \def \showLCCN      #1{\unskip}     \fi
\ifx \shownote     \undefined \def \shownote      #1{#1}          \fi
\ifx \showarticletitle \undefined \def \showarticletitle #1{#1}   \fi
\ifx \showURL      \undefined \def \showURL       {\relax}        \fi
\providecommand\bibfield[2]{#2}
\providecommand\bibinfo[2]{#2}
\providecommand\natexlab[1]{#1}
\providecommand\showeprint[2][]{arXiv:#2}

\bibitem[Angluin(1987)]%
        {RN6}
\bibfield{author}{\bibinfo{person}{Dana Angluin}.}
  \bibinfo{year}{1987}\natexlab{}.
\newblock \showarticletitle{{L}earning regular sets from queries and
  counterexamples}.
\newblock \bibinfo{journal}{\emph{Information and computation}}
  \bibinfo{volume}{75}, \bibinfo{number}{2} (\bibinfo{year}{1987}),
  \bibinfo{pages}{87--106}.
\newblock
\showISSN{0890-5401}
\urldef\tempurl%
\url{https://doi.org/10.1016/0890-5401(87)90052-6}
\showDOI{\tempurl}


\bibitem[Angluin(1988)]%
        {RN1}
\bibfield{author}{\bibinfo{person}{Dana Angluin}.}
  \bibinfo{year}{1988}\natexlab{}.
\newblock \showarticletitle{Queries and {C}oncept {L}earning}.
\newblock \bibinfo{journal}{\emph{Machine learning}} \bibinfo{volume}{2},
  \bibinfo{number}{4} (\bibinfo{year}{1988}), \bibinfo{pages}{319}.
\newblock
\showISSN{0885-6125}
\urldef\tempurl%
\url{https://doi.org/10.1023/A:1022821128753}
\showDOI{\tempurl}


\bibitem[Arefin et~al\mbox{.}(2023)]%
        {RN76}
\bibfield{author}{\bibinfo{person}{Mohammad~Rifat Arefin},
  \bibinfo{person}{Suraj Shetiya}, \bibinfo{person}{Zili Wang}, {and}
  \bibinfo{person}{Christoph Csallner}.} \bibinfo{year}{2023}\natexlab{}.
\newblock \showarticletitle{Scaling {U}p {T}oward {A}utomated {B}lack-box
  {R}everse {E}ngineering of {C}ontext-{F}ree {G}rammars}.
\newblock \bibinfo{journal}{\emph{arXiv preprint arXiv:2308.06163}}
  (\bibinfo{year}{2023}).
\newblock


\bibitem[Bastani et~al\mbox{.}(2017)]%
        {RN51}
\bibfield{author}{\bibinfo{person}{Osbert Bastani}, \bibinfo{person}{Rahul
  Sharma}, \bibinfo{person}{Alex Aiken}, {and} \bibinfo{person}{Percy Liang}.}
  \bibinfo{year}{2017}\natexlab{}.
\newblock \showarticletitle{Synthesizing program input grammars}. In
  \bibinfo{booktitle}{\emph{Proceedings of the 38th ACM SIGPLAN Conference on
  Programming Language Design and Implementation}} \emph{(\bibinfo{series}{PLDI
  2017})}. \bibinfo{publisher}{Association for Computing Machinery},
  \bibinfo{address}{New York, NY, USA}, \bibinfo{pages}{95--110}.
\newblock
\showISBNx{978-1-4503-4988-8}
\urldef\tempurl%
\url{https://doi.org/10.1145/3062341.3062349}
\showDOI{\tempurl}


\bibitem[Becerra-Bonache et~al\mbox{.}(2006)]%
        {RN20}
\bibfield{author}{\bibinfo{person}{Leonor Becerra-Bonache},
  \bibinfo{person}{Adrian~Horia Dediu}, {and} \bibinfo{person}{Cristina
  Tîrnăucă}.} \bibinfo{year}{2006}\natexlab{}.
\newblock \showarticletitle{Learning {DFA} from {C}orrection and {E}quivalence
  {Q}ueries}. In \bibinfo{booktitle}{\emph{Grammatical inference: Algorithms
  and applications}} \emph{(\bibinfo{series}{Grammatical Inference: Algorithms
  and Applications})}. \bibinfo{publisher}{Springer Berlin Heidelberg},
  \bibinfo{pages}{281--292}.
\newblock
\showISBNx{978-3-540-45265-2}


\bibitem[Björklund et~al\mbox{.}(2021)]%
        {RN37}
\bibfield{author}{\bibinfo{person}{Henrik Björklund}, \bibinfo{person}{Johanna
  Björklund}, {and} \bibinfo{person}{Wim Martens}.}
  \bibinfo{year}{2021}\natexlab{}.
\newblock \bibinfo{booktitle}{\emph{Learning algorithms}}.
\newblock \bibinfo{publisher}{EMS Press, 2021}, \bibinfo{pages}{375--409}.
\newblock
\urldef\tempurl%
\url{https://doi.org/10.4171/AUTOMATA-1/11}
\showDOI{\tempurl}


\bibitem[Bollig et~al\mbox{.}(2009)]%
        {RN12}
\bibfield{author}{\bibinfo{person}{Benedikt Bollig}, \bibinfo{person}{Peter
  Habermehl}, \bibinfo{person}{Carsten Kern}, {and} \bibinfo{person}{Martin
  Leucker}.} \bibinfo{year}{2009}\natexlab{}.
\newblock \showarticletitle{Angluin-{S}tyle {L}earning of {NFA}}. In
  \bibinfo{booktitle}{\emph{Proceedings of the 21st International Joint
  Conference on Artificial Intelligence}} (Pasadena, California, USA)
  \emph{(\bibinfo{series}{IJCAI'09})}. \bibinfo{publisher}{Morgan Kaufmann
  Publishers Inc.}, \bibinfo{address}{San Francisco, CA, USA},
  \bibinfo{pages}{1004–1009}.
\newblock


\bibitem[Bray(2017)]%
        {RN85}
\bibfield{author}{\bibinfo{person}{Tim Bray}.} \bibinfo{year}{2017}\natexlab{}.
\newblock \bibinfo{title}{{The JavaScript Object Notation (JSON) Data
  Interchange Format}}.
\newblock \bibinfo{howpublished}{RFC 8259}.
\newblock
\urldef\tempurl%
\url{https://doi.org/10.17487/RFC8259}
\showDOI{\tempurl}


\bibitem[Clark(2010)]%
        {RN49}
\bibfield{author}{\bibinfo{person}{Alexander Clark}.}
  \bibinfo{year}{2010}\natexlab{}.
\newblock \showarticletitle{Distributional {L}earning of {S}ome
  {C}ontext-{F}ree {L}anguages with a {M}inimally {A}dequate {T}eacher}. In
  \bibinfo{booktitle}{\emph{Grammatical Inference: Theoretical Results and
  Applications}} \emph{(\bibinfo{series}{Lecture Notes in Computer Science})},
  \bibfield{editor}{\bibinfo{person}{José~M. Sempere} {and}
  \bibinfo{person}{Pedro García}} (Eds.). \bibinfo{publisher}{Springer},
  \bibinfo{address}{Berlin, Heidelberg}, \bibinfo{pages}{24--37}.
\newblock
\showISBNx{978-3-642-15488-1}
\urldef\tempurl%
\url{https://doi.org/10.1007/978-3-642-15488-1_4}
\showDOI{\tempurl}


\bibitem[Denis et~al\mbox{.}(2004)]%
        {RN43}
\bibfield{author}{\bibinfo{person}{François Denis}, \bibinfo{person}{Aurélien
  Lemay}, {and} \bibinfo{person}{Alain Terlutte}.}
  \bibinfo{year}{2004}\natexlab{}.
\newblock \showarticletitle{Learning regular languages using {RFSA}s}.
\newblock \bibinfo{journal}{\emph{Theoretical Computer Science}}
  \bibinfo{volume}{313}, \bibinfo{number}{2} (\bibinfo{year}{2004}),
  \bibinfo{pages}{267--294}.
\newblock
\showISSN{0304-3975}
\urldef\tempurl%
\url{https://doi.org/10.1016/j.tcs.2003.11.008}
\showDOI{\tempurl}


\bibitem[Elsworth and Parkes(1990)]%
        {RN74}
\bibfield{author}{\bibinfo{person}{WF Elsworth} {and} \bibinfo{person}{MBA
  Parkes}.} \bibinfo{year}{1990}\natexlab{}.
\newblock \showarticletitle{Automated compiler construction based on top-down
  syntax analysis and attribute evaluation}.
\newblock \bibinfo{journal}{\emph{ACM SIGPLAN Notices}} \bibinfo{volume}{25},
  \bibinfo{number}{8} (\bibinfo{year}{1990}), \bibinfo{pages}{37--42}.
\newblock
\showISSN{0362-1340}


\bibitem[Gopinath et~al\mbox{.}(2020)]%
        {RN89}
\bibfield{author}{\bibinfo{person}{Rahul Gopinath}, \bibinfo{person}{Bachir
  Bendrissou}, \bibinfo{person}{Björn Mathis}, {and} \bibinfo{person}{Andreas
  Zeller}.} \bibinfo{year}{2020}\natexlab{}.
\newblock \bibinfo{title}{Fuzzing with {F}ast {F}ailure {F}eedback}.
\newblock
\showeprint[arxiv]{2012.13516}~[cs.SE]


\bibitem[Holler et~al\mbox{.}(2012)]%
        {RN86}
\bibfield{author}{\bibinfo{person}{Christian Holler}, \bibinfo{person}{Kim
  Herzig}, {and} \bibinfo{person}{Andreas Zeller}.}
  \bibinfo{year}{2012}\natexlab{}.
\newblock \showarticletitle{Fuzzing with {C}ode {F}ragments}. In
  \bibinfo{booktitle}{\emph{Proceedings of the 21st USENIX Conference on
  Security Symposium}} (Bellevue, WA) \emph{(\bibinfo{series}{Security'12})}.
  \bibinfo{publisher}{USENIX Association}, \bibinfo{address}{USA},
  \bibinfo{pages}{38}.
\newblock


\bibitem[Hopcroft and Ullman(1969)]%
        {RN75}
\bibfield{author}{\bibinfo{person}{John~E. Hopcroft} {and}
  \bibinfo{person}{Jeffrey~D. Ullman}.} \bibinfo{year}{1969}\natexlab{}.
\newblock \bibinfo{booktitle}{\emph{Formal languages and their relation to
  automata}}.
\newblock \bibinfo{publisher}{Addison-Wesley Longman Publishing Co., Inc.}
\newblock


\bibitem[Isberner et~al\mbox{.}(2014)]%
        {RN38}
\bibfield{author}{\bibinfo{person}{Malte Isberner}, \bibinfo{person}{Falk
  Howar}, {and} \bibinfo{person}{Bernhard Steffen}.}
  \bibinfo{year}{2014}\natexlab{}.
\newblock \showarticletitle{The {TTT} {A}lgorithm: {A} {R}edundancy-{F}ree
  {A}pproach to {A}ctive {A}utomata {L}earning}. In
  \bibinfo{booktitle}{\emph{Runtime verification}},
  \bibfield{editor}{\bibinfo{person}{Borzoo Bonakdarpour} {and}
  \bibinfo{person}{Scott~A. Smolka}} (Eds.). \bibinfo{publisher}{Springer
  International Publishing}, \bibinfo{address}{Cham},
  \bibinfo{pages}{307--322}.
\newblock
\showISBNx{978-3-319-11164-3}


\bibitem[Kearns and Vazirani(1994)]%
        {RN71}
\bibfield{author}{\bibinfo{person}{Michael~J. Kearns} {and}
  \bibinfo{person}{Umesh Vazirani}.} \bibinfo{year}{1994}\natexlab{}.
\newblock \bibinfo{booktitle}{\emph{An Introduction to Computational Learning
  Theory}}.
\newblock \bibinfo{publisher}{The MIT Press}.
\newblock
\showISBNx{9780262276863}
\urldef\tempurl%
\url{https://doi.org/10.7551/mitpress/3897.001.0001}
\showDOI{\tempurl}


\bibitem[Kulkarni et~al\mbox{.}(2021)]%
        {RN52}
\bibfield{author}{\bibinfo{person}{Neil Kulkarni}, \bibinfo{person}{Caroline
  Lemieux}, {and} \bibinfo{person}{Koushik Sen}.}
  \bibinfo{year}{2021}\natexlab{}.
\newblock \showarticletitle{Learning {H}ighly {R}ecursive {I}nput {G}rammars}.
  In \bibinfo{booktitle}{\emph{2021 36th IEEE/ACM International Conference on
  Automated Software Engineering (ASE)}}. \bibinfo{pages}{456--467}.
\newblock
\urldef\tempurl%
\url{https://doi.org/10.1109/ASE51524.2021.9678879}
\showDOI{\tempurl}


\bibitem[Lang et~al\mbox{.}(1998)]%
        {RN41}
\bibfield{author}{\bibinfo{person}{Kevin~J. Lang}, \bibinfo{person}{Barak~A.
  Pearlmutter}, {and} \bibinfo{person}{Rodney~A. Price}.}
  \bibinfo{year}{1998}\natexlab{}.
\newblock \showarticletitle{Results of the {A}bbadingo one {DFA} learning
  competition and a new evidence-driven state merging algorithm}. In
  \bibinfo{booktitle}{\emph{Grammatical inference}}
  \emph{(\bibinfo{series}{Lecture Notes in Computer Science})},
  \bibfield{editor}{\bibinfo{person}{Vasant Honavar} {and}
  \bibinfo{person}{Giora Slutzki}} (Eds.). \bibinfo{publisher}{Springer},
  \bibinfo{address}{Berlin, Heidelberg}, \bibinfo{pages}{1--12}.
\newblock
\showISBNx{978-3-540-68707-8}
\urldef\tempurl%
\url{https://doi.org/10.1007/BFb0054059}
\showDOI{\tempurl}


\bibitem[Madhavan et~al\mbox{.}(2015)]%
        {RN78}
\bibfield{author}{\bibinfo{person}{Ravichandhran Madhavan},
  \bibinfo{person}{Mika\"{e}l Mayer}, \bibinfo{person}{Sumit Gulwani}, {and}
  \bibinfo{person}{Viktor Kuncak}.} \bibinfo{year}{2015}\natexlab{}.
\newblock \showarticletitle{Automating Grammar Comparison}. In
  \bibinfo{booktitle}{\emph{Proceedings of the 2015 ACM SIGPLAN International
  Conference on Object-Oriented Programming, Systems, Languages, and
  Applications}} (Pittsburgh, PA, USA) \emph{(\bibinfo{series}{OOPSLA 2015})}.
  \bibinfo{publisher}{Association for Computing Machinery},
  \bibinfo{address}{New York, NY, USA}, \bibinfo{pages}{183–200}.
\newblock
\showISBNx{9781450336895}
\urldef\tempurl%
\url{https://doi.org/10.1145/2814270.2814304}
\showDOI{\tempurl}


\bibitem[Martins et~al\mbox{.}(2005)]%
        {RN10}
\bibfield{author}{\bibinfo{person}{André~L. Martins},
  \bibinfo{person}{H.~Sofia Pinto}, {and} \bibinfo{person}{Arlindo~L.
  Oliveira}.} \bibinfo{year}{2005}\natexlab{}.
\newblock \showarticletitle{Using a {M}ore {P}owerful {T}eacher to {R}educe the
  {N}umber of {Q}ueries of the {L}* {A}lgorithm in {P}ractical {A}pplications}.
  In \bibinfo{booktitle}{\emph{Progress in Artificial Intelligence}},
  \bibfield{editor}{\bibinfo{person}{Carlos Bento}, \bibinfo{person}{Amílcar
  Cardoso}, {and} \bibinfo{person}{Gaël Dias}} (Eds.).
  \bibinfo{publisher}{Springer Berlin Heidelberg}, \bibinfo{pages}{325--336}.
\newblock
\showISBNx{978-3-540-31646-6}


\bibitem[Mitrana and Tîrnăucă(2011)]%
        {RN68}
\bibfield{author}{\bibinfo{person}{Victor Mitrana} {and}
  \bibinfo{person}{Cristina Tîrnăucă}.} \bibinfo{year}{2011}\natexlab{}.
\newblock \showarticletitle{New bounds for the query complexity of an algorithm
  that learns {DFA}s with correction and equivalence queries}.
\newblock \bibinfo{journal}{\emph{Acta Informatica}} \bibinfo{volume}{48},
  \bibinfo{number}{1} (\bibinfo{year}{2011}), \bibinfo{pages}{43--50}.
\newblock
\showISSN{1432-0525}
\urldef\tempurl%
\url{https://doi.org/10.1007/s00236-010-0130-7}
\showDOI{\tempurl}


\bibitem[Muškardin et~al\mbox{.}(2022)]%
        {RN54}
\bibfield{author}{\bibinfo{person}{Edi Muškardin},
  \bibinfo{person}{Bernhard~K. Aichernig}, \bibinfo{person}{Ingo Pill},
  \bibinfo{person}{Andrea Pferscher}, {and} \bibinfo{person}{Martin Tappler}.}
  \bibinfo{year}{2022}\natexlab{}.
\newblock \showarticletitle{{AAL}py: an active automata learning library}.
\newblock \bibinfo{journal}{\emph{Innovations in Systems and Software
  Engineering}} \bibinfo{volume}{18}, \bibinfo{number}{3}
  (\bibinfo{year}{2022}), \bibinfo{pages}{417--426}.
\newblock
\showISSN{1614-5054}
\urldef\tempurl%
\url{https://doi.org/10.1007/s11334-022-00449-3}
\showDOI{\tempurl}


\bibitem[Oncina and Garcia(1992)]%
        {RN40}
\bibfield{author}{\bibinfo{person}{José Oncina} {and} \bibinfo{person}{Pedro
  Garcia}.} \bibinfo{year}{1992}\natexlab{}.
\newblock \bibinfo{booktitle}{\emph{Identifying regular languages in polynomial
  time}}.
\newblock \bibinfo{publisher}{World Scientific}, \bibinfo{pages}{99--108}.
\newblock


\bibitem[Rabin and Scott(1959)]%
        {RN72}
\bibfield{author}{\bibinfo{person}{M.~O. Rabin} {and} \bibinfo{person}{D.
  Scott}.} \bibinfo{year}{1959}\natexlab{}.
\newblock \showarticletitle{Finite {A}utomata and {T}heir {D}ecision
  {P}roblems}.
\newblock \bibinfo{journal}{\emph{IBM Journal of Research and Development}}
  \bibinfo{volume}{3}, \bibinfo{number}{2} (\bibinfo{year}{1959}),
  \bibinfo{pages}{114--125}.
\newblock
\showISSN{0018-8646}
\urldef\tempurl%
\url{https://doi.org/10.1147/rd.32.0114}
\showDOI{\tempurl}


\bibitem[Rijsbergen(1979)]%
        {RN90}
\bibfield{author}{\bibinfo{person}{C.~J.~Van Rijsbergen}.}
  \bibinfo{year}{1979}\natexlab{}.
\newblock \bibinfo{booktitle}{\emph{Information Retrieval}
  (\bibinfo{edition}{2nd} ed.)}.
\newblock \bibinfo{publisher}{Butterworth-Heinemann}, \bibinfo{address}{USA}.
\newblock
\showISBNx{0408709294}


\bibitem[Rosenkrantz and Stearns(1970)]%
        {RN73}
\bibfield{author}{\bibinfo{person}{D.~J. Rosenkrantz} {and}
  \bibinfo{person}{R.~E. Stearns}.} \bibinfo{year}{1970}\natexlab{}.
\newblock \showarticletitle{Properties of deterministic top-down grammars}.
\newblock \bibinfo{journal}{\emph{Information and Control}}
  \bibinfo{volume}{17}, \bibinfo{number}{3} (\bibinfo{year}{1970}),
  \bibinfo{pages}{226--256}.
\newblock
\showISSN{0019-9958}
\urldef\tempurl%
\url{https://doi.org/10.1016/S0019-9958(70)90446-8}
\showDOI{\tempurl}


\bibitem[Sakakibara(1990)]%
        {RN48}
\bibfield{author}{\bibinfo{person}{Yasubumi Sakakibara}.}
  \bibinfo{year}{1990}\natexlab{}.
\newblock \showarticletitle{Learning context-free grammars from structural data
  in polynomial time}.
\newblock \bibinfo{journal}{\emph{Theoretical Computer Science}}
  \bibinfo{volume}{76}, \bibinfo{number}{2} (\bibinfo{year}{1990}),
  \bibinfo{pages}{223--242}.
\newblock
\showISSN{0304-3975}
\urldef\tempurl%
\url{https://doi.org/10.1016/0304-3975(90)90017-C}
\showDOI{\tempurl}


\bibitem[Schröder and Cito(2022)]%
        {RN88}
\bibfield{author}{\bibinfo{person}{M. Schröder} {and} \bibinfo{person}{J.
  Cito}.} \bibinfo{year}{2022}\natexlab{}.
\newblock \showarticletitle{Grammars for {F}ree: {T}oward {G}rammar {I}nference
  for {A}d {H}oc {P}arsers}. In \bibinfo{booktitle}{\emph{2022 IEEE/ACM 44th
  International Conference on Software Engineering: New Ideas and Emerging
  Results (ICSE-NIER)}}. \bibinfo{pages}{41--45}.
\newblock
\urldef\tempurl%
\url{https://doi.org/10.1145/3510455.3512787}
\showDOI{\tempurl}


\bibitem[Valiant(1984)]%
        {RN42}
\bibfield{author}{\bibinfo{person}{L.~G. Valiant}.}
  \bibinfo{year}{1984}\natexlab{}.
\newblock \showarticletitle{{A} theory of the learnable}.
\newblock \bibinfo{journal}{\emph{Commun. ACM}} \bibinfo{volume}{27},
  \bibinfo{number}{11} (\bibinfo{year}{1984}), \bibinfo{pages}{1134--1142}.
\newblock
\showISSN{0001-0782}
\urldef\tempurl%
\url{https://doi.org/10.1145/1968.1972}
\showDOI{\tempurl}


\bibitem[Yang et~al\mbox{.}(2011)]%
        {RN87}
\bibfield{author}{\bibinfo{person}{Xuejun Yang}, \bibinfo{person}{Yang Chen},
  \bibinfo{person}{Eric Eide}, {and} \bibinfo{person}{John Regehr}.}
  \bibinfo{year}{2011}\natexlab{}.
\newblock \showarticletitle{Finding and {U}nderstanding {B}ugs in {C}
  {C}ompilers}.
\newblock \bibinfo{journal}{\emph{SIGPLAN Not.}} \bibinfo{volume}{46},
  \bibinfo{number}{6} (\bibinfo{date}{jun} \bibinfo{year}{2011}),
  \bibinfo{pages}{283–294}.
\newblock
\showISSN{0362-1340}
\urldef\tempurl%
\url{https://doi.org/10.1145/1993316.1993532}
\showDOI{\tempurl}


\bibitem[Yokomori(1988)]%
        {RN63}
\bibfield{author}{\bibinfo{person}{Takashi Yokomori}.}
  \bibinfo{year}{1988}\natexlab{}.
\newblock \bibinfo{booktitle}{\emph{Learning simple languages in polynomial
  time}}.
\newblock \bibinfo{publisher}{Institute for New Generation Computer
  Technology}.
\newblock


\end{thebibliography}

\end{document}